\keywords{Higher-order asynchronous programs, Higher-order recursion schemes \and Decidability}
  \lstdefinestyle{tinyc}{
    basicstyle=\scriptsize\ttfamily,
    keywordstyle=\color{blue}
  }
  \lstdefinestyle{normalc}{
    basicstyle=\ttfamily,
    numbers=none,
    keywordstyle=\color{blue}
  }
  \lstdefinestyle{inlinec}{
    basicstyle=\ttfamily
  }
\newtheorem*{remark}{Remark}
\Crefname{thm}{Theorem}{Theorems}
\Crefname{claim}{Claim}{Claims}
\Crefname{defn}{Definition}{Definitions}
\Crefname{lem}{Lemma}{Lemmas}
\Crefname{obs}{Observation}{Observations}
\Crefname{prop}{Proposition}{Propositions}
\Crefname{cor}{Corollary}{Corollaries}
\Crefname{prob}{Problem}{Problems}
\Crefname{prem}{Remark}{Remarks}
\def\@envspa{\hspace{0.3em}}
\def\@sa{\hspace{-0.2em}}
\def\@sb{\hspace{0.5em}}
\def\@sc{\hspace{-0.1em}}
\def\set#1{{\{ #1 \}}}
\def\multi#1{{\llbracket #1 \rrbracket}}
\def\nats{{\mathbb{N}}}
\def\parikh{{\mathsf{Parikh}}}
\def\pre{\mathit{pre}}
\def\mmap{\mathbf{m}}
\def\card#1{\lvert {#1} \rvert}
\def\TWOEXPSPACE{\mathsf{2EXPSPACE}}
\newcommand{\proj}{\mathrm{Proj}}
\newcommand{\dcl}[1]{\mathord{\downarrow}{#1}}
\newcommand{\multiset}[1]{{\mathbb{M}[ #1 ]}}
\def\ap{\mathfrak{P}}
\def\prod{\mathcal{P}}
\def\rs{\mathscr{S}}
\def\PDA{\mathsf{PDA}}
\def\hors{\mathsf{HORS}}
\def\nonterm{\mathcal{N}}
\def\rewrite{\mathcal{R}}
\def \appterms{\tilde{\Theta}}
\DeclareMathOperator{\ord}{ord}
\newcommand{\Var}{\mathrm{Var}}
\newcommand{\otype}{\mathtt{o}} 
\newcommand{\ltr}[1]{\mathtt{#1}}
\def \tree{\mathcal{T}}
\def \paths{\mathsf{Paths}}
\def \e{\mathsf{e}}
\def \br{\ltr{br}}
\def \prefix{\leq_{\mathsf{pre}}}
\def \sprefix{<_{\mathsf{pre}}}
\DeclareMathOperator{\dom}{dom}
\DeclareMathOperator{\arity}{arity}
\DeclareMathOperator{\lub}{lub}
\def \langof{\mathcal{L}}
\def \langpath{\mathcal{L}_\mathsf{p}}
\def \langword{\mathcal{L}_\mathsf{w}}
\def \langtree{\mathcal{L}_\mathsf{t}}
\def \AP{\mathsf{AP}}
\def \api{\mathfrak{P}_\mathsf{i}}
\def \apdown{\dcl{\mathfrak{P}}}
\def \Sigmai{\Sigma_{\mathsf{i}}}
\def \langhors{\mathcal{H}}
\def \subword{\sqsubseteq}
\newcommand{\fC}{\mathfrak{C}} 
\newcommand{\Runs}[1]{\mathsf{Runs}(#1)}
\newcommand{\Preruns}[1]{\mathsf{Preruns}(#1)}
\def \leqruns{\trianglelefteq}
\def\nmap{\mathbf{n}}
\newcommand{\cC}{\mathcal{C}}
\newcommand{\cA}{\mathcal{A}}
\newcommand{\cT}{\mathcal{T}}
\newcommand{\calL}{\mathcal{L}}
\newcommand{\rev}[1]{#1^{\mathsf{rev}}}
\DeclareDocumentCommand{\langof}{O{} m}{%
  \mathsf{L}_{#1}(#2)%
  }
\DeclareDocumentCommand{\autstep}{O{}}{%
        \xrightarrow{#1}%
        }
\DeclareDocumentCommand{\autsteps}{O{}}{%
        \xRightarrow{#1}%
      }
\DeclareDocumentCommand{\autsteph}{O{} m}{%
        \xrightarrow{#1}_{#2}%
        }
\DeclareDocumentCommand{\autstepsh}{O{} m}{%
        \xRightarrow{#1}_{#2}%
      }
\def \process{\mathscr{L}}
\newcommand{\impl}[2]{``\ref{#1}$\Rightarrow$\ref{#2}''}
\mathchardef\mhyphen="2D
\DeclareDocumentCommand{\EXPTIME}{O{}}{%
  \ifthenelse{\equal{#1}{}}{%
    \mathsf{EXPTIME}%
  }{%
    {#1}\mhyphen\mathsf{EXPTIME}%
  }%
}
\begin{document}

\title[Decidability for Asynchronous Shared-Memory Programs]{General
Decidability Results\texorpdfstring{\\}{ }
for Asynchronous Shared-Memory Programs:\texorpdfstring{\\}{ }
 Higher-Order and Beyond}
\titlecomment{An abridged version of this paper appeared in TACAS 2021; missing proofs have been added to create this version.}

\author[R.~Majumdar]{Rupak Majumdar}  %
\address{Max Planck Institute for Software Systems (MPI-SWS), 
Paul-Ehrlich-Stra{\ss}e, Building G26, Kaiserslautern, Germany 67663.} %
\email{\{\texttt{rupak},\texttt{thinniyam},\texttt{georg}\}\texttt{@mpi-sws.org}}  %

\author[R.~S.~Thinniyam]{Ramanathan S. Thinniyam}  %

\author[G.~Zetzsche]{Georg Zetzsche}  %

\begin{abstract}
  \noindent The model of asynchronous programming arises in many contexts, from
  low-level systems software to high-level web programming.
  We take a language-theoretic perspective and show general decidability and undecidability results for asynchronous programs
  that capture all known results as well as show decidability of new and important classes.
  As a main consequence, we show decidability of safety, termination and boundedness verification for \emph{higher-order} 
  asynchronous programs---such as OCaml programs using Lwt---and undecidability of liveness verification already for order-2 asynchronous programs. 
  We show that under mild assumptions, surprisingly, safety and termination verification of asynchronous programs with handlers
  from a language class are decidable \emph{if{}f} emptiness is decidable for the underlying language class.
  Moreover, we show that configuration reachability and liveness (fair termination) verification are
  equivalent, and decidability of these problems implies decidability
  of the well-known ``equal-letters'' problem on languages.
  Our results close the decidability frontier for asynchronous programs.
\end{abstract}

\maketitle
\section{Introduction} %
\label{sec:introduction}

\emph{Asynchronous programming} is a common way to manage concurrent requests in a system.
In this style of programming, rather than waiting for a time-consuming operation to complete,
the programmer can make \emph{asynchronous} procedure calls which are stored in
a \emph{task buffer} pending later execution.
Each asynchronous procedure, or \emph{handler}, is a sequential program.
When run, it can change the \emph{global shared state} of the program, 
make internal synchronous procedure calls, 
and post further instances of handlers to the task buffer. 
A scheduler repeatedly and non-deterministically picks pending handler instances from the task buffer 
and executes their code \emph{atomically} to completion.
Asynchronous programs appear in many domains, such as operating system kernel code, web programming, or
user applications on mobile platforms. 
This style of programming is supported natively or through libraries for most programming environments.
The interleaving of different handlers hides latencies
of long-running operations:
the program can process a different handler while waiting for an external operation to finish.
However, asynchronous scheduling of tasks introduces non-determinism in the system, 
making it difficult to reason about correctness.

An asynchronous program is \emph{finite-data} if 
all program variables range over finite domains.
Finite-data programs are still infinite state transition systems: 
the task buffer can contain an unbounded number of pending instances and 
the sequential machine implementing an individual handler can have unboundedly large state (e.g., 
if the handler is given as a recursive program, the stack can grow unboundedly).
Nevertheless, verification problems for finite-data programs have been shown to be decidable 
for several kinds of handlers~\cite{SenV06,JhalaM07,ChadhaV07,GantyM12}.
Several algorithmic approaches have been studied, which tailor to 
(i)~the kinds of permitted handler programs and (ii)~the properties that are checked.

\paragraph{State of the art} We briefly survey the existing approaches and what is known about the decidability frontier.
The \emph{Parikh approach} applies to (first-order) recursive handler programs. Here, the decision problems for asynchronous programs are reduced to decision problems over Petri nets~\cite{GantyM12}.
The key insight is that since handlers are executed atomically, the order in which a handler posts tasks to the buffer is irrelevant.
Therefore, instead of considering the sequential order of posted tasks along an execution, one can equivalently
consider its Parikh image. %
Thus, when handlers are given as pushdown systems, the behaviors of an asynchronous program can be 
represented by a (polynomial sized) Petri net.
Using the Parikh approach, safety (formulated as reachability of a global state), termination (whether all executions terminate), and boundedness
(whether there is an a priori upper bound on the task buffer) are all decidable for asynchronous programs with recursive handlers, by reduction to corresponding problems on Petri nets~\cite{SenV06,GantyM12}.
Configuration reachability (reachability of a specific global state and task buffer configuration), fair termination
(termination under a fair scheduler), and fair non-starvation (every pending handler instance is eventually executed) 
are also decidable, by separate ad hoc reductions to Petri net reachability~\cite{GantyM12}. 
A ``reverse reduction'' shows that Petri nets can be 
simulated by polynomial-sized asynchronous programs (already with finite-data handlers).

In the \emph{downclosure approach}, one replaces each handler with a finite-data program that is equivalent up to ``losing'' handlers in the task buffer.
Of course, this requires that one can compute equivalent finite-data
programs for a given class of handler programs.
This has been applied to checking safety for recursive handler programs~\cite{AtigBQ2009}.
Finally, a bespoke \emph{rank-based approach} has been applied to checking safety when handlers
can perform restricted higher-order recursion~\cite{ChadhaV07}.

\paragraph{Contribution} Instead of studying individual kinds of handler programs,
we consider asynchronous programs in a general language-theoretic framework.
The class of handler programs is given as a language class $\mathcal{C}$:
An asynchronous program over a language class $\mathcal{C}$ is one where each
handler defines a language from $\mathcal{C}$ over the alphabet of handler names, as well as a transformer over
the global state.
This view leads to general results: we can obtain simple characterizations of which classes of handler programs permit decidability.
For example, we do not need the technical assumptions of computability of equivalent finite-data programs
from the Parikh and the downclosure approach.

Our first result shows that, under a mild language-theoretic assumption, 
safety and termination are decidable if and only if the underlying language class $\mathcal{C}$ has a decidable
emptiness problem.\footnote{
  The ``mild language-theoretic assumption'' is that the class of languages forms an effective full trio:
  it is closed under intersections with regular languages, homomorphisms, and inverse homomorphisms.
  Many language classes studied in formal language theory and verification satisfy these conditions.
}
Similarly, we show that boundedness is decidable iff \emph{finiteness} is decidable for the language class $\mathcal{C}$.
These results are the best possible: decidability of emptiness (resp., finiteness)
is a requirement for safety and termination verification already
for verifying the safety or termination (resp., boundedness) 
of one \emph{sequential} handler call.
As corollaries, we get new decidability results for all these problems for asynchronous programs over \emph{higher-order recursion schemes},
which form the language-theoretic basis for programming in higher-order functional languages such as OCaml~\cite{DBLP:conf/popl/Kobayashi09,DBLP:conf/lics/Ong15},
as well as other language classes (lossy channel languages, Petri net languages, etc.).

Second, we show that configuration reachability, fair termination, and fair starvation are mutually reducible; thus, decidability
of any one of them implies decidability of all of them.
We also show decidability of these problems implies the decidability
of a well-known combinatorial problem on languages:
given a language over the alphabet $\set{\ltr{a},\ltr{b}}$, decide if
it contains a word with an equal number of $\ltr{a}$s and $\ltr{b}$s.
Viewed contrapositively, we conclude that all these decision problems are undecidable already for
asynchronous programs over order-2 pushdown languages, since the equal-letters problem is undecidable for this class.

Together, our results ``close'' the decidability frontier for asynchronous programs, by demonstrating 
reducibilities between decision problems heretofore studied separately and connecting
decision problems on asynchronous programs with decision problems on the underlying language classes of their handlers.

While our algorithms do not assume that downclosures are effectively computable, we use downclosures to prove their correctness.
We show that the safety, termination, and boundedness problems are invariant under taking downclosures of
runs; this corresponds to taking downclosures of the languages of handlers.

The observation that safety, termination, and boundedness depend only on the downclosure 
suggests a possible route to implementation.
If there is an effective procedure to compute the downclosure for class $\mathcal{C}$,
then a direct verification algorithm would replace all handlers by their (regular) downclosures, 
and invoke existing decision procedures for this class.
Thus, we get a direct algorithm based on downclosure constructions for higher-order recursion schemes,
using the string of celebrated recent results on effectively computing the downclosure of \emph{word schemes}~\cite{Zetzsche2015b,HagueKO16,ClementePSW16}.

We find our general decidability result for asynchronous programs to be surprising.
Already for regular languages, the complexity of safety verification jumps from $\NL$
(NFA emptiness) to $\EXPSPACE$ (Petri net coverability): asynchronous programs
are far more expressive than individual handler languages.
It is therefore unexpected that safety and termination 
remain decidable whenever they are decidable for individual handler languages.

\section{Preliminaries} %
\label{sec:preliminaries}

\paragraph{Basic Definitions}

We assume familiarity with basic definitions of automata theory (see,
e.g.,~\cite{DBLP:books/daglib/0016921,DBLP:books/daglib/0086373}).
If $w$ is a word over $\Sigma$ and $\Gamma\subseteq\Sigma$ is a subalphabet, then the \emph{projection of \(w\) onto $\Gamma$}, denoted
$\proj_{\Gamma}(w)$, is the word obtained by erasing from \(w\) each symbol
that does not belong to \(\Gamma\).  For a language $L\subseteq\Sigma^*$, define
\(\proj_{\Gamma}(L)=\set{\proj_{\Gamma}(w)\mid w \in L}\).
The \textit{subword} order $\subword$ on $\Sigma^*$ is
defined as $w \subword w'$ for $w,w' \in \Sigma^*$ if $w$ can be obtained from $w'$
by deleting some letters from $w'$. For example,
$abba \subword bababa$ but $abba \not \subword baaba$. 
The \textit{downclosure} $\dcl{w}$ with respect to the
subword order, of a word $w\in\Sigma^*$, is defined as $\dcl{w} :=
\{w'\in\Sigma^* \mid w' \subword w\}$. The downclosure $\dcl{L}$ of a language
$L\subseteq\Sigma^*$ is given by $\dcl{L}:=\{ w'\in\Sigma^* \mid \exists w \in L \colon w' \subword w\}$.
Recall that the downclosure $\dcl{L}$ of any language $L$ is a regular
language~\cite{haines1969free}.

A \emph{multiset} $\mmap\colon \Sigma\rightarrow\nats$ over $\Sigma$ maps each
symbol of $\Sigma$ to a natural number.
Let $\multiset{\Sigma}$ be the set of all multisets over $\Sigma$.
We treat sets as a special case of multisets 
where each element is mapped onto $0$ or $1$.
For example, we write
$\mmap=\multi{\ltr{a},\ltr{a},\ltr{c}}$ for the multiset
$\mmap\in\multiset{\set{\ltr{a},\ltr{b},\ltr{c},\ltr{d}}}$ with
$\mmap(\ltr{a})=2$,
$\mmap(\ltr{b})=\mmap(\ltr{d})=0$, and $\mmap(\ltr{c})=1$.  We also use the notation $\card{\mmap}=\sum_{\sigma\in\Sigma}\mmap(\sigma)$. The Parikh image $\parikh(w) \in \multiset{\Sigma}$ of a word $w \in \Sigma^*$ is the multiset such that $\parikh(w)(\ltr{a})$ is the number of times $\ltr{a}$ occurs in $w$. 

Given two multisets $\mmap,\mmap'\in\multiset{\Sigma}$ we define the multiset $\mmap\oplus
\mmap'\in\multiset{\Sigma}$ for which, for all
$\ltr{a}\in\Sigma$,
we have $(\mmap\oplus \mmap')(\ltr{a})=\mmap(\ltr{a})+\mmap'(\ltr{a})$.
We also define the natural order
$\preceq$ on $\multiset{\Sigma}$ as follows: $\mmap\preceq\mmap'$ if{}f there
exists $\mmap^{\Delta}\in\multiset{\Sigma}$ such that
$\mmap\oplus\mmap^{\Delta}=\mmap'$. We also define $\mmap' \ominus
\mmap$ for $\mmap \preceq \mmap'$ analogously: for all
$\ltr{a}\in\Sigma$,
we have $(\mmap\ominus \mmap')(\ltr{a})=\mmap(\ltr{a})-\mmap'(
\ltr{a})$.
For $\Gamma\subseteq \Sigma$ we regard $\mmap\in\multiset{\Gamma}$ as a
multiset of $\multiset{\Sigma}$ where undefined values are sent to $0$.

\paragraph{Language Classes and Full Trios} %
\label{sub:full_trios}

A \emph{language class} is a collection of languages, together with
some finite representation. Examples are the regular (e.g. represented by finite automata) or the context-free languages (e.g. represented by pushdown automata ($\PDA$)).
A relatively weak and reasonable assumption on a language class is 
that it is a \emph{full trio}, that is, it is closed 
under each of the following operations:
    taking intersection with a regular language,
    taking homomorphic images, and 
    taking inverse homomorphic images. 
Equivalently, a language class is a full trio iff it is closed under
\textit{rational transductions}~\cite{berstel2013transductions} as we explain below.
  We assume that all full trios $\cC$ considered in this paper are 
  \textit{effective}: Given a language $L$ from $\cC$, a 
        regular language $R$, and a homomorphism $h$, we can compute
        a representation of the languages $L\cap R$, $h(L)$, and $h^{-1}(L)$ in $\cC$.

Many classes of languages studied in formal language theory form effective full trios.
Examples include the regular and the context-free languages~\cite{DBLP:books/daglib/0016921}, 
the indexed languages~\cite{DBLP:journals/jacm/Aho68,DammGoerdt1986}, 
the languages of higher-order pushdown automata~\cite{maslov1974hierarchy}, higher-order
recursion schemes ($\hors$)~\cite{damm1982io,DBLP:conf/lics/HagueMOS08}, 
Petri nets (both coverability and reachability languages)~\cite{Greibach1978,Jantzen1979}, and lossy channel systems (see \cref{sub:gen_safety_and_termination}).
(While $\hors$ are usually viewed as
representing a tree or collection of trees, one can also view
them as representing a word language, as we explain in Section~\ref{sec:higher-order}.)

Informally, a language class defined by non-deterministic
devices with a finite-state control that allows
$\varepsilon$-transitions and imposes no restriction between input
letter and performed configuration changes (such as non-deterministic pushdown automata)
is always a full trio: The three
operations above can be realized by simple modifications of the finite-state control.
The deterministic context-free languages are a class that is
\emph{not} a full trio.

  An \textit{asynchronous transducer} $\cT$ is a tuple $\cT=(
  Q,\Gamma,\Sigma,E,q_0,F)$
  with a finite set of states $Q$, finite output alphabet $\Gamma$,
  finite input 
  alphabet $\Sigma$, a finite set of edges $E \subseteq Q \times 
  \Gamma^*
  \times \Sigma^* \times Q$, initial state $q_0 \in Q$ and set of final 
  states 
  $F \subseteq Q$. We write $p \autstep[v|u] q$ if $(p,v,u,q) \in 
  E$ and the machine reads $u$ in state $p$, outputs $v$ and moves to 
  state $q$. We also write $p \autstep[w|w'] q$ if there are 
  states $q_0,q_1,\dots,q_n$ and words $u_1,u_2, \dots, u_n, v_1,v_2,
  \dots, v_n$ such that $p=q_0$, $q=q_n$, $w'=u_1u_2\cdots u_n$,
  $w=v_1v_2\cdots v_n$ and $q_{i-1} \autstep[v_i|u_i]
  q_{i}$ for all $0 \leq i \leq n$. 

  The \textit{transduction} $T \subseteq \Gamma^* \times 
  \Sigma^*$ generated by the transducer $\cT$ is the set of tuples 
  $(v,u) \in \Gamma^* \times \Sigma^*$ such that $q_0 \autstep[v|u]
  q_f$ for some $q_f \in F$. Given a language $L \subseteq \Sigma^*$,
  we define 
  $TL:=\{ v \in \Gamma^* \; | \; \exists u \in L \colon  (v,u) \in T\}$. 
 A transduction $T\subseteq\Gamma^*\times\Sigma^*$ is \emph{rational}
 if it is generated by some asynchronous transducer. 

A language class $\cC$ is a \textit{full trio} if and only if it is 
  closed 
  under each of the following operations:
  \begin{itemize}
    \item intersection with a regular language,
    \item taking homomorphic images, and 
    \item taking inverse homomorphic images. 
  \end{itemize}

It is well known that these two concepts are equivalent:
\begin{thm}[Berstel~\cite{berstel2013transductions}]
   A language class is closed under rational transductions if and only if it is a full trio. 
\end{thm}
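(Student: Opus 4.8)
The plan is to prove the two implications separately, using a Nivat-style normal form for rational transductions as the one nontrivial device; everything else is bookkeeping.

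For the direction ``closed under rational transductions $\Rightarrow$ full trio'', I would observe that each of the three defining operations is itself a special rational transduction, so closure under all of them is automatic. Concretely, intersection with a regular language $R$ is realized by the identity transduction restricted to $R$, i.e.\ $\{(w,w)\mid w\in R\}$, which is rational because a finite automaton for $R$ can be made to copy its input to its output. A homomorphism $h$ is realized by $\{(h(u),u)\mid u\in\Sigma^*\}$ and an inverse homomorphism $h$ by $\{(u,h(u))\mid u\in\Gamma^*\}$; both are generated by single-state asynchronous transducers. Applying these transductions to a language $L$ yields exactly $L\cap R$, $h(L)$, and $h^{-1}(L)$ under the convention for $TL$ fixed in the excerpt.

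For the converse, the substantive direction, the key is to decompose an \emph{arbitrary} rational transduction into the three basic operations. First I would establish the normal form: a transduction $T\subseteq\Gamma^*\times\Sigma^*$ is rational iff there exist a finite alphabet $\Delta$, a regular language $R\subseteq\Delta^*$, and homomorphisms $g\colon\Delta^*\to\Gamma^*$ and $h\colon\Delta^*\to\Sigma^*$ with $T=\{(g(\pi),h(\pi))\mid\pi\in R\}$. Given an asynchronous transducer $\cT=(Q,\Gamma,\Sigma,E,q_0,F)$, I would take $\Delta:=E$, let $R$ be the (regular) set of edge sequences spelling out accepting runs -- consecutive edges agreeing on the shared state, starting at $q_0$ and ending in $F$ -- and let $g$ and $h$ read off, respectively, the output and input component of each edge. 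Unwinding the definition of $p\autstep[w|w']q$ from the excerpt, a pair $(v,u)$ lies in $T$ exactly when some accepting run $\pi$ satisfies $g(\pi)=v$ and $h(\pi)=u$, which gives the claimed equality.

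With the normal form in hand, the closure proof is a one-line computation:
\[
  TL=\{g(\pi)\mid\pi\in R,\ h(\pi)\in L\}=g\bigl(R\cap h^{-1}(L)\bigr).
\]
If $L$ belongs to a full trio $\cC$, then $h^{-1}(L)\in\cC$ by closure under inverse homomorphism, $R\cap h^{-1}(L)\in\cC$ by closure under intersection with a regular language, and finally $g\bigl(R\cap h^{-1}(L)\bigr)\in\cC$ by closure under homomorphic images; hence $TL\in\cC$. The main obstacle is the normal-form step, since the rest is mechanical. The delicate points there are verifying that the set of edge sequences coding accepting runs is genuinely regular (a routine automaton construction over $\Delta=E$) and that the two homomorphisms recover, letter by letter, the concatenated output and input words of a run in accordance with the multi-step relation $p\autstep[w|w']q$.
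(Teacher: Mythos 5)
Your proof is correct. Note that the paper itself gives no proof of this statement---it is quoted as a known result from Berstel's book---so there is no in-paper argument to compare against; what you have written is the classical proof. The easy direction correctly exhibits each of the three full-trio operations as a rational transduction under the paper's conventions (in particular, respecting that $T\subseteq\Gamma^*\times\Sigma^*$ with input in $\Sigma^*$ and output in $\Gamma^*$, so that $TL$ yields $L\cap R$, $h(L)$, and $h^{-1}(L)$ respectively), and the substantive direction is exactly Nivat's theorem: taking $\Delta=E$, the run language $R$ is regular via the obvious automaton with state set $Q$, and the identity $TL=g\bigl(R\cap h^{-1}(L)\bigr)$ reduces closure under the transduction to the three full-trio closures. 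The only detail worth making explicit is the empty-run case ($(\varepsilon,\varepsilon)\in T$ iff $q_0\in F$, matched by $\varepsilon\in R$ iff $q_0\in F$), which your construction handles without modification.
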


\paragraph{Asynchronous Programs: A Language-Theoretic View}
\label{sub:asyncprograms}

We use a language-theoretic model for asynchronous shared-memory programs.

\begin{defi}
  \label{defn:ap}
Let $\cC$ be an (effective) full trio. 
  An \emph{asynchronous program} ($\AP$) over $\cC$ is a 
  tuple $\ap=(D,\Sigma,(L_c)_{c\in\fC},d_0,\mmap_0)$, where 
$D$ is a finite set of \emph{global states},  $\Sigma$ is an alphabet 
of \emph{handler names},
$(L_c)_{c\in\fC}$ is a family of languages over $\Sigma$ from $\cC$, one for each $c \in \fC$
where $\fC=D \times \Sigma \times D$ is the set of \textit{contexts},
$d_0\in D$ is the \emph{initial state}, and $\mmap_0\in \multiset{\Sigma}$ is a multiset 
of \emph{initial pending handler instances}.

A {\em configuration} $(d, \mmap) \in D\times \multiset{\Sigma}$ of \( \ap\) 
consists of a global state $d$ and a multiset $\mmap$ of pending handler instances.  For a 
configuration $c$, we write $c.d$ and $c.\mmap$ for the global state and the 
multiset in the configuration respectively.  The \emph{initial} configuration 
\(c_0\) of \(\ap\) is given by \(c_0.d=d_0\) and \(c_0.\mmap=\mmap_0\). 
The semantics of $\ap$ is a labeled transition system over
the set of configurations, with the
transition relation  
$\xrightarrow{\sigma} \subseteq (D \times 
\multiset{\Sigma}) \times (D \times 
\multiset{\Sigma}) $
 given by 
\begin{gather*}
(d,\mmap \oplus \multi{\sigma}) 
\xrightarrow{\sigma}
(d',\mmap \oplus \mmap') 
\quad \text{ if{}f }\quad
\exists w \in L_{d \sigma d'} \colon \parikh(w)=\mmap' 
\end{gather*}
Moreover, we write $(d,\mmap)\rightarrow(d',\mmap')$ if there exists a $\sigma\in\Sigma$ with $(d,\mmap)\xrightarrow{\sigma}(d',\mmap')$. We use 
$\rightarrow^*$
to denote the 
reflexive transitive closure of the relation $\rightarrow$. A 
configuration $c$ is said to be \emph{reachable} in $\ap$ if 
$(d_0,\mmap_0) 
\rightarrow^*
c$. 
\end{defi}

Intuitively, the set $\Sigma$ of handler names specifies a finite set of
procedures that can be invoked asynchronously. 
The shared state takes values in $D$.
When a handler is called asynchronously, it gets added to a bag of
pending handler calls (the multiset $\mmap$ in a configuration).
The language $L_{d\sigma d'}$ captures the effect of executing an instance
of $\sigma$ starting from the global state $d$, such that on 
termination, the global state is $d'$.
Each word $w\in L_{d\sigma d'}$ captures a possible sequence of handlers posted during the execution.

Suppose the current configuration is $(d, \mmap)$.
A non-deterministic scheduler picks one of the outstanding handlers
$\sigma \in \mmap$ and executes it.
Executing $\sigma$ corresponds to picking one of the
languages $L_{d\sigma d'}$ and some word $w\in L_{d\sigma d'}$.
Upon execution of $\sigma$, the new configuration has global state $d'$ and the new bag of pending
calls is obtained by taking $\mmap$, removing an instance of $\sigma$ from it,
and adding the Parikh image of $w$ to it.
This reflects the current set of pending handler calls---the old ones
(minus an instance of $\sigma$) together with the new ones added by
executing $\sigma$.
Note that a handler is executed atomically; thus, we atomically update
the global state and the effect of executing the handler.

Let us see some examples of asynchronous programs.
It is convenient to present these examples in a programming language syntax,
and to allow each handler to have \emph{internal actions} that perform
local tests and updates to the global state.
As we describe informally below, and formally in \cref{sec:internal-actions},
when $\cC$ is a full trio, internal actions can be ``compiled away'' by
taking an intersection with a regular language of internal actions and
projecting the internal actions away.
Thus, we use our simpler model throughout.

\begin{figure}[t]
\begin{lstlisting}[style=tinyc,name=cfl]
global var turn = ref 0 and x = ref 0;
let rec s1 () = if * then begin post a; s1(); post b end
let rec s2 () = if * then begin post a; s2(); post b end else post b
let a () = if !turn = 0 then begin turn := 1; x := !x + 1 end else post a
let b () = if !turn = 1 then begin turn := 0; x := !x - 1 end else post b

let s3 () = post s3; post s3

global var t = ref 0;
let c () = if !t = 0 then t := 1 else post c
let d () = if !t = 1 then t := 2 else post d
let f () = if !t = 2 then t := 0 else post f

let cc x = post c; x
let dd x = post d; x
let ff x = post f; x
let id x = x
let h g y = cc (g (dd y)) 
let rec produce g x = if * then produce (h g) (ff x) else g x
let s4 () = produce id  () 
\end{lstlisting}
\caption{Examples of asynchronous programs}
\label{fig:ex}
\end{figure}

\paragraph{Examples} %
\label{sub:formal_model}

Figure~\ref{fig:ex} shows some simple examples of asynchronous programs in
an OCaml-like syntax.
Consider first the asynchronous program in lines 1--5.
The alphabet of handlers is \texttt{s1}, \texttt{s2}, \texttt{a}, and \texttt{b}.
The global states correspond to possible valuations to the global variables
\texttt{turn} and \texttt{x}; assuming \texttt{turn} is a Boolean and \texttt{x} takes 
values in $\mathbb{N}$,
we have that $D = \set{0,1} \times \set{0,1,\omega}$, where $\omega$
abstracts all values other than $\set{0,1}$.
Since $\mathtt{s1}$ and $\texttt{s2}$ do not touch any
variables, for $d,d'\in D$, we have
$L_{d,\mathtt{s1},d} = \set{\mathtt{a}^n \mathtt{b}^n \mid n\geq 0}$,
$L_{d,\mathtt{s2},d} = \set{\mathtt{a}^{n} \mathtt{b}^{n+1} \mid n\geq
  0}$, and $L_{d,\mathtt{s1},d'}=L_{d,\mathtt{s2},d'}=\emptyset$ if
$d'\ne d$.

For the languages corresponding to \texttt{a} and \texttt{b}, we use syntactic sugar
in the form of \emph{internal actions}; these are local tests and updates to
the global state.
In our example, we have 
$L_{(0,0),\mathtt{a},(1,1)} = \set{\varepsilon}$,
$L_{(1,x),\mathtt{a},(1,x)} = \set{\mathtt{a}}$ for all values of $x$, and similarly for \texttt{b}.
The meaning is that, 
starting from a global state $(0, 0)$, executing the handler
will lead to the global state $(1, 1)$ and no handlers will be posted, 
whereas 
starting from a global state in which \texttt{turn} is $1$,
executing the handler
will keep the global state unchanged but post an instance of \texttt{a}.
Note that all the languages are context-free.

Consider an execution of the program from the initial configuration $((0,0), \multi{\mathtt{s1}})$.
The execution of \texttt{s1} puts $n$ \texttt{a}s and $n$ \texttt{b}s into the bag, for some $n\geq 0$.
The global variable \texttt{turn} is used to ensure that the handlers \texttt{a} and \texttt{b} alternately update \texttt{x}.
When \texttt{turn} is $0$, the handler for \texttt{a} increments \texttt{x} and sets \texttt{turn} to $1$, otherwise it re-posts itself for a future execution.
Likewise, when \texttt{turn} is $1$, the handler for \texttt{b} decrements \texttt{x} and sets \texttt{turn} back to $0$, 
otherwise it re-posts itself for a future execution.
As a result, the variable \texttt{x} never grows beyond $1$.
Thus, the program satisfies the \emph{safety} property that no execution sets \texttt{x} to $\omega$.

It is possible that the execution goes on forever: for example, if \texttt{s1}
posts an \texttt{a} and a \texttt{b}, and thereafter only \texttt{b} is chosen by the scheduler.
This is not an ``interesting'' infinite execution as it is not fair to the pending \texttt{a}.
In the case of a fair scheduler, which eventually always picks an
instance of every pending task,
the program terminates: eventually all the \texttt{a}s and \texttt{b}s are consumed when
they are scheduled in alternation.
However, if instead we started with $\multi{\mathtt{s2}}$, the program will not terminate even under a fair
scheduler: the last remaining \texttt{b} will not be paired and will keep executing and re-posting itself
forever.

Now consider the execution of \texttt{s3}.
It has an infinite fair run, where the scheduler picks an instance of \texttt{s3} at each step.
However, the number of pending instances grows without bound.
We shall study the \emph{boundedness problem}, which checks if the bag can become unbounded along some run.
We also study a stronger notion of fair termination, called \emph{fair non-starvation}, which asks that
every \emph{instance} of a posted handler is executed under any fair scheduler.
The execution of \texttt{s3} is indeed fair, but there can be a specific instance of \texttt{s3} that is never picked:
we say \texttt{s3} can \emph{starve} an instance.

The program in lines 9--20 is \emph{higher-order} ($\mathtt{produce}$ and $\mathtt{h}$ take functions as arguments).
The language of $\mathtt{s4}$ is the set 
$\set{\mathtt{c}^n \mathtt{d}^n \mathtt{f}^n \mid n\geq 0}$,
that is, it posts an equal number of $\mathtt{c}$s, $\mathtt{d}$s, and $\mathtt{f}$s.
It is an indexed language;
we shall see (Section~\ref{sec:higher-order}) how this and other higher-order programs can be represented using higher-order recursion
schemes ($\hors$).
Note that the OCaml types of $\mathtt{produce} : (\otype \rightarrow \otype) \rightarrow \otype \rightarrow \otype$ and
$\mathtt{h} : (\otype \rightarrow \otype) \rightarrow \otype \rightarrow \otype$ are higher-order.

The program is similar to the first: the handlers \texttt{c}, 
\texttt{d}, and \texttt{f} 
execute in ``round robin'' fashion using the global state \texttt{t} to find their turns.
Again, we use internal actions to update the global state for
readability. 
We ask the same decision questions as before: does the program ever reach a specific global
state and does the program have an infinite (fair) run?
We shall see later that safety and termination questions remain decidable, whereas fair termination does not.

\section{Decision Problems on Asynchronous Programs} %
\label{sec:properties_of_asynchronous_programs}

We now describe decision problems on runs of asynchronous programs.

\paragraph{Runs, preruns, and downclosures}
A \textit{prerun} of an $\AP$ $\ap=(D,\Sigma,(L_c)_
  {c\in\fC},d_0,\mmap_0)$ is a finite or infinite sequence $\rho=
  (e_0,\nmap_0),\sigma_1,
  (e_1,\nmap_1),\sigma_2,\ldots$ of alternating elements of tuples $
  (e_i,\nmap_i) \in D 
  \times 
  \multiset{\Sigma}$ and symbols $\sigma_i \in \Sigma$. The set of 
  preruns of $\ap$
  will be denoted 
  $\Preruns{\ap}$. Note that if two asynchronous programs $\ap$
        and $\ap'$ have the same $D$ and 
  $\Sigma$, then $\Preruns{\ap}=\Preruns{\ap'}$. The \emph{length},
        denoted $|\rho|$, of a finite prerun 
  $\rho$ is the number of configurations in $\rho$. The $i^{th}$ configuration of 
  a prerun $\rho$ will be denoted $\rho(i)$. 

  We define an order $\leqruns$ on preruns as follows: For preruns 
  $\rho=
  (e_0,\nmap_0),\sigma_1,
  (e_1,\nmap_1),\sigma_2,\ldots$ and $\rho'=(e_0',\nmap_0'),\sigma_1',
  (e_1',\nmap_1'),\sigma_2',\ldots$, we define $\rho \leqruns \rho'$ if
        $|\rho|=|\rho'|$ and
   $e_i=e_i', \sigma_i=\sigma_i'$ and $\nmap_i 
  \preceq \nmap_i'$ for each $i\geq 0$.  
  The \emph{downclosure} $\dcl{R}$ of a set $R$ of preruns of 
  $\ap$ is defined 
  as
  $\dcl{R}=\{ \rho\in\Preruns{\ap} \mid \exists \rho' \in 
  R. \; \rho \leqruns \rho' \}$.

  A \textit{run} of an $\AP$ $\ap=(D,\Sigma,(L_c)_
  {c\in\fC},d_0,\mmap_0)$ is a prerun $\rho=(d_0,\mmap_0),\sigma_1,
  (d_1,\mmap_1),\sigma_2,\ldots$ starting with the 
  initial 
  configuration $(d_0,\mmap_0)$, where for each $i \geq 0$, we have $
  (d_i,\mmap_i) 
  \xrightarrow{\sigma_{i+1}}
        (d_{i+1},\mmap_{i+1})$. 
    The set of 
    runs of $\ap$ is denoted $\Runs{\ap}$. Finally, $\dcl{\Runs{\ap}}$ is the downward closure of $\Runs{\ap}$ with respect to $\leqruns$. 

An infinite run $c_0\xrightarrow{\sigma_0} c_1\xrightarrow{\sigma_1}\ldots$ is called \emph{fair} if
for all $i\geq 0$, if $\sigma\in c_i.\mmap$ then there is some $j \geq
i$ such that $c_j \xrightarrow{\sigma} c_{j+1}$.
That is, whenever an instance of a handler $\sigma$ is posted, some instance of $\sigma$ is executed later.
Fairness does not guarantee that every specific instance of a handler is executed eventually.
We say that an infinite fair run \emph{starves} a handler $\sigma$ if there exists an index $J \geq 0$ such that for each
$j\geq J$, we have (i) $c_j.\mmap(\sigma) \geq 1$  and (ii) whenever $c_j \xrightarrow{\sigma} c_{j+1}$, we have $c_j.\mmap(\sigma) \geq 2$.
In this case, even if the run is fair, a specific instance of $\sigma$ may never be executed.

Now we give the definitions of the various decision problems.

\begin{defi}[Properties of finite runs]
  \hspace{0pt}
  The {\bf Safety (Global state reachability)} problem asks, given
    an asynchronous program $\ap$ and a global state $d_f\in D$,
    is there a reachable configuration \(c\) such that \(c.d=d_f\)?
    If so, $d_f$ is said to be \emph{reachable} (in \(\ap\)) and \emph{unreachable} otherwise. 
  The {\bf Boundedness (of the task buffer)} problem asks, given
    an asynchronous program $\ap$,
    is there an $N\in\mathbb{N}$ such that 
    for every reachable configuration \(c\), we have $\card{c.\mmap} \leq N$?
    If so, the asynchronous program $\ap$ is \emph{bounded}; otherwise it is
    \emph{unbounded}. 
        The {\bf Configuration reachability} problem asks,
    given an asynchronous program $\ap$ and a configuration \(c\), is
    \(c\) reachable?
\label{def:finite-aaruns}
\end{defi}

\begin{defi}[Properties of infinite runs]
All the following problems take as input an asynchronous program $\ap$.
The {\bf Termination} problem asks if all runs of $\ap$ are finite.
The {\bf Fair Non-termination} problem asks if $\ap$ has some \textit{fair} infinite run.
The {\bf Fair Starvation} problem asks if $\ap$ has some fair run that starves some handler.
\label{def:infinite-aaruns}
\end{defi}

Our main result in this section shows that many properties of an asynchronous
program $\ap$
only depend on the downclosure $\dcl{\Runs{\ap}}$ of the set
$\Runs{\ap}$ of runs of the program $\ap$.
The proof is by induction on the length of runs. 
Please see Appendix \ref{appendix:dcpreservation} for details.
For any $\AP$ $\ap=(D,\Sigma,(L_c)_{c\in\fC},d_0,\mmap_0)$, we
define the $\AP$ $\apdown =(D,
\Sigma,
(\dcl{L_c})_{c \in \fC},
 d_0, \mmap_0)$, where $\dcl{L_c}$ is the downclosure of the language
 $L_c$ under the subword order.

\begin{prop}%
  \label[prop]{dcpreservation}
  Let $\ap=(D,\Sigma,(L_c)_{c\in\fC},d_0,\mmap_0)$ be an asynchronous
  program. Then $\dcl{\Runs{\dcl{\ap}}}=\dcl{\Runs{\ap}}$. In
  particular, the following holds.
(1) For every $d\in D$, $\ap$ can reach $d$ if and only if $\dcl{\ap}$ can reach $d$.
(2) $\ap$ is terminating if and only if $\dcl{\ap}$ is terminating.
(3) $\ap$ is bounded if and only if $\dcl{\ap}$ is bounded.
\end{prop}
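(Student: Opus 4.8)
The plan is to prove the central identity $\dcl{\Runs{\apdown}} = \dcl{\Runs{\ap}}$ by two inclusions, and then read off (1)--(3) from a dominating run. First I would record that $\ap$ and $\apdown$ have the same $D$ and $\Sigma$, hence $\Preruns{\ap} = \Preruns{\apdown}$, and that $\leqruns$ is a partial order on this common set, so $\dcl{(\cdot)}$ is a monotone, idempotent closure operator. The inclusion $\dcl{\Runs{\ap}} \subseteq \dcl{\Runs{\apdown}}$ is then immediate: since $L_c \subseteq \dcl{L_c}$ for each context $c \in \fC$, every transition of $\ap$ is a transition of $\apdown$, so $\Runs{\ap} \subseteq \Runs{\apdown}$, and monotonicity finishes it. Everything reduces to the reverse inclusion.

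The substantial step is a simulation lemma asserting $\Runs{\apdown} \subseteq \dcl{\Runs{\ap}}$. Given a run $\rho = (d_0,\nmap_0), \sigma_1, (d_1,\nmap_1), \ldots$ of $\apdown$, I would construct, by induction on position, a run $\rho'$ of $\ap$ with identical global states and labels and with $\nmap_i \preceq \mmap_i$ for all $i$, where $\mmap_i$ denotes the $i$-th bag of $\rho'$. The base case is $\nmap_0 = \mmap_0$. For the inductive step, I decompose the $\apdown$-transition $(d_i,\nmap_i) \xrightarrow{\sigma_{i+1}} (d_{i+1},\nmap_{i+1})$ as $\nmap_i = \hat\nmap \oplus \multi{\sigma_{i+1}}$ and $\nmap_{i+1} = \hat\nmap \oplus \parikh(w)$ with $w \in \dcl{L_{d_i \sigma_{i+1} d_{i+1}}}$, pick a subword-dominating witness $w_0 \in L_{d_i \sigma_{i+1} d_{i+1}}$ with $w \subword w_0$, and fire $\sigma_{i+1}$ in $\ap$ using $w_0$. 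The key point: because the downclosure is taken per context, $w_0$ lies in the same context language, so this $\ap$-transition ends in the same $d_{i+1}$; and since $\nmap_i(\sigma_{i+1}) \geq 1$ and $\nmap_i \preceq \mmap_i$, the bag $\mmap_i$ also contains $\sigma_{i+1}$, so the transition is enabled. Setting $\mmap_{i+1} := (\mmap_i \ominus \multi{\sigma_{i+1}}) \oplus \parikh(w_0)$, I preserve domination because removing a fixed instance and adding are monotone for $\preceq$ and because $w \subword w_0$ gives $\parikh(w) \preceq \parikh(w_0)$. The construction applies verbatim to infinite runs, so $\rho \leqruns \rho'$ with $\rho' \in \Runs{\ap}$; monotonicity and idempotence then give $\dcl{\Runs{\apdown}} \subseteq \dcl{\dcl{\Runs{\ap}}} = \dcl{\Runs{\ap}}$, completing the identity.

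For the three consequences I would exploit that the domination is two-sided: corresponding configurations of $\rho$ and $\rho'$ agree on the global state, and the $\ap$-bags pointwise dominate the $\apdown$-bags. For (1), a finite $\apdown$-run reaching global state $d$ is dominated by a finite $\ap$-run passing through the same $d$, and the converse is immediate from $\Runs{\ap} \subseteq \Runs{\apdown}$. For (2), an infinite run of one program yields an infinite run of the other (equal lengths under $\leqruns$), so non-termination transfers both ways. For (3), if $\apdown$ has reachable bags of size at least $k$ for every $k$, the dominating $\ap$-runs reach bags that are pointwise at least as large, hence of size at least $k$, so $\ap$ is unbounded; the converse once more uses $\Runs{\ap} \subseteq \Runs{\apdown}$.

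The hard part will be the inductive step of the simulation lemma: verifying that substituting each $\dcl{L_c}$-word by a subword-larger word from the \emph{same} context keeps the two bags $\preceq$-ordered across a transition. This rests on two monotonicity facts---that $\parikh$ is monotone from $\subword$ to $\preceq$, and that multiset addition and the removal of a single fixed handler preserve $\preceq$---together with the conceptual observation that a larger pending bag in $\ap$ can never disable a transition that $\apdown$ performs.
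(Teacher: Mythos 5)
Your proposal is correct and follows essentially the same route as the paper's own proof: the easy inclusion from $L_c \subseteq \dcl{L_c}$, plus a positional induction constructing a $\leqruns$-dominating run of $\ap$ for any run of $\dcl{\ap}$ (choosing a $\subword$-larger witness $w_0 \in L_c$ and tracking the slack in the bags), then reading off safety, termination, and boundedness from the two-sided domination. Your bookkeeping $\mmap_{i+1} := (\mmap_i \ominus \multi{\sigma_{i+1}}) \oplus \parikh(w_0)$ is a slightly cleaner packaging of the paper's update $\mmap'_k = \mmap_k \oplus \mmap_\Delta \oplus (\mmap'_{k-1} \ominus \mmap_{k-1})$, and your positional construction replaces the paper's explicit prefix-consistency condition for handling infinite runs, but these are presentational differences only.
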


Intuitively, safety, termination, and boundedness is preserved when the multiset of
pending handler instances is ``lossy'': posted handlers can get lost.
This corresponds to these handlers never being added to the task buffer.
However, if a run demonstrates reachability of a global state, or non-termination,
or unboundedness, in the lossy version, it corresponds also to a run in the original
problem (and conversely).

In contrast, simple examples show that
configuration reachability, fair termination, and fair non-starvation properties are
not preserved under downclosures.

\tikzset{gadget/.style={->,>=stealth,initial text=,minimum size=7pt,auto,on grid,scale=1,inner sep=1pt,node distance=2.8cm}}
\tikzset{every state/.style={minimum size=15pt,inner sep=1pt,fill=black!10,draw=black!70,thick}}

\section{General Decidability Results}\label{general}
In this section, we characterize those full trios $\cC$ for which
particular problems for asynchronous programs over $\cC$ are
decidable.  
Our decision procedures will use the following theorem, summarizing the results from~\cite{GantyM12}, as a
subprocedure.

\begin{thmC}[\cite{GantyM12}]
\label{thm:GantyM}
Safety, boundedness, configuration reachability, termination, fair non-termination,
and fair non-starvation are decidable for asynchronous programs
over regular languages.
\end{thmC}

\subsection{Safety and termination}
\label{sub:gen_safety_and_termination}
Our first main result concerns the problems of safety and termination.
\begin{thm}\label{general:emptiness}
  Let $\cC$ be a full trio. The following are equivalent:
  \begin{enumerate}[(i)]
  \item\label{general:emptiness:safety} Safety is decidable for asynchronous programs over $\cC$.
  \item\label{general:emptiness:termination} Termination is decidable for asynchronous programs over $\cC$.
  \item\label{general:emptiness:emptiness} Emptiness is decidable for $\cC$.
  \end{enumerate}
\end{thm}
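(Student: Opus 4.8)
The plan is to close the cycle by proving the two ``easy'' reductions \impl{general:emptiness:safety}{general:emptiness:emptiness} and \impl{general:emptiness:termination}{general:emptiness:emptiness} together with the two substantial directions \impl{general:emptiness:emptiness}{general:emptiness:safety} and \impl{general:emptiness:emptiness}{general:emptiness:termination}. The easy reductions are gadget constructions. Given $L\in\cC$ over $\Sigma$, for safety I would build an $\AP$ with global states $\{d_0,d_f\}$, a fresh handler $\sigma_0$, context language $L_{d_0\sigma_0 d_f}=L$, all other context languages empty (the empty language lies in every full trio), started from $(d_0,\multi{\sigma_0})$. By the definition of the transition relation, the step firing $\sigma_0$ exists iff some $w\in L$ exists, so $d_f$ is reachable iff $L\neq\emptyset$. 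For termination I would instead let $\sigma_0$ re-post itself: using closure under rational transductions, build $M\in\cC$ such that $M\neq\emptyset$ iff $L\neq\emptyset$ and every word of $M$ contains at least one $\sigma_0$ (prepend a $\sigma_0$ and relabel the rest), set $L_{d_0\sigma_0 d_0}=M$ with all other languages empty. Then $\sigma_0$ can fire forever iff $L\neq\emptyset$, so the program has an infinite run iff $L\neq\emptyset$.

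For the substantial directions I would decide each problem by running two semi-algorithms in parallel, one detecting the positive and one the negative answer, and argue that exactly one halts. The enabling observation is that a single step $(d,\mmap\oplus\multi{\sigma})\xrightarrow{\sigma}(d',\mmap\oplus\mmap')$ is checkable with an emptiness oracle: it holds iff $L_{d\sigma d'}$ contains a word with Parikh image $\mmap'$, and since the set $P_{\mmap'}$ of words with a fixed Parikh image is finite (hence regular), effectiveness of the full trio lets us form $L_{d\sigma d'}\cap P_{\mmap'}\in\cC$ and test emptiness. Hence the set of valid finite preruns of $\ap$ is recursively enumerable.

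For \impl{general:emptiness:emptiness}{general:emptiness:safety}, the positive procedure enumerates all finite preruns ending in a configuration with global state $d_f$ and checks validity step-by-step; if $d_f$ is reachable such a witness is found. The negative procedure enumerates all families $(R_c)_{c\in\fC}$ of regular languages, retains those with $L_c\subseteq R_c$ for all $c$ (verifiable via emptiness of $L_c\cap\overline{R_c}$), runs the regular safety algorithm of \cref{thm:GantyM} on the resulting regular $\AP$, and halts reporting ``unreachable'' if that algorithm does. Soundness holds because $L_c\subseteq R_c$ makes the regular program an over-approximation; completeness holds because the family $(\dcl{L_c})$ is regular, passes the test, and yields exactly $\apdown$, which by \cref{dcpreservation} reaches $d_f$ iff $\ap$ does.

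For \impl{general:emptiness:emptiness}{general:emptiness:termination}, the negative (termination-detecting) procedure is identical, invoking the regular termination algorithm of \cref{thm:GantyM} with completeness from \cref{dcpreservation}. The positive (non-termination-detecting) procedure is the crux: I would exploit that the semantics is monotone---if $(d,\mmap)\xrightarrow{\sigma}(d',\nmap)$ then $(d,\mmap\oplus\delta)\xrightarrow{\sigma}(d',\nmap\oplus\delta)$---so $\ap$ is non-terminating iff it has a finite self-covering run $(d_0,\mmap_0)\rightarrow^*(d,\mmap)\rightarrow^+(d,\mmap'')$ with $\mmap\preceq\mmap''$. The ``if'' direction pumps the covering loop by monotonicity; the ``only if'' direction applies Dickson's lemma to any infinite run, since $D\times\multiset{\Sigma}$ is well-quasi-ordered by the configuration order. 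The procedure enumerates such self-covering preruns and verifies them via the emptiness test. I expect this positive side---locating a finite witness for non-termination that is checkable with only an emptiness oracle, rather than the effective downclosure, which we do not assume computable---to be the main obstacle; the over-approximation arguments and \cref{dcpreservation} do the rest.
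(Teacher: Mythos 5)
Your proposal is correct and follows essentially the same route as the paper: the same gadget reductions from safety/termination to emptiness, and for the converse directions the same pair of dovetailed semi-decision procedures (enumeration of witnessing runs, resp.\ self-covering runs, run against enumeration of regular overapproximations verified via the emptiness oracle and decided by \cref{thm:GantyM}), with termination of the negative procedure guaranteed by \cref{dcpreservation} because $\dcl{\ap}$ is itself a regular overapproximation. Your explicit remark that a single step of $\ap$ can be verified with only an emptiness oracle---by intersecting $L_{d\sigma d'}$ with the finite (hence regular) set of words of a fixed Parikh image---is a detail the paper leaves implicit, but it does not constitute a different approach.
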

\begin{proof}
We begin with \impl{general:emptiness:safety}{general:emptiness:emptiness}.
Let
$K\subseteq\Sigma^*$ be given. We construct 
\[ \ap=(D,\Sigma,(L_c)_{c\in\fC},d_0,\mmap_0)\]
 such that $\mmap_0=\multi{\sigma}$, $D=
\{d_0,d_1\}$, $L_
{d_0,\sigma,d_1}=K$
and $L_c=\emptyset$ for $c \neq (d_0,\sigma,d_1)$. 
We see that $\ap$ can reach
$d_1$ iff $K$ is non-empty. Next we show
\impl{general:emptiness:termination}
{general:emptiness:emptiness}.
Consider the alphabet $\Gamma=(\Sigma\cup\{\varepsilon\})\times\{0,1\}$ and the homomorphisms
$g\colon\Gamma^*\to\Sigma^*$ and $h\colon\Gamma^*\to\{\sigma\}^*$, where for $x\in\Sigma\cup\{\varepsilon\}$, we have
$g((x,i))=x$ for $i\in\{0,1\}$, $h((x,1))=\sigma$, and $h((x,0))=\varepsilon$. If $R\subseteq\Gamma^*$ is the regular set of words in which exactly
one position belongs to the subalphabet $(\Sigma\cup\{\varepsilon\})\times\{1\}$,
then the language $K':=h(g^{-1}(K)\cap R)$ belongs to $\cC$.
Note that $K'$ is $\emptyset$ or $\{\sigma\}$, depending on whether $K$ is empty or not.
We construct $\ap=(D,\{\sigma\},(L_c)_{c\in\fC},d_0,\mmap_0)$ with $D=\{d_0\}$,
$\mmap_0=\multi{\sigma}$, $L_{d_0,\sigma,d_0}=K'$ and all languages
$L_c=\emptyset$ for $c \neq (d_0,\sigma,d_0)$. Then
$\ap$ is terminating iff $K$ is empty.

To prove \impl{general:emptiness:emptiness}{general:emptiness:safety},
we design an algorithm deciding safety assuming decidability of
emptiness. Given asynchronous program $\ap$ and state $d$ as input,
the algorithm consists of two semi-decision
procedures: one
which searches for a run of $\ap$ reaching the state $d$, and the
second which enumerates regular overapproximations $\ap'$ of $\ap$ and
checks the safety of $\ap'$ using \cref{thm:GantyM}. 
Each $\ap'$ consists of a regular language $A_c$
overapproximating $L_c$ for each context $c$ of $\ap$. We use
decidability of emptiness to
check that $L_c \cap (\Sigma^*\setminus A_c)=\emptyset$ to ensure
that $\ap'$ is indeed an overapproximation.

\begin{algorithm}[t]
  \SetKwBlock{Conc}{run concurrently}{end}
  \SetKwBlock{Begin}{begin}{end}
{\small
  \KwIn{Asynchronous program $\ap=(D,\Sigma,(L_c)_{c\in\fC},d_0,\mmap_0)$ over $\cC$, state $d\in D$}
  \Conc{
    \Begin(\tcc*[f]{find a safe overapproximation}){
      \ForEach{tuple $(A_c)_{c\in\fC}$ of regular languages $A_c\subseteq \Sigma^*$}{
        \If{$L_c\cap(\Sigma^*\setminus A_c)=\emptyset$ for each $c\in\fC$}{
          \If{$\ap'=(D,\Sigma,(A_c)_{c\in\fC},d_0,\mmap_0)$ does not reach $d$}{
            \Return{$d$ is not reachable.}
          }
        }
      }
    }
    \Begin(\tcc*[f]{find a run reaching $d$}){
      \ForEach{prerun $\rho$ of $\ap$}{
        \If{$\rho$ is a run that reaches $d$}{
          \Return{$d$ reachable.}
        } 
      }      
    }
  }
}
  \caption{Checking Safety}\label{algsafety}
\end{algorithm}

Algorithm \ref{algsafety} clearly gives a correct answer if it terminates.
Hence,
we only have to argue that it always does terminate. Of course, if $d$
is reachable, the first semi-decision procedure will terminate. 
In the other case, termination is due to the regularity of downclosures: 
if $d$ is not reachable in $\ap$, then \cref{dcpreservation} tells us
that $\dcl{\ap}$ cannot reach $d$ either.
But $\dcl{\ap}$ is an asynchronous program over regular languages; this means there exists a
safe regular overapproximation and the second semi-decision procedure terminates.

To prove \impl{general:emptiness:emptiness}
{general:emptiness:termination}, we adopt a similar method as for
safety.
Algorithm
\ref{algtermination} for termination consists
of two
semi-decision procedures.  
By standard well-quasi-ordering arguments, an infinite run of an asynchronous
program $\ap$ is witnessed by a
finite self-covering run.
The first semi-decision procedure enumerates finite self-covering runs (trying to show
non-termination).
The second procedure enumerates regular 
asynchronous programs $\ap'$ that overapproximate $\ap$.
As before, to check termination of $\ap'$, it
applies the procedure from Theorem~\ref{thm:GantyM}. 
Clearly, the algorithm's answer is always correct. 
Moreover, it gives an answer for every
input. 
If $\ap$ does not terminate, it will find a self-covering sequence.  
If $\ap$ does terminate, then \cref{dcpreservation} tells
us that $\dcl{\ap}$ is a terminating finite-state overapproximation.
This implies that the second procedure will terminate in that case. \qedhere
\end{proof}

\begin{algorithm}[t]
  \SetKwBlock{Conc}{run concurrently}{end}
  \SetKwBlock{Begin}{begin}{end}
{\small
  \KwIn{Asynchronous program $\ap=(D,\Sigma,(L_c)_{c\in\fC},d_0,\mmap_0)$ over $\cC$}
  \Conc{
    \Begin(\tcc*[f]{find a terminating overapproximation}){
      \ForEach{tuple $(A_c)_{c\in\fC}$ of regular languages $A_c\subseteq \Sigma^*$}{
        \If{$L_c\cap (\Sigma^*\setminus A_c)=\emptyset$ for each $c\in\fC$}{
          \If{$\ap'=(D,\Sigma,(A_c)_{c\in\fC},d_0,\mmap_0)$ terminates}{
            \Return{$\ap$ terminates.}
          }
        }
      }
    }
    \Begin(\tcc*[f]{find a self-covering run}){
      \ForEach{prerun $\rho$ of $\ap$}{
        \If{$\rho$ is a self-covering run}{
          \Return{$\ap$ does not terminate.}
        } 
      }      
    }
  }
}
  \caption{Checking Termination}\label{algtermination}
\end{algorithm}

Let us point out a particular example. 
The class $\calL$ of languages of lossy channel systems
is defined like the class of languages of Well-Structured Transition Systems (WSTS) with upward-closed sets
of accepting configurations as in~\cite{DBLP:journals/acta/GeeraertsRB07}, except that we only consider
lossy channel systems~\cite{DBLP:conf/cav/AbdullaBJ98} instead of
arbitrary WSTS. Then $\calL$
forms a full trio with decidable
emptiness. Although downclosures of lossy channel
languages are not effectively computable (an easy consequence of~\cite{mayr2003undecidable}), our algorithm employs
\cref{general:emptiness} to decide safety and termination.

\subsection{Boundedness}
\label{sub:gen_boundedness}
\begin{thm}\label{general:finiteness}
  Let $\cC$ be a full trio. 
The following are equivalent:
  \begin{enumerate}[(i)]
  \item\label{general:finiteness:boundedness} 
  Boundedness is decidable for asynchronous programs over $\cC$.
 \item\label{general:finiteness:finiteness} 
  Finiteness is decidable for $\cC$.
 \end{enumerate}
\end{thm}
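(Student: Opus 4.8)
The plan is to mirror the structure of \cref{general:emptiness}, proving the two implications separately: the direction \impl{general:finiteness:boundedness}{general:finiteness:finiteness} will be a one-shot reduction, and the converse will again run two semi-decision procedures concurrently, with correctness riding on \cref{dcpreservation} and \cref{thm:GantyM}.

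For \impl{general:finiteness:boundedness}{general:finiteness:finiteness}, I would reduce finiteness of a given $K\subseteq\Sigma^*$ from $\cC$ to boundedness, reusing the one-shot program of the safety reduction: take $\ap=(D,\Sigma,(L_c)_{c\in\fC},d_0,\multi{\sigma})$ with $D=\{d_0,d_1\}$, $L_{d_0,\sigma,d_1}=K$, and $L_c=\emptyset$ otherwise. Firing the single pending $\sigma$ via a word $w\in K$ yields $(d_1,\parikh(w))$, from which nothing can fire since all languages out of $d_1$ are empty. Hence the reachable bag sizes are exactly $\{1\}\cup\{\card{\parikh(w)}:w\in K\}=\{1\}\cup\{|w|:w\in K\}$, so $\ap$ is bounded iff the word lengths in $K$ are bounded iff (the alphabet being finite) $K$ is finite. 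A boundedness oracle thus decides finiteness of $K$.

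For the converse \impl{general:finiteness:finiteness}{general:finiteness:boundedness}, I would first observe that in a full trio finiteness decidability already yields emptiness decidability: given $L\subseteq\Sigma^*$, pick a fresh symbol $\$\notin\Sigma$ and the erasing homomorphism $h$ with $h(\$)=\varepsilon$ and $h(a)=a$ for $a\in\Sigma$; then $L':=h^{-1}(L)\cap\Sigma^*\$^*=\{w\$^n:w\in L,\ n\geq 0\}$ lies in $\cC$ and is infinite iff $L\neq\emptyset$, so deciding finiteness of $L'$ decides emptiness of $L$. With emptiness thus in hand, I can reuse the overapproximation machinery of Algorithms~\ref{algsafety} and~\ref{algtermination} almost verbatim. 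One semi-procedure enumerates tuples $(A_c)_{c\in\fC}$ of regular languages, checks $L_c\cap(\Sigma^*\setminus A_c)=\emptyset$ for each $c$ (an emptiness query, legal since $\cC$ is closed under intersection with the regular language $\Sigma^*\setminus A_c$), and then tests boundedness of the regular program $\ap'=(D,\Sigma,(A_c)_{c\in\fC},d_0,\mmap_0)$ via \cref{thm:GantyM}, returning ``bounded'' once a bounded overapproximation is found. The other semi-procedure searches for a witness of unboundedness: by a standard well-quasi-ordering argument on $D\times\multiset{\Sigma}$ (Dickson's lemma plus monotonicity of $\xrightarrow{\sigma}$), $\ap$ is unbounded iff it has a finite run $c_0\rightarrow^* c\rightarrow^* c'$ with $c.d=c'.d$, $c.\mmap\preceq c'.\mmap$, and $c.\mmap\neq c'.\mmap$; pumping this strictly increasing loop sends the bag size to infinity, and conversely unboundedness forces such a strictly covering pair. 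This procedure enumerates preruns and checks whether each is such a strictly increasing self-covering run, where the validity of a single step $(d,\mmap\oplus\multi{\sigma})\xrightarrow{\sigma}(d',\mmap\oplus\mmap')$ is itself an emptiness query $L_{d\sigma d'}\cap\{w:\parikh(w)=\mmap'\}=\emptyset$ against the finite (hence regular) set of words with Parikh image $\mmap'$.

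Finally I would argue termination of the concurrent search. If $\ap$ is unbounded, the second procedure halts on a strictly covering run; if $\ap$ is bounded, then \cref{dcpreservation}(3) gives that $\dcl{\ap}$ is bounded, and since $\dcl{\ap}$ is an asynchronous program over the regular languages $\dcl{L_c}$, it is a valid bounded regular overapproximation that the first procedure eventually enumerates and accepts. I expect the main obstacle to be the opening observation that finiteness yields emptiness in a full trio: without emptiness neither the overapproximation-validity test $L_c\subseteq A_c$ nor the step-validity test above is effective, and the whole reuse of the safety/termination algorithms collapses. The remaining content is a routine adaptation, replacing the termination check by boundedness and strengthening the self-covering condition to a strict increase of the pending multiset.
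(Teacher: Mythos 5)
Your forward direction (boundedness $\Rightarrow$ finiteness) and your opening observation that finiteness decidability yields emptiness decidability in a full trio are both correct and essentially identical to the paper's. The genuine gap is in your converse direction: the claimed characterization ``$\ap$ is unbounded iff it has a finite run $c_0\rightarrow^* c\rightarrow^* c'$ with $c.d=c'.d$, $c.\mmap\preceq c'.\mmap$, $c.\mmap\neq c'.\mmap$'' is \emph{false} for asynchronous programs. The pumping direction is fine, but the converse fails because these transition systems are \emph{infinitely branching}: a single transition can have infinitely many possible effects when its language is infinite, so unboundedness need not come from a pumpable loop. Concretely, take $\Sigma=\{\sigma,\ltr{a}\}$, $D=\{d_0,d_1\}$, $L_{d_0,\sigma,d_1}=\ltr{a}^*$, all other $L_c=\emptyset$, and $\mmap_0=\multi{\sigma}$. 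Every run has length at most one, and the only step changes the global state from $d_0$ to $d_1$, so no strictly increasing self-covering pair exists; yet the program is unbounded, since $(d_1,\parikh(\ltr{a}^n))$ is reachable for every $n$. On this input your second semi-procedure never finds a witness, and your first semi-procedure never halts either (every overapproximation $A_{d_0,\sigma,d_1}\supseteq \ltr{a}^*$ yields an unbounded regular program), so your algorithm fails to terminate. Note the contrast with termination, where the analogous argument does work: non-termination hands you an infinite run, and Dickson's lemma applied \emph{along that run} produces the self-covering witness; for unboundedness you would need K\"onig's lemma to extract an infinite run from arbitrarily large reachable configurations, and that step is exactly what infinite branching destroys.

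The paper's proof is structured precisely to deal with this obstruction, and it is not a pair of semi-decision procedures but a terminating decision procedure. Using the finiteness oracle, it first classifies each context $c\in\fC$ as having finite or infinite $L_c$; it then checks, via the safety algorithm of \cref{general:emptiness} (available because finiteness gives emptiness), whether any infinite-language context is \emph{reachable} -- if so, $\ap$ is unbounded, which is exactly the case your procedure misses. If not, every reachable context has a finite language; the algorithm then computes each such finite $L_c$ explicitly (by enumerating candidate finite sets $A$ and verifying $L_c=A$ with emptiness queries), replaces the unreachable infinite-language contexts by $\emptyset$, and decides boundedness of the resulting finite-language program by \cref{thm:GantyM}. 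Your approach could be repaired by adding, as a second source of unboundedness witnesses, the reachability of a context with infinite $L_c$ (semi-decidable by enumerating runs, with infiniteness decided by the oracle); with that disjunctive witness the correctness argument goes through, but as written the proposal is incorrect.
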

\begin{proof}
Clearly, the construction for \impl{general:emptiness:safety}
{general:emptiness:emptiness} of \cref{general:emptiness} also works for
\impl{general:finiteness:boundedness} {general:finiteness:finiteness}: 
$\ap$ is unbounded iff $K$ is
infinite.

For the converse,
we first note that if finiteness is
decidable for $\cC$ then so is emptiness. Given $L\subseteq\Sigma^*$
from $\cC$, consider the
homomorphism $h\colon(\Sigma\cup\set{\lambda})^*\to\Sigma^*$ with
$h(\ltr{a})=\ltr{a}$ for every $\ltr{a}\in\Sigma$ and $h
(\lambda)=\varepsilon$. Then
$h^{-1}(L)$ belongs to $\cC$ and $h^{-1}(L)$ is finite if and only if $L$ is
empty: in the inverse homomorphism, $\lambda$ can be arbitrarily
inserted in any word. By \cref{general:emptiness}, this implies that
we can also decide safety. As a consequence of considering only full
trios, it is easy to see that the
problem of \emph{context reachability} reduces to safety: a
context $c=(d,\sigma,d')\in\fC$ is
\emph{reachable in $\ap$} if there is a reachable configuration
$(d,\mmap)$ in $\ap$ with $\mmap(\sigma)\ge 1$. This latter condition can be checked by moving from $d$ to a special state via a $\sigma$ transition. 

\begin{algorithm}[t]
  \SetKwBlock{Conc}{run concurrently}{end}
  \SetKwBlock{Begin}{begin}{end}
  \SetKw{Break}{break}
{\small
  \KwIn{Asynchronous program $\ap=(D,\Sigma,(L_c)_{c\in\fC},d_0,\mmap_0)$ over $\cC$}
  $F\leftarrow\emptyset$, $I\leftarrow\emptyset$\;
  \ForEach{context $c=(d,\sigma,d')\in\fC$}{
    \eIf{$L_c$ is infinite}{
      \If(\tcc*[f]{using algorithm for safety}){$c$ is reachable in $\ap$}{
        \Return{$\ap$ is unbounded.}
      }
      $I\leftarrow I\cup\{c\}$
    }{
      $F\leftarrow F\cup\{c\}$
    }
  }
  \ForEach{context $c\in F$}{
    \ForEach(\tcc*[f]{find a finite $A$ with $L_c=A$}){finite set $A\subseteq\Sigma^*$}{
      \If{$L_c\cap (\Sigma^*\setminus A)=\emptyset$ and $L_c\cap\{w\}\ne\emptyset$ for each $w\in A$}{
        $A_c\leftarrow A$\;
        \Break{}\tcc*{end inner \textbf{foreach}}
      }
    }
  }
  \ForEach{context $c\in I$}{
    $A_c\leftarrow\emptyset$
  }
  \eIf{$\ap'=(D,\Sigma,(A_c)_{c\in\fC},d_0,\mmap_0)$ is bounded}{
    \Return{$\ap$ is bounded.}
  }{
    \Return{$\ap$ is unbounded.}
  }
}
\caption{Checking Boundedness}\label{algboundedness}
\end{algorithm}

We now explain Algorithm \ref{algboundedness} for deciding boundedness of a given
aysnchronous program
$\ap=(D,\Sigma,(L_c)_{c\in\fC},d_0,\mmap_0)$. For every context $c$,
we first check if $L_c$ is infinite (feasible by assumption). This
paritions the set of contexts of $\ap$ into sets $I$ and $F$
which
are the contexts for which the corresponding language $L_c$ is
infinite and finite respectively. If any context in $I$ is
reachable, then $\ap$ is unbounded. Otherwise, all the reachable
contexts have a finite language. For every
finite language
$L_c$ for some $c \in F$, we explicitly find all the members of $L_c$.
This is
possible because any finite set $A$ can be checked with $L_c$ for
equality. $L_c\subseteq A$ can be checked by testing whether
$L_c\cap(\Sigma^*\setminus A)=\emptyset$ and
$L_c\cap(\Sigma^*\setminus A)$ effectively belongs to $\cC$. On the
other hand, checking $A\subseteq L_c$ just means checking whether
$L_c\cap\{w\}\ne\emptyset$ for each $w\in A$, which can be done the
same way. We can now construct asynchronous program $\ap'$ which
replaces all languages for contexts in $I$ by $\emptyset$ and replaces
those corresponding to $F$
by the explicit description. Clearly $\ap'$ is bounded iff $\ap$ is
bounded (since no contexts from $I$ are reachable) and the former can
be decided by \cref{thm:GantyM}. \qedhere
\end{proof}

We observe that boundedness is strictly harder than safety or
termination: There are full trios for which emptiness is decidable,
but finiteness is undecidable, such as the languages of reset vector addition
systems~\cite{DBLP:conf/icalp/DufourdFS98}
(see~\cite{thinniyam_et_al:LIPIcs:2019:11613} for a definition of the
language class) and
languages of lossy channel systems.

\subsection{Configuration reachability and liveness properties}
\label{sub:gen_undec}

\Cref{general:emptiness,general:finiteness} completely characterize
for which full trios safety, termination, and boundedness are
decidable. 
We turn to
configuration reachability, fair termination, and fair starvation. 
We suspect that it is unlikely that there is
a simple characterization of those language classes for which the
latter problems are decidable.
However, we show that they are decidable only for a limited range of
infinite-state systems.  
To this end, we prove that decidability of
any of these problems implies (a)~decidability of the others as well,
and also implies (b)~the decidability of a simple combinatorial
problem (called the $Z$-intersection problem) that is known to be undecidable for many expressive classes of languages.  

Let $Z\subseteq\{\ltr{a},\ltr{b}\}^*$ be the language
$Z=\{w\in\{\ltr{a},\ltr{b}\}^*\mid |w|_{\ltr{a}}=|w|_{\ltr{b}}\}$. The
\emph{$Z$-intersection problem} for a language class $\cC$ asks, given
a language $K\subseteq\{\ltr{a},\ltr{b}\}^*$ from $\cC$, whether
$K\cap Z\ne\emptyset$.
Informally, $Z$ is the language of all words with an equal number of $\ltr{a}$s
and $\ltr{b}$s and the $Z$-intersection problem asks if there is a word in $K$ with
an equal number of $\ltr{a}$s and $\ltr{b}$s.

\begin{thm}\label{general:intersection}
  Let $\cC$ be a full trio. The following statements are equivalent:
  \begin{enumerate}[(i)]
  \item\label{general:intersection:reach} Configuration reachability
  is decidable for asynchronous programs over $\cC$.
  \item\label{general:intersection:term} Fair termination is decidable
  for asynchronous programs over $\cC$.
  \item\label{general:intersection:starv} Fair starvation is decidable
  for asynchronous programs over $\cC$.
  \end{enumerate}
  Moreover, if decidability holds, then $Z$-intersection is decidable for $\cC$. 
\end{thm}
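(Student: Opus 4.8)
The plan is to prove the three equivalences (i)~$\Leftrightarrow$~(ii)~$\Leftrightarrow$~(iii) by exhibiting effective, class-preserving reductions between the decision problems, and then to reduce the $Z$-intersection problem to configuration reachability. Since the equivalences make all three problems inter-decidable, decidability of any one of them yields decidability of configuration reachability, and hence of $Z$-intersection. Every reduction below is an $\AP$-to-$\AP$ construction that only intersects handler languages with regular languages, relabels them by homomorphisms, and adds finitely many fresh control states together with finite-language ``bookkeeping'' handlers; by the full-trio closure properties of $\cC$ the resulting handler languages again lie in $\cC$, so the reductions stay inside the class.

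I would do the $Z$-intersection reduction first, as it is the most concrete. Given $K \subseteq \{\ltr{a},\ltr{b}\}^*$ from $\cC$ (we may assume $K \neq \emptyset$, else the answer is trivially ``no''), build $\ap$ over the alphabet $\{\ltr{s},\ltr{a},\ltr{b}\}$ with initial bag $\multi{\ltr{s}}$ and a ``turn'' bit in the global state. Firing $\ltr{s}$ uses the language $K$, so it posts $|w|_{\ltr{a}}$ copies of $\ltr{a}$ and $|w|_{\ltr{b}}$ copies of $\ltr{b}$ for some $w \in K$ and moves to turn $0$. Handlers $\ltr{a}$ and $\ltr{b}$ are given finite (regular) languages implementing strict alternation: at turn $0$, firing $\ltr{a}$ consumes it and flips to turn $1$ (posting nothing), while firing $\ltr{b}$ merely re-posts $\ltr{b}$; the behaviour at turn $1$ is symmetric. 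Then the genuine consumptions must alternate $\ltr{a},\ltr{b},\ltr{a},\ltr{b},\dots$ starting with $\ltr{a}$, so the empty-bag configuration at turn $0$ is reachable if{}f every posted $\ltr{a}$ and $\ltr{b}$ is eventually consumed, which forces $|w|_{\ltr{a}} = |w|_{\ltr{b}}$ for some $w \in K$. Hence configuration reachability of this fixed target decides $K \cap Z \neq \emptyset$.

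For the equivalences, the technical heart is relating configuration reachability to the fairness-based problems. For ``(i) implies (ii)'' I would use a well-quasi-ordering argument: by Dickson's lemma on $\multiset{\Sigma}$, $\ap$ has a fair infinite run iff it has a finite self-covering witness $c_0 \to^* c_1 \to^+ c_2$ with $c_1.d = c_2.d$ and $c_1.\mmap \preceq c_2.\mmap$, where additionally the covering segment fires every pending handler (to enforce fairness). I would detect such a witness by a product $\AP$ whose finite control guesses the cycle state, records the subset $S \subseteq \Sigma$ of handlers fired in the cycle, and, in a committed verification phase, uses fresh ``deletion'' handlers (finite languages consuming a single symbol) to pare $c_2.\mmap$ down to an exact target multiset, thereby turning the coverability condition $c_1.\mmap \preceq c_2.\mmap$ into an exact configuration-reachability query. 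The remaining links ``(ii) implies (iii)'' and ``(iii) implies (i)'' follow the same recipe: a starving fair run is a fair infinite run in which one marked handler is perpetually present but under-served, which I would encode by a fresh marker handler and finite-control bookkeeping, and conversely a fair (possibly starving) run reduces back to a configuration-reachability query via the same self-covering construction.

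The main obstacle is the bookkeeping in the ``(i) implies (ii)'' direction: configuration reachability is an \emph{exact} reachability question, whereas the self-covering condition is a \emph{coverability} (``$\succeq$'') condition, and fairness and starvation add the further requirement that prescribed handlers be served within the repeating segment. Getting this right means (a)~proving that the constrained finite self-covering witness exists if and only if a genuine fair (resp.\ starving) infinite run exists, which needs a careful pumping argument showing the witnessed cycle can be iterated while maintaining fairness, and (b)~implementing the covering-to-reachability conversion with deletion handlers without introducing spurious runs, all while keeping every constructed handler language within $\cC$.
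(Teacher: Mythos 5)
Your $Z$-intersection reduction is correct and essentially identical to the paper's (the paper uses the same turn-bit alternation gadget, with target configuration $(0,\multi{})$), and your witness characterization for fair non-termination --- a reachable $c_1$ and a segment $c_1\rightarrow^+ c_2$ with $c_1.d=c_2.d$, $c_1.\mmap\preceq c_2.\mmap$, in which every handler pending anywhere in the segment is fired --- is the right one (the same-state requirement you impose is genuinely needed for the pumping argument). However, there are two genuine gaps.

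The first gap is in the technical core, exactly at the step you flag as obstacle~(b), and flagging it is not solving it: ``deletion handlers'' cannot turn the covering condition $c_1.\mmap\preceq c_2.\mmap$ into a configuration-reachability query. Configuration reachability is a question about a \emph{fixed} target, whereas $c_1.\mmap$ is an unbounded, run-dependent multiset; once the simulation passes $c_1$, nothing in your construction records it, so deletions at the end can just as happily erase handlers \emph{below} $c_1.\mmap$ as above it. Concretely, a program whose only cycle at $d^*$ goes from $(d^*,\multi{\ltr{a},\ltr{a}})$ to $(d^*,\multi{\ltr{a}})$ by firing $\ltr{a}$ (a strict net loss, so every run is finite) satisfies all the finite-control checks --- same state, $\ltr{a}$ fired, only $\ltr{a}$ pending --- and after deleting the leftover $\ltr{a}$ reaches any fixed target, so your reduction would wrongly report a fair infinite run. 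The missing idea is the paper's duplication trick: in the first phase every posted handler $\sigma$ is posted together with an inert copy $\bar{\sigma}$ (via the homomorphism $\sigma\mapsto\sigma\bar{\sigma}$, which keeps the language in the full trio $\cC$); the barred tokens are untouched in the second phase, so at the cycle's start they are a frozen record of $c_1.\mmap$; each handler fired in the cycle additionally posts a certificate $\hat{\sigma}$; and in a final state one may only cancel \emph{pairs} $\gamma\bar{\gamma}$, single $\gamma$'s, and single $\hat{\gamma}$'s, the fixed target being exactly one $\hat{\gamma}$ per $\gamma$ in the guessed set $\Gamma$. The pairwise cancellation against the frozen copy is what makes the covering check sound, and the $\hat{\gamma}$'s simultaneously certify the fairness conditions (every $\gamma\in\Gamma$ fired, nothing outside $\Gamma$ fired or left pending).

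The second gap is architectural: your three links do not close the cycle of implications. You reduce fair termination to configuration reachability (your \impl{general:intersection:reach}{general:intersection:term}) and fair starvation to fair termination (your \impl{general:intersection:term}{general:intersection:starv}), but your description of the third link --- ``a fair (possibly starving) run reduces back to a configuration-reachability query'' --- is again a reduction of a fairness problem \emph{to} reachability, i.e.\ it only re-establishes that (i) implies (iii), which you already have by composing the first two links. What is missing is any reduction in the opposite direction, from configuration reachability to a fairness problem; without it, decidability of fair termination or fair starvation yields nothing about configuration reachability, so neither the three-way equivalence nor the ``moreover'' clause (which routes $Z$-intersection through configuration reachability) is established when one assumes (ii) or (iii). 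The paper's gadget for this direction is short and worth knowing: add a fresh handler $\ltr{z}$ to the initial bag, an edge out of the target state $d_f$ that consumes $\ltr{z}\sigma_1\cdots\sigma_n$ (the target multiset $\mmap_f$ together with $\ltr{z}$) and posts $\ltr{z}$, and a self-loop consuming and reposting $\ltr{z}$ at the new state; fairness forces $\ltr{z}$ to be fired, and after the switch only $\ltr{z}$ can ever be fired, so a fair infinite run exists if and only if the bag equals $\mmap_f$ exactly when the edge is taken.
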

\begin{proof}
We prove \cref{general:intersection} by providing reductions among the
three problems and showing that $Z$-intersection reduces to
configuration reachability.  We use
diagrams similar to automata to describe asynchronous programs.
Here, circles represent global states of the program and we
draw an edge
\[
  \begin{tikzpicture}[gadget, node distance=1.5cm, baseline={([yshift=-.8ex]current bounding box.center)}]
    \node[state] (d) {$d$};
    \node[state] (d') [right=of d] {$d'$};
    \path (d) edge node {$\sigma|L$} (d');
  \end{tikzpicture}
\]
in case we have $L_{d,\sigma,d'}=L$ in our asynchronous program $\ap$.
Furthermore, we have $L_{d,\sigma,d'}=\emptyset$ whenever there is no
edge that specifies otherwise.  To simplify notation, we draw an edge
$\begin{tikzpicture}[gadget, node distance=1.5cm, baseline={([yshift=-.8ex]current bounding box.center)}]
    \node[state] (d) {$d$};
    \node[state] (d') [right=of d] {$d'$};
    \path (d) edge node {$w|L$} (d');
  \end{tikzpicture}$
in an asynchronous program for a word $w\in\Sigma^*$,
$w=\sigma_1\ldots\sigma_n$ with $\sigma_1,\ldots,\sigma_n\in\Sigma$,
to symbolize a sequence of states
  \begin{center}
    \begin{tikzpicture}[gadget, node distance=2.5cm]
      \node[state] (d) {$d$};
      \node[state] (2) [right=of d] {$2$};
      \node        (dots) [right=of 2] {$\cdots$};
      \node[state] (n) [right=of dots] {$n$};
      \node[state] (d') [right=of n] {$d'$};

      \path (d) edge node {$\sigma_1|\{\varepsilon\}$} (2);
      \path (2) edge node {$\sigma_2|\{\varepsilon\}$} (dots);
      \path (dots) edge node {$\sigma_{n-1}|\{\varepsilon\}$} (n);
      \path (n) edge node {$\sigma_{n}|L$} (d');
    \end{tikzpicture}
  \end{center}
  which removes
  $\multi{\sigma_1,\ldots,\sigma_n}$ from the task buffer and posts
  a multiset of handlers specified by the language $L$.

\noindent \textit{Proof of \impl{general:intersection:term}
{general:intersection:reach}} Given an asynchronous program
$\ap=(D,\Sigma,(L_c)_{c\in\fC},d_0,\mmap_0)$ and a configuration
$(d_f,\mmap_f)\in D\times\multiset{\Sigma}$, we construct
asynchronous program $\ap'$ as follows. Let $\ltr{z}$ be a fresh
letter and let $\mmap_f=\multi{\sigma_1,\ldots,\sigma_n}$. We obtain
$\ap'$ from $\ap$ by adding a new state $d'_f$ and including the
following edges:
\[
    \begin{tikzpicture}[gadget, node distance=3.5cm]
      \node[state] (df) {$d_f$};
      \node[state] (df') [right=of df] {$d'_f$};
      \path (df) edge node {$\ltr{z}\sigma_1\cdots\sigma_n|\{\ltr{z}\}$} (df');
      \path (df') edge [loop right] node {$\ltr{z}|\{\ltr{z}\}$} (df');
    \end{tikzpicture}
\]
Starting from $(d_0,\mmap_0\oplus\multi{\ltr{z}})$, the program
$\ap'$ has a fair infinite run iff $(d_f,\mmap_f)$ is reachable in $\ap$.
The `if' direction is obvious.
Conversely, $\ltr{z}$ has to be executed in any fair run $\rho$ of
$\ap'$
which
implies that $d'_f$ is reached by $\ap'$ in $\rho$. Since only $
\ltr{z}$ can be executed at $d'_f$ in $\rho$, this means that the
multiset is exactly $\mmap_f$ when $d_f$ is reached during $\rho$.
Clearly this initial segment of $\rho$ corresponds to a run of $\ap$
which reaches the target configuration.\\
\noindent\textit{Proof of \impl{general:intersection:starv}
{general:intersection:term}} We construct
$\ap'=(D,\Sigma',(L'_c)_{c\in\fC'},d_0,\mmap'_0)$ given \[\ap=(D,\Sigma,
(L_c)_{c\in\fC},d_0,\mmap_0)\] over $\cC$
as follows. Let $\Sigma'=\Sigma\cup\{\ltr{s}\}$, where $\ltr{s}$ is
a fresh handler. Replace each edge 
\[
  \begin{tikzpicture}[gadget, node distance=1.5cm, baseline={([yshift=-.8ex]current bounding box.center)}]
    \node[state] (d) {$d$};
    \node[state] (d') [right=of d] {$d'$};
    \path (d) edge node {$\sigma|L$} (d');
  \end{tikzpicture}
\quad \text{by} \quad
\begin{tikzpicture}[gadget, node distance=2.5cm, baseline={([yshift=-.8ex]current bounding box.center)}]
    \node[state] (d) {$d$};
    \node[state] (d') [right=of d] {$d'$};
        \path (d) edge node {$\sigma|L \cup L \ltr{s}$} (d');

    \path (d') edge [loop right] node {$\ltr{s}|\varepsilon$} (d');
\end{tikzpicture}%
\]
at every state $d\in D$. Moreover, we set
$\mmap'_0=\mmap_0\oplus\multi{\ltr{s}}$. Then $\ap'$
has an infinite fair run that starves some handler if and only if
$\ap$ has an infinite fair run.  From an infinite fair run $\rho$ of
$\ap$, we obtain an infinite fair run of $\ap'$ which starves $
\ltr{s}$, by producing $\ltr{s}$ while simulating $\rho$ and consuming it in the
loop. Conversely, from an infinite fair run $\rho'$ of $\ap'$ which
starves some $\tau$, we obtain an infinite fair run $\rho$ of $\ap$ by
omitting all productions and consumptions of $\ltr{s}$ and removing
two extra instances of $\ltr{s}$ from all configurations.\\
\noindent\textit{Proof of \impl{general:intersection:reach}
{general:intersection:starv}} From
$\ap=(D,\Sigma,(L_c)_{c\in\fC},d_0,\mmap_0)$ over $\cC$, for each
subset $\Gamma\subseteq\Sigma$ and $\tau\in\Sigma$, we construct an
asynchronous program
$\ap_{\Gamma,\tau}=(D',\Sigma',(L_c)_{c\in\fC'},d'_0,\mmap'_0)$ over
$\cC$ such that a particular configuration is reachable in
$\ap_{\Gamma,\tau}$ if and only if $\ap$ has a fair infinite run
$\rho_
{\Gamma,\tau}$, where $\Gamma$ is the set of handlers that is executed
infinitely often
in $\rho_{\Gamma,\tau}$ and $\rho_{\Gamma,\tau}$ starves $\tau$. Since
there are only finitely many choices for $\Gamma$ and $\tau$,
decidability of configuration reachability implies decidability of
fair starvation. The idea is that a run $\rho_{\Gamma,\tau}$ exists if
and only if
there exists a run
\begin{equation}
  (d_0,\mmap_0)\autstep[\sigma_1]\cdots\autstep[\sigma_n](d_n,\mmap_n)=(e_0,\nmap_0)\autstep[\gamma_1](e_1,\nmap_1)\autstep[\gamma_2]\cdots\autstep[\gamma_k](e_k,\nmap_k), \label{run:twophases}\end{equation}
where $\bigcup_{i=1}^k\{\gamma_i\}=\Gamma$, for each $1 \leq i \leq k$
$\nmap_i \in \multiset{\Gamma}$,
$\mmap_n\preceq\nmap_k$, and for each $i\in\{1,\ldots,k\}$
with $\gamma_i=\tau$, we have $\nmap_{i-1}(\tau)\ge 2$. In such a run, we call $(d_0,\mmap_0)\autstep[\sigma_1]\cdots\autstep[\sigma_n](d_n,\mmap_n)$ its \emph{first phase} and $(e_0,\nmap_0)\autstep[\gamma_1]\cdots\autstep[\gamma_k](e_k,\nmap_k)$ its \emph{second phase}. 

Let us explain how $\ap_{\Gamma,\tau}$ reflects the existence of a run
as in \cref{run:twophases}. The set $\Sigma'$ of handlers of
$\ap_{\Gamma,\tau}$ includes $\Sigma$, $\bar{\Sigma}$ and
$\hat{\Sigma}$, where
$\bar{\Sigma}=\{\bar{\sigma}\mid\sigma\in\Sigma\}$ and
$\hat{\Sigma}=\{\hat{\sigma}\mid\sigma\in\Sigma\}$ are disjoint copies
of $\Sigma$. This means, a multiset
$\multiset{\Sigma'}$ contains multisets
$\mmap'=\mmap\oplus\bar{\mmap}\oplus\hat{\mmap}$ with
$\mmap\in\multiset{\Sigma}$, $\bar{\mmap}\in\multiset{\bar{\Sigma}}$,
and $\hat{\mmap}\in\multiset{\hat{\Sigma}}$. A run of
$\ap_{\Gamma,\tau}$ simulates the two phases of $\rho$. While
simulating the first phase, $\ap_{\Gamma,\tau}$ keeps two copies of
the task buffer, $\mmap$ and $\bar{\mmap}$. The copying is easily
accomplished by a homomorphism with $\sigma\mapsto\sigma
\bar{\sigma}$ for each $\sigma \in \Sigma$. At some point,
$\ap_{\Gamma,\tau}$ switches into simulating the second phase. There,
$\bar{\mmap}$ remains unchanged, so that it stores the value of
$\mmap_n$ in \cref{run:twophases} and can be used in the end to make
sure that $\mmap_n\preceq \nmap_k$.

Hence, in the second phase, $\ap_{\Gamma,\tau}$ works, like $\ap$,
only with $\Sigma$. However, whenever a handler $\sigma\in\Sigma$ is
executed, it also produces a task $\hat{\sigma}$. These handlers are
used at the end to make sure that every $\gamma\in\Gamma$ has been
executed at least once in the second phase. Also, whenever
$\tau$ is executed, $\ap_{\Gamma,\tau}$ checks that at least two
instances of $\tau$ are present in the task buffer, thereby ensuring
that $\tau$ is starved.

In the end, a distinguished final state allows $\ap_{\Gamma,\tau}$ to
execute handlers in $\Gamma$ and $\bar{\Gamma}$ simultaneously to make
sure that $\mmap_n\preceq\nmap_k$. In its final state,
$\ap_{\Gamma,\tau}$ can execute handlers $\hat{\gamma}\in\hat{\Gamma}$
and $\gamma\in\Gamma$ (without creating new handlers).  In the final
configuration, there can be no $\hat{\sigma}$ with
$\sigma\in\Sigma\setminus\Gamma$, and there has to be exactly one
$\hat{\gamma}$ for each $\gamma\in\Gamma$. This guarantees that
(i)~each handler in $\Gamma$ is executed at least once during the
second phase, (ii)~every handler executed in the second phase is from
$\Gamma$, and (iii)~$\mmap_n$ contains only handlers from $\Gamma$
(because handlers from $\bar{\Sigma}$ cannot be executed in the second
phase).\\

Let us now describe $\ap_{\Gamma,\tau}$ in detail. We have 
$\Sigma'=\Sigma\cup\bar{\Sigma}\cup\hat{\Gamma}\cup\{\ltr{z}\}$, where 
$\ltr{z}$ is a fresh letter.  The set of states is 
$D'=D\cup\tilde{D}\cup\{d_f\}$, where 
$\tilde{D}=\{\tilde{d}\mid d\in D\}$ is a disjoint copy of $D$ for 
simulating the second phase, and $d_f$ is a fresh state. Moreover, we 
change the languages $L_c$ in the following way: 
\[
  \begin{tikzpicture}[gadget, node distance=1.5cm, baseline={([yshift=-.8ex]current bounding box.center)}]
    \node[state] (d) {$d$}; 
    \node[state] (d') [right=of df] {$d'$}; 
    \path (d) edge node {$\sigma|L$} (d'); 
  \end{tikzpicture}
    ~~\leadsto~~
  \begin{tikzpicture}[gadget, node distance=2.5cm, baseline={([yshift=-.8ex]current bounding box.center)}]
    \node[state] (d) {$d$}; 
    \node[state] (d') [right=of df] {$d'$}; 
    \path (d) edge node {$\sigma\bar{\sigma}|h^{\pm}(L)$} (d'); 
  \end{tikzpicture}
\]
where $h^{\pm}\colon\Sigma^*\to(\Sigma\cup\bar{\Sigma})^*$ is the 
homomorphism with $\sigma\mapsto\sigma\bar{\sigma}$ for every 
$\sigma\in\Sigma$.  This means, for every context $c=(d,\sigma,d')$, 
we have $d\autstep[\sigma\bar{\sigma}|h^{\pm}(L)]d'$ in 
$\ap_{\Gamma,\tau}$.  Note that since $\cC$ is a full trio, it is in 
particular closed under homomorphisms. Hence, $h^\pm(L)$ belongs to 
$\cC$.  Moreover, $\ap_{\Gamma,\tau}$ can spontaneously switch to 
simulating the second phase: For each $d\in D$, we have 
\[
  \begin{tikzpicture}[gadget, node distance=2cm]
    \node[state] (d) {$d$}; 
    \node[state] (dtilde) [right=of df] {$\tilde{d}$}; 
    \path (d) edge node {$\ltr{z}|\{\ltr{z}\}$} (dtilde); 
  \end{tikzpicture}
\]
Here, the handler $\ltr{z}$ merely allows us to move to state $\tilde{d}$
without interfering with the other handlers. 
In the second phase, $\ap_{\Gamma,\tau}$ simulates $\ap$ slightly differently. We perform the following replacement: 
\begin{equation}
  \begin{tikzpicture}[gadget, node distance=1.5cm, baseline={([yshift=-.8ex]current bounding box.center)}]
    \node[state] (d) {$d$}; 
    \node[state] (d') [right=of df] {$d'$}; 
    \path (d) edge node {$\sigma|L$} (d'); 
  \end{tikzpicture}
  ~~\leadsto~~
  \begin{cases}
  \begin{tikzpicture}[gadget, node distance=2.5cm, baseline={([yshift=-.8ex]current bounding box.center)}]
    \node[state] (d) {$\tilde{d}$}; 
    \node[state] (d') [right=of df] {$\tilde{d}'$}; 
    \path (d) edge node {$\sigma|L\hat{\sigma}$} (d'); 
  \end{tikzpicture} & \text{if $\sigma\ne\tau$} \\
  \begin{tikzpicture}[gadget, node distance=2.5cm, baseline={([yshift=-.8ex]current bounding box.center)}]
    \node[state] (d) {$\tilde{d}$}; 
    \node[state] (d') [right=of df] {$\tilde{d}'$}; 
    \path (d) edge node {$\tau\tau|L\tau\hat{\tau}$} (d'); 
  \end{tikzpicture} & \text{if $\sigma=\tau$}
\end{cases}
\label{general:twoedges}
\end{equation}
Note that since $\cC$ is a full trio, the languages $Lv=\{uv\mid u\in L\}$
belong to $\cC$ for every word $v$. 
Finally, $\ap_{\Gamma,\tau}$ can spontaneously switch to its 
distinguished state $d_f$, so that we have for every 
$\tilde{d}\in \tilde{D}$ and every $\gamma\in\Gamma$: 
\begin{equation}
  \begin{tikzpicture}[gadget, node distance=2cm]
    \node[state] (dtilde) {$\tilde{d}$}; 
    \node[state] (df) [right=of df] {$d_f$}; 
    \path (dtilde) edge node {$\ltr{z}|\{\varepsilon\}$} (df); 
    \path (df) edge [loop above] node  {$\gamma\bar{\gamma}|\{\varepsilon\}$} (df); 
    \path (df) edge [loop right] node  {$\gamma|\{\varepsilon\}$} (df); 
    \path (df) edge [loop below] node {$\hat{\gamma}|\{\varepsilon\}$} (df); 
  \end{tikzpicture}
  \label{general:intersection:finalloops}
\end{equation}
The initial configuration of $\ap_{\Gamma,\tau}$ is $(d_0,\mmap'_0)$, 
where $\mmap'_0=\mmap_0\oplus\bar{\mmap}_0\oplus\multi{\ltr{z}}$, 
where $\bar{\mmap}_0$ is obtained from $\mmap_0$ by replacing every 
occurrence of $\sigma\in\Sigma$ in $\mmap_0$ with 
$\bar{\sigma}$.  The final configuration is $(d_f,\mmap_f)$, where 
$\mmap_f\in\multiset{\hat{\Gamma}}$ is the multiset with 
$\mmap_f(\hat{\gamma})=1$ for every $\gamma\in\Gamma$. It is now 
straightforward to check that $(d_f,\mmap_f)$ is reachable in 
$\ap_{\Gamma,\tau}$ if and only if $\ap$ has an infinite fair run that 
starves $\tau$. 

\noindent\textit{Decidability of $Z$-intersection} To complete the
proof of
\cref{general:intersection}, we reduce
$Z$-intersection to configuration reachability. Given
$K\subseteq\{\ltr{a},\ltr{b}\}^*$ from $\cC$, we construct the
asynchronous program $\ap=(D,\Sigma,(L_c)_{c\in\fC},d_0,\mmap_0)$ over
$\cC$ where $D=\{d_0,0,1\}$,
$\Sigma=\{\ltr{a},\ltr{b}, \ltr{c}\}$, by including the following
edges:
\[
  \begin{tikzpicture}[gadget, node distance=2cm]
  \node[state] (d0) [initial] {$d_0$};
  \node[state] (0) [right=of d0] {$0$};
  \node[state] (1) [right=of 0] {$1$};
  \path (d0) edge node {$\ltr{c}|K$} (0);
  \path (0) edge [bend left] node {$\ltr{a}|\{\varepsilon\}$} (1);
  \path (1) edge [bend left] node {$\ltr{b}|\{\varepsilon\}$} (0);
\end{tikzpicture}
\]
The initial task buffer is $\mmap_0=\multi{\ltr{c}}$.
Then clearly, the configuration $(0,\multi{})$ is reachable in
$\ap$ if and only if $K\cap Z\ne\emptyset$.

If the construction seems abstract, recall the example from Section~\ref{sec:preliminaries}:
the procedure \texttt{s1} plays the role of $K$ and generates strings from its 
language in $\set{\ltr{a}, \ltr{b}}^*$.
The procedures $\ltr{a}$ and $\ltr{b}$ take turns to ensure there is an equal number of them;
the states $0$ and $1$ are values of \texttt{turn}.~\qedhere
\end{proof}

Theorem~\ref{general:intersection} is useful in the contrapositive to
show undecidability. For example, one can 
show undecidability of $Z$-intersection for languages of lossy
channel systems (see~\cref{sub:gen_safety_and_termination}): One
expresses reachability in a non-lossy FIFO system by making sure that
the numbers of enqueue- and dequeue-operations match.
Thus, for asynchronous programs over lossy channel systems, the problems of
\cref{general:intersection} are undecidable.
We also use \cref{general:intersection} in \cref{sec:higher-order} to conclude
undecidability for higher-order asynchronous programs, already at order $2$.

\section{Higher-Order Asynchronous Programs}\label{sec:higher-order}

We apply our general decidability results to
asynchronous programs over (deterministic) higher-order recursion
schemes ($\hors$).
Kobayashi~\cite{DBLP:conf/popl/Kobayashi09} has shown how
higher-order functional programs can be modeled using $\hors$.
In his setting, a program contains instructions
that access certain resources.  The path language of the $\hors$ 
produced by Kobayashi is the set of possible sequences of
instructions.  
For us, the input program contains \texttt{post} instructions (which post handlers i.e. add tasks to the task buffer) 
and we translate higher-order programs with \texttt{post} 
instructions into a $\hors$ whose path language is used as the language of handlers.

We recall some definitions from~\cite{DBLP:conf/popl/Kobayashi09}. 
The set of \textit{types} is defined by the grammar
$A\ :=\ \otype \mid A \rightarrow A$.
The \textit{order} $\ord(A)$ of a type $A$ is inductively defined as
$\ord(\otype)=0$ and $\ord(A \rightarrow B) := \max(\ord(A)+1,\ord(B))$.
The \textit{arity} of a type is inductively defined by $\arity(\otype)=0$
and $\arity(A \rightarrow B)=\arity(B)+1$.
We assume a countably infinite set $\Var$ of typed variables $x: A$.
For a set $\Theta$ of typed symbols, the set $\appterms$ of
\textit{terms} generated from $\Theta$ is the least set which
contains $\Theta$ such that whenever $s:A\to B$ and $t:A$ belong to
$\appterms$, then also $s \, t:B$ belongs to $\appterms$.
By convention the type $\otype \rightarrow \ldots (\otype \rightarrow (\otype \rightarrow
\otype))$ is written $\otype \rightarrow \ldots\rightarrow \otype \rightarrow \otype$ and
the term $((t_1 t_2)t_3\cdots)t_n$ is written $t_1 t_2\cdots t_n$.
We write $\bar{x}$ for a sequence $(x_1,x_2,\ldots, x_n)$ of variables.

A higher-order recursion scheme ($\hors$) is a tuple
 $\rs=
 (\Sigma,\nonterm,\rewrite,S)$ where 
  $\Sigma$ is a set of typed \textit{terminal} symbols of types
  of order 0 or 1,
  $\nonterm$ is a set of typed \textit{non-terminal} symbols (disjoint
  from terminal symbols), 
  $S:\otype$ is the start non-terminal symbol and
 $\rewrite$ is a set of rewrite rules $F x_1 x_2 \cdots x_n
 \twoheadrightarrow t$ where $F:A_1
 \rightarrow \cdots \rightarrow A_n \rightarrow \otype$ is a non-terminal in $\nonterm$,
 $x_i:A_i$ for all $i$ are variables and $t:\otype$ is a term
 generated from $\Sigma \cup \nonterm \cup \{ x_1,x_2,\ldots, x_n\}$.
 The order of a $\hors$ is the maximum order of a non-terminal symbol.
We define a rewrite relation 
$\twoheadrightarrow$ on terms over $\Sigma \cup \nonterm$ as follows:
$F \bar{a} \twoheadrightarrow t[\bar{x} / \bar{a}]$ if $F \bar{x}
 \twoheadrightarrow t \in \rewrite$, %
and 
  if $t \twoheadrightarrow t'$ then $ts \twoheadrightarrow t's$ and 
  $st \twoheadrightarrow st'$.
The reflexive, transitive closure of 
$\twoheadrightarrow$ is denoted 
$\twoheadrightarrow^*$. 
A \textit{sentential form} $t$ of $\rs$
is a term over $\Sigma \cup \nonterm$ such that 
$S \twoheadrightarrow^* t$.

If $N$ is the maximum arity of a symbol in $\Sigma$, then a (possibly infinite) tree over $\Sigma$ 
is a partial function $tr$ from $\{ 0, 1,\ldots, N-1\}^*$ to $\Sigma$ that fulfills the following conditions:
  $\varepsilon \in \dom(tr)$,
  $\dom(tr)$ is closed under prefixes, and
  if $tr(w)=a$ and $\arity(a)=k$ then $\{ j \mid wj \in \dom(tr)\}= \{ 0,1,\ldots,k-1\}$. 

  A \emph{deterministic} $\hors$ is one where there is exactly one rule
  of the form
  $F
  x_1 x_2 \cdots x_n \rightarrow t$
  for every non-terminal $F$.
Following~\cite{DBLP:conf/popl/Kobayashi09}, we show how a deterministic
 $\hors$ can be used to represent a higher-order
 pushdown language arising from a higher-order functional program. 
Sentential forms can be seen as trees over $\Sigma \cup
  \nonterm$. 
A sequence $\Pi$ over $\{0,1,\ldots,N-1\}$ is a \textit{path} of $tr$ if
every finite prefix of $\Pi \in \dom(tr)$ and $\Pi$ is maximal. The set of finite paths in a tree
$tr$ will be denoted $\paths(tr)$. Unless otherwise specified, a path is finite in our context. Associated with any path
$\Pi=n_1,n_2,\ldots,n_k$ is the word 
\[ w_{\Pi}=tr(n_1)tr(n_1n_2)\cdots tr(n_1n_2\cdots n_k). \]
We will also write $w_{\Pi}(\bar{m})$ to denote the letter $tr(\bar{m})$ where $\bar{m}=n_1n_2\ldots n_l$ for some $l \leq k$.
We associate a  \textit{value tree} $\tree_{\rs}$ associated with a deterministic
$\hors$ $\rs$ in the following way.
For a sentential form $t$ define the finite tree $t^{\bot}$ by case
analysis:
$t^{\bot} = f$ if $f$ is a terminal symbol and
$t^{\bot} = t_1^{\bot} t_2^{\bot}$ if $t = t_1 t_2$ and $t_1^{\bot} \neq \bot$,
and $t^{\bot} = \bot$ otherwise.

Intuitively, the tree $t^{\bot}$ is obtained by replacing any subterm
whose root is a non-terminal by the nullary symbol $\bot$. We define
the partial order
$\leq_t$ on $\dom(\Sigma) \cup \{ \bot\}$ by $\bot \leq_t a$ for all
$a \in \dom(\Sigma)$, which is extended to trees by 
\[ t \leq_t s
\iff \forall \bar{n} \in \dom(t): \bar{n} \in \dom(s) \land t(
\bar{n}) \leq_t s(\bar{n})
\]

The value tree generated by a $\hors$ $\rs$ is denoted
$\tree_{\rs}$ and is obtained by repeated application of the rewrite
rules to the start symbol $S$. 
To make this formal, we write $\lub(T)$ for the
least upper
bound of a collection $T$ of trees with respect to the order $\leq_t$.
Then we set 
$\tree_{\rs} := \lub(\{ t^{\bot} \mid S \rightarrow^* t\})$.
  Any $\hors$ for which the value tree $\tree_{\rs}$ does not contain any $\bot$ symbol is
  called \emph{productive}. All $\hors$ dealt with in this paper are
  assumed to be productive. Checking if 
  a given $\hors$ is productive is decidable, see, e.g.,~\cite{grellois2016semantics}.

 Let $\Sigma_1:= \{a \in \Sigma \mid \arity(a)=1 \}$. The 
 \textit{path
 language} $\langpath(\rs)$ of a deterministic $\hors$ $\rs$ is
 defined as $\{ \proj_{\Sigma_1}(w_{\Pi}) \mid \Pi \in \paths(\tree_
 {\rs}) \}$.
 The \textit{tree language} $\langtree(\rs)$ associated with a (nondeterministic) $\hors$
 is the set of sentential forms of $\rs$ which contain only labels from $\Sigma$.

The deterministic $\hors$ corresponding to the higher-order function $
\mathtt{s4}$ from Figure~\ref{fig:ex}
is given by $\rs= (\Sigma,\nonterm,\rewrite,S)$, where 
\begin{flalign*}
  \Sigma =&\{ \ltr{br}:\otype \rightarrow \otype \rightarrow 
  \otype, \ltr{c},\ltr{d},\ltr{f}:\otype \rightarrow \otype,
  \ltr{e}:\otype\}\\
  \nonterm =& \{S:\otype, F:(\otype\rightarrow \otype)
  \rightarrow 
  \otype \rightarrow \otype, H:(\otype\rightarrow \otype) \rightarrow 
  \otype \rightarrow \otype, I:\otype \rightarrow \otype\} \\
  \rewrite=& \{ S \twoheadrightarrow F \; I \; \ltr{e}, I \; x
  \twoheadrightarrow
  x, F \; G \; x \twoheadrightarrow \ltr{br} (F \; (H \; G) \; (
  \ltr{f}x)
  ) \; (G\; x), H \; G \; x \twoheadrightarrow \ltr{c}(G(\ltr{d}x)) 
  \}
\end{flalign*}
The path language $\langpath(\rs) = \set{\mathtt{c}^n \mathtt{d}^n \mathtt{f}^n \mid n\geq 0}$.
To see this, apply the reduction rules to get the value tree $\tree_{\rs}$ shown on the right: 

\begin{minipage}{0.60\textwidth}
{\small
  \begin{flalign*}
    S &\twoheadrightarrow F \; I \; \ltr{e} \twoheadrightarrow \ltr{br} \; (F \; (HI) \; (\ltr{fe})) \; (I\ltr{e})\\
    & \twoheadrightarrow \ltr{br} \; (F \; (HI) \; (\ltr{fe})) \; \ltr{e}\\
    & \twoheadrightarrow \ltr{br} \; (\ltr{br} \; (F \; (H(HI)) \; (\ltr{f}(\ltr{fe}))) \; 
    (HI(\ltr{fe}))) \; \ltr{e}\\
    & \twoheadrightarrow \ltr{br} \; (\ltr{br} \; (F \; (H(HI)) \; (\ltr{f}(\ltr{fe}))) \;(\ltr{c}(I(\ltr{d(fe)})))) \;
    \ltr{e}\\
    & \twoheadrightarrow \ltr{br} \; (\ltr{br} \; (F \; (H(HI)) \; (\ltr{f}(\ltr{f e}))) \;
    (\ltr{c(d(fe))})) \;
    \ltr{e}\\
    &\twoheadrightarrow \cdots
  \end{flalign*}
}
\end{minipage}
\quad %
\begin{minipage}{0.25\textwidth}

\begin{centering}
{\small
\begin{tikzpicture}[framed,scale=0.85]
\label{tikz:value_tree_cndnfn}
\def \d {0.65cm}%
  \node(v1) at (0,0) {$\ltr{br}$};

  \node(v2) at ({-120}:\d) {$\ltr{e}$};
  \node(v3) at ({-60}:\d) {$\ltr{br}$};

  \node(v4) at ($({-120}:\d)+(v3)$) {$\ltr{c}$};
  \node(v5) at ($({-60}:\d)+(v3)$) {$\ltr{br}$};

  \node(v6) at ($(v4)-(0,\d)$) {$\ltr{d}$};
  \node(v7) at ($(v6)-(0,\d)$) {$\ltr{f}$};
\node(v8) at ($(v7)-(0,\d)$) {$\ltr{e}$};

\node(v9) at ($(v5)+({-120}:\d)$) {$..$};
\node(v10) at ($(v5)+({-60}:\d)$) {$..$};

\draw (v1) -- (v2);
\draw (v1) -- (v3);
\draw (v3) -- (v4);
\draw (v3) -- (v5);
\draw (v4) -- (v6);
\draw (v6) -- (v7);
\draw (v7) -- (v8);
\draw (v5) -- (v9);
\draw (v5) -- (v10);
\end{tikzpicture}
}
\end{centering}

\end{minipage}
  A $\hors$ $\rs$ is called a \emph{word scheme} if it has exactly
  one nullary terminal symbol $\e$ and all other terminal symbols $\tilde{\Sigma}$
  are of arity
  one.  The \textit{word language} $\langword(\rs) \subseteq \tilde{\Sigma}^*$
  defined by $\rs$ is $\langword(\rs)=\{ a_1a_2 \cdots a_n \mid (a_1(a_2\cdots(a_n
  (\e))\cdots)) \in \langtree(\rs)\}$. 
  We denote by $\langhors$ the class of languages
  $\langword(\rs)$ that occur as the word language of a higher-order
  recursion scheme $\rs$.
Note that path languages and languages of word schemes are both word languages
over the set $\tilde{\Sigma}$ of unary symbols considered as letters.
They are connected by the following proposition, a proof of which is
given in Appendix \ref{sec:appendix_ho}.\footnote{
  The models of $\hors$ (used in model checking higher-order programs~\cite{DBLP:conf/popl/Kobayashi09})
  and word schemes (used in language-theoretic exploration of downclosures~\cite{HagueKO16,ClementePSW16})
  are somewhat different. 
  Thus, we show an explicit reduction between the two formalisms.
}

\begin{prop}
\label{prop:pathslangs_to_word_schemes}
  For every order-$n$ $\hors$ $\rs=(\Sigma,\nonterm, S, \rewrite)$
  there exists
  an order-$n$ word scheme $\rs'=(\Sigma',\nonterm', S', \rewrite')$ such
  that
  $\langpath(\rs)=\langword(\rs')$.
\end{prop}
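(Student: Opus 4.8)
The plan is to build $\rs'$ from $\rs$ by a purely symbol-replacing transformation that keeps the type of every term (hence the order) unchanged, turning the branching of the value tree $\tree_{\rs}$ into nondeterministic choice in the word scheme. Partition the terminals of $\rs$ into the arity-$0$ symbols $\Sigma_0$, the arity-$1$ symbols $\Sigma_1$, and the branching symbols $\Sigma_{\geq 2}$ of arity $\geq 2$. I would fix one fresh nullary symbol $\e$, keep every $a\in\Sigma_1$ as a unary output letter, and for each branching $b\in\Sigma_{\geq 2}$ of arity $k$ introduce a fresh nonterminal $B_b$ of type $\otype\to\cdots\to\otype\to\otype$ ($k$ arguments) with the $k$ rules $B_b\,x_1\cdots x_k\twoheadrightarrow x_i$, one for each $i$. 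Define the term translation $\lceil\cdot\rceil$ to replace every $b\in\Sigma_{\geq 2}$ by $B_b$, every symbol of $\Sigma_0$ by $\e$, and leave $\Sigma_1$, nonterminals and variables untouched; then take $\rs'$ to have nonterminals $\nonterm\cup\{B_b\}$, terminals $\Sigma_1\cup\{\e\}$, start symbol $S$, and the rules $F\,\bar x\twoheadrightarrow\lceil t\rceil$ for each rule $F\,\bar x\twoheadrightarrow t$ of $\rs$ together with the selector rules. Since $B_b$ has the same type as $b$ and $\e$ the same type as any symbol of $\Sigma_0$, the translation preserves types, so $\rs'$ is a well-typed word scheme; its nonterminals are those of $\rs$ plus the order-$1$ symbols $B_b$, so its order is $\max(n,1)$.

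The intuition for correctness is that a finite path of $\tree_{\rs}$ descends through terminal nodes only, emitting a letter at each $\Sigma_1$-node, selecting one of the $k$ children at each branching node, and stopping at a $\Sigma_0$-leaf; the selector rule $B_b\,x_1\cdots x_k\twoheadrightarrow x_i$ realizes exactly this choice of child, erasing the subtrees that are not entered, while $\e$ marks the end of the path. Because each occurrence of a branching symbol in the value tree is a genuinely distinct node, the fact that copies of $B_b$ may be resolved independently in a reduction of $\rs'$ is precisely what is needed. Consequently the complete sentential forms of $\rs'$ (those labelled only by terminals, hence containing no $B_b$) are exactly the spines $a_1(a_2(\cdots(a_m(\e))\cdots))$ with $a_i\in\Sigma_1$, and reading them off yields $\langword(\rs')$.

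I would prove $\langpath(\rs)=\langword(\rs')$ by two inclusions, using the approximants $t^{\bot}$ and the identity $\tree_{\rs}=\lub\{t^{\bot}\mid S\twoheadrightarrow^* t\}$. For $\subseteq$, a finite path $\Pi$ ending at a leaf already occurs in some approximant $t^{\bot}$ with all its nodes genuine terminals (no node on $\Pi$ can be $\bot$, as $\bot$ is nullary); the translation of this reduction gives $S\twoheadrightarrow^*\lceil t\rceil$ in $\rs'$ (the translation commutes with substitution, so $\rs$-steps lift to $\rs'$-steps), and resolving each selector $B_b$ along $\Pi$ according to the child chosen by $\Pi$ and then reducing the chosen subterms turns $\lceil t\rceil$ into the spine whose letters are $\proj_{\Sigma_1}(w_{\Pi})$. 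For $\supseteq$, from a reduction of $\rs'$ producing a complete spine I would recover an $\rs$-reduction together with a path: one shows that the selector steps can be postponed to the end, so the derivation factors as an $\rs$-reduction $S\twoheadrightarrow^* t$ followed by selector steps that turn $\lceil t\rceil$ into the spine, and the sequence of selected children spells out a finite path of $t^{\bot}$ whose $\Sigma_1$-projection is the produced word.

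The main obstacle I expect is this last standardization step: one must argue that resolving a selector $B_b$, which erases every argument except the selected one, can always be postponed past ordinary rewrite steps without changing the final complete spine, which requires a careful commutation argument tailored to the erasing behaviour of the selector rules (in particular, only finitely many $\rs$-steps survive the reordering, namely those not lying in a discarded subterm). A secondary but genuine subtlety is the order-$0$ case, where introducing the order-$1$ selectors $B_b$ would illegitimately raise the order from $0$ to $1$; I would handle it separately by observing that in schemes of order $\le 1$ every branching terminal occurs fully applied (a partially applied branching terminal has an order-$1$ type and could only be passed as an argument in a scheme of order $\ge 2$), so the nondeterministic choice can instead be inlined into fresh order-$0$ nonterminals $C:\otype$ with one rule per branch, leaving the order at $0$.
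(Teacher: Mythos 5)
Your construction is essentially the paper's: both replace each branching terminal by a nondeterministic ``selector'' nonterminal with one projection rule per child, replace nullary terminals by $\e$, and keep the unary letters as the output alphabet; the paper merely normalizes first so that all branching goes through a single binary terminal $\br$, replaced by one selector $B$ with rules $Bxy\twoheadrightarrow x$ and $Bxy\twoheadrightarrow y$, whereas you introduce one selector $B_b$ per branching symbol---an inessential difference. The inclusion $\langpath(\rs)\subseteq\langword(\rs')$ is argued the same way in both proofs. The genuine divergence is in the converse inclusion, where the real work lies. The paper proves, by induction on derivation length, a simulation invariant: every sentential form $t'$ of $\rs'$ embeds into some sentential form $t$ of $\rs$ via a map $\alpha\colon\dom(t')\to\dom(t)$ that preserves labels (sending $B$ to $\br$), preserves the prefix order, and creates no new $\tilde{\Sigma}$-labelled nodes between consecutive images; the word is then read off the image of the final spine. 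You instead standardize the derivation itself: a postponement lemma moves all selector steps to the end, so the derivation factors through $\lceil t\rceil$ for some sentential form $t$ of $\rs$ (the ordinary steps are exactly a translated $\rs$-derivation), after which the selector choices trace out the path. Both arguments must cope with the same phenomenon---an ordinary step may duplicate or erase a pending selector redex---and they lodge the bookkeeping in different places: the paper inside the inductive construction of $\alpha$ (matching the $i$-th copy of a subtree to the $i$-th copy), you inside the swap ``selector then ordinary becomes ordinary then $c\ge 0$ selectors'', whose case analysis (redexes disjoint, ordinary redex above the selector, ordinary redex inside the kept child) indeed goes through, with termination of the reordering by induction on the number of ordinary steps. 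Your route is the more standard rewriting-theoretic argument and avoids the paper's delicate third embedding invariant; the paper's route never reorders derivations, only extends an embedding step by step. Finally, your order-$0$ remark is a genuine catch that the paper glosses over: adding an order-$1$ selector (and likewise the paper's arity-normalization, which introduces order-$1$ nonterminals) preserves the order only for $n\ge 1$, and your inlining fix---sound because branching terminals are necessarily fully applied in schemes of order $\le 1$---repairs the statement at $n=0$.
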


A consequence of~\cite{DBLP:conf/popl/Kobayashi09} 
and Prop.~\ref{prop:pathslangs_to_word_schemes} 
is that the ``post'' language of higher-order functional programs (i.e. the language corresponding to sequences of newly spawned tasks)
can be modeled as the language of a word scheme.
Hence, we define an \emph{asynchronous program over $\hors$} as an
asynchronous program over the language class $\langhors$ and we can use the following results
on word schemes.

\begin{thm}\label{thm:hors}
$\hors$ and word schemes form effective full trios~\cite{ClementePSW16}.
  Emptiness~\cite{KobayashiO2011} and finiteness~\cite{Parys2017} of order-$n$ word schemes are $\EXPTIME[(n-1)]$-complete.
\end{thm}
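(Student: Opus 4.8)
The plan is to read Theorem~\ref{thm:hors} as an assembly of three cited ingredients and to organize the argument so that the two closure claims are established on word schemes and then transported to path languages via \cref{prop:pathslangs_to_word_schemes}. Recall that in this paper the class $\langhors$ \emph{is} the class of word-scheme languages, and that \cref{prop:pathslangs_to_word_schemes} identifies every path language $\langpath(\rs)$ with some $\langword(\rs')$ of equal order; hence it suffices to treat word schemes as the primary object and to observe that the same closure properties then hold for $\hors$ path languages. First I would prove effective closure under the three full-trio operations by purely syntactic transformations on a nondeterministic word scheme $\rs=(\Sigma,\nonterm,\rewrite,S)$ that preserve $\ord(\rs)$. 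For a homomorphic image under $h$, I would replace each unary terminal $a$ by a fresh order-$1$ nonterminal whose single rule threads its argument through the chain $h(a)=b_1\cdots b_k$ (and through the identity when $h(a)=\varepsilon$); only order-$\le 1$ machinery is added, so the order is unchanged and the construction is clearly computable. For intersection with a regular language given by a DFA, I would take a synchronous product at the level of the arity-one terminals, letting the nondeterministic scheme guess and verify the automaton's run inline; this too is order-preserving and effective.

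The delicate case is inverse homomorphism. To build a scheme for $h^{-1}(L)$ over the source alphabet, I would let the scheme nondeterministically emit a source letter $c$ while producing the block $h(c)$ of target letters to be matched against $L$, which is exactly the rational-transduction construction carried out in~\cite{ClementePSW16}. The cleaner route I would actually present is to invoke the known identification of order-$n$ word-scheme languages with a classical full trio in the literature---level $n$ of Maslov's hierarchy, equivalently the $\varepsilon$-languages of order-$n$ pushdown devices~\cite{maslov1974hierarchy}---for which~\cite{ClementePSW16} supplies precisely the effective transductions; \cref{prop:pathslangs_to_word_schemes} then lifts full-trio closure to $\hors$ path languages verbatim. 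I expect the inverse-homomorphism step, done while simultaneously preserving order and keeping the representation effective, to be the main obstacle; homomorphism and regular intersection are routine by comparison.

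Finally, for the complexity bounds I would reduce emptiness and finiteness of an order-$n$ word scheme to the corresponding problems on order-$n$ recursion schemes. Emptiness of $\rs$ asks whether some terminal-only chain $a_1(\cdots(a_n\,\e)\cdots)$ occurs as a sentential form, an instance of reachability for order-$n$ schemes, shown $\EXPTIME[(n-1)]$-complete in~\cite{KobayashiO2011}: the upper bound follows from their algorithm, and the lower bound from encoding an $\EXPTIME[(n-1)]$ computation already into an emptiness instance. Finiteness of $\langword(\rs)$ is the unboundedness/diagonal-flavoured problem resolved by~\cite{Parys2017}, again $\EXPTIME[(n-1)]$-complete. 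The only point requiring care here is to align the order convention of those papers (order $=$ maximal order of a nonterminal) with ours, after which both bounds transfer directly.
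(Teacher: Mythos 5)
The theorem you are proving is, in the paper, a pure citation statement: the full-trio property is attributed to \cite{ClementePSW16} and the two complexity bounds to \cite{KobayashiO2011} and \cite{Parys2017}, with no independent proof given, so your overall plan of assembling exactly those references is consistent with the paper. However, the route you say you would ``actually present'' contains a genuine error. You identify the order-$n$ word-scheme languages with level $n$ of Maslov's hierarchy, i.e.\ with the $\varepsilon$-languages of order-$n$ pushdown automata \cite{maslov1974hierarchy}. That identification is precisely the long-standing ``safety'' question: word schemes are (unsafe, nondeterministic) recursion schemes, and the equivalence that is actually known is with order-$n$ \emph{collapsible} pushdown automata \cite{DBLP:conf/lics/HagueMOS08}, not with Maslov's (safe) higher-order pushdown hierarchy. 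Coincidence of the safe and unsafe hierarchies at the word-language level is known only up to order $2$ (Aehlig, de~Miranda, Ong) and is not established in general; note that the paper itself is careful on exactly this point, invoking the pushdown-to-scheme translation only at order $2$, via \cite{DammGoerdt1986} and \cite{damm1982io}. So the ``cleaner route'' rests on an unproved equivalence, and the correct move is the one the paper makes: take effective closure under rational transductions of scheme word languages directly from \cite{ClementePSW16} (or argue via collapsible pushdown automata), rather than through Maslov's hierarchy.

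Two smaller points. First, transporting the full-trio property to $\hors$ path languages needs more than \cref{prop:pathslangs_to_word_schemes}: that proposition embeds every path language into $\langhors$, but to conclude that the class of path languages is itself closed under a transduction $T$ you would also need the converse translation (every word-scheme language arises as a path language of a deterministic $\hors$), which the paper never states---and never needs, since asynchronous programs over $\hors$ are defined over the class $\langhors$ of word-scheme languages. Second, your construction for intersection with a regular language is more delicate than ``guess and verify the automaton's run inline'': because the word is produced along nested applications, the DFA states must be threaded through all higher types by annotating every nonterminal, an order-preserving but genuinely type-level product construction. Neither of these is fatal---everything you need is indeed supplied by \cite{ClementePSW16}, and your treatment of the complexity claims via \cite{KobayashiO2011} and \cite{Parys2017} matches the paper's---but as written the sketch would not stand on its own, and its central reduction should be repaired as above.
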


Now \cref{general:emptiness,general:finiteness}, together with
\cref{prop:pathslangs_to_word_schemes} imply the decidability results in \cref{coro:hodec}.
The undecidability result is a consequence of \cref{general:intersection} and the 
undecidability of the $Z$-intersection problem for indexed languages or equivalently,
order-2 pushdown automata as shown in~\cite{Zetzsche2015b}.
Order-2 pushdown automata can be effectively turned into order-2
OI~grammars~\cite{DammGoerdt1986}, which in turn can be translated
into order-2 word schemes~\cite{damm1982io}. 
A direct (and independent) proof of undecidability of the $Z$-intersection problem for order-2 word schemes has appeared in~\cite[Theorem 4]{kobayashi2019inclusion}.

\begin{cor}
\label[cor]{coro:hodec}
For asynchronous programs over $\hors$: 
(1)~Safety, termination, and boundedness are decidable. 
(2)~Configuration reachability, fair termination, and fair starvation
  are undecidable already at order-2. 
\end{cor}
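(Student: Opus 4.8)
The plan is to handle the two parts with the three characterization theorems already established for full trios, after identifying the correct language class. By definition an asynchronous program over $\hors$ is one over the class $\langhors$, and \cref{prop:pathslangs_to_word_schemes} guarantees that the path language of an order-$n$ $\hors$ (the object produced by Kobayashi's encoding of a higher-order functional program with \texttt{post} instructions) is exactly the word language of an order-$n$ word scheme. So the class governing these programs is precisely $\langhors$, for which \cref{thm:hors} supplies everything the general theorems demand.

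For part~(1), I would simply feed the properties of $\langhors$ into \cref{general:emptiness,general:finiteness}. \cref{thm:hors} states that word schemes form an \emph{effective} full trio and that emptiness and finiteness are decidable (in fact $\EXPTIME[(n-1)]$-complete at order $n$, though only decidability is needed). Decidability of emptiness then yields safety and termination via \cref{general:emptiness}, and decidability of finiteness yields boundedness via \cref{general:finiteness}. The only point requiring care is that emptiness and finiteness are decidable \emph{per order}; this suffices because the order of a scheme is determined by its representation, so given an input we know which order-$n$ procedure to run.

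For part~(2), I would argue by contraposition with \cref{general:intersection}. Restricting attention to order~$2$, the word-scheme languages coincide with the indexed languages, equivalently the order-$2$ pushdown languages, and this class is a full trio. \cref{general:intersection} says that decidability of any one of configuration reachability, fair termination, or fair starvation for a full trio implies decidability of $Z$-intersection for that class. But $Z$-intersection is undecidable for indexed languages~\cite{Zetzsche2015b}; to view such a witnessing language as an order-$2$ word scheme one follows the chain order-$2$ pushdown automata $\to$ order-$2$ OI grammars $\to$ order-$2$ word schemes~\cite{DammGoerdt1986,damm1982io}. The contrapositive then makes all three problems undecidable already at order~$2$, hence for $\hors$ in general.

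The main obstacle is not in either reduction---each is a one-line appeal to an earlier theorem---but in the bridge between the two formalisms for higher-order computation: the \emph{path-language} presentation of $\hors$ (which is what the source program is compiled into) versus the \emph{word-scheme} presentation (for which the full-trio, emptiness, finiteness, and $Z$-intersection facts are stated). Making that passage precise and order-preserving is exactly the content of \cref{prop:pathslangs_to_word_schemes}; once it is in hand, both parts of the corollary are immediate consequences of \cref{general:emptiness,general:finiteness,general:intersection} together with \cref{thm:hors}.
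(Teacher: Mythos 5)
Your proposal is correct and follows essentially the same route as the paper: part~(1) is obtained by feeding the effective-full-trio, emptiness, and finiteness facts of \cref{thm:hors} (bridged to path languages via \cref{prop:pathslangs_to_word_schemes}) into \cref{general:emptiness,general:finiteness}, and part~(2) is the contrapositive of \cref{general:intersection} combined with undecidability of $Z$-intersection for order-2 pushdown languages, transferred to order-2 word schemes through the same chain of translations (order-2 PDA to order-2 OI grammars to order-2 word schemes) that the paper cites.
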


\section{A Direct Algorithm and Complexity Analysis}

We say that \emph{downclosures are computable} for a
language class $\cC$ if for a given description of a language $L$ in
$\cC$, one can compute an automaton for the regular language
$\dcl{L}$. 
A consequence of Proposition~\ref{dcpreservation} and
Theorem~\ref{thm:GantyM} is that if one can compute downclosures for
some language class, then one can avoid the enumerative approaches of
Section~\ref{general} and get a ``direct algorithm.''  
The direct algorithm replaces each handler by its
downclosure and then invokes the decision procedure summarized in
Theorem~\ref{thm:GantyM}.

\subsection{Higher-Order Recursion Schemes}

The direct algorithm for asynchronous programs over $\hors$ 
relies on the recent breakthrough results on computing
downclosures. 
\begin{thmC}[\cite{Zetzsche2015b,HagueKO16,ClementePSW16}]
  Downclosures are effectively computable for $\langhors$.
\end{thmC}

Unfortunately, current techniques do not yet provide a complexity upper bound 
based on the above theorem.
To understand why, we describe the current algorithm for computing downclosures.
In~\cite{Zetzsche2015b}, it was
shown that in a full trio $\cC$, downclosures are computable if and
only if the \emph{diagonal problem} for $\cC$ is decidable. 
The latter asks, given a language $L\subseteq\Sigma^*$, whether for every
$k\in\nats$, there is a word $w\in L$ with $|w|_\sigma\ge k$ for every
$\sigma\in\Sigma$. 
The diagonal problem was then shown to be decidable
for higher-order pushdown automata~\cite{HagueKO16} and then for word schemes~\cite{ClementePSW16}. 
However, the algorithm from~\cite{Zetzsche2015b}
to compute downclosures using an oracle for the diagonal problem 
employs enumeration to compute a downclosure automaton.
Thus, the enumeration is hidden inside the downclosure computation.

We conjecture that downclosures can in fact be
computed in elementary time (for word schemes of fixed order). 
This would imply an elementary time procedure for asynchronous programs over $\hors$ as well.

\subsection{Context Free Languages}

For handlers over context-free languages, given e.g., as pushdown automata,
Ganty and Majumdar show an $\EXPSPACE$ upper bound.
Precisely, the algorithm of~\cite{GantyM12} constructs for each handler
a polynomial-size Petri net with certain guarantees (forming a 
so-called \emph{adequate family of Petri nets}) that accepts a
Parikh equivalent language.  
These Petri nets are then used to construct a larger Petri net,
polynomial in the size of the asynchronous program and the adequate family
of Petri nets, in which the respective property (safety,
boundedness, or termination) can be phrased as a query decidable in
\EXPSPACE. 
A natural question is whether there is a downclosure-based algorithm with the same
asymptotic complexity.

Our goal is to replace the Parikh-equivalent Petri nets with Petri nets
recognizing the downclosure of a language.
It is an easy consequence of \cref{dcpreservation} that
the resulting Petri nets can be used in place of the adequate families
of Petri nets in the procedures for safety, termination, and
boundedness of~\cite{GantyM12}. 
Thus, if we can ensure these Petri nets are polynomial in the size of the handler,
we get an $\EXPSPACE$ upper bound.
Unfortunately, a finite automaton for $\dcl{L}$ may require exponentially many states in the
pushdown automaton~\cite{BachmeierLS2015}.
Thus a naive approach gives a $\TWOEXPSPACE$ upper bound.

We show here that for each context-free language $L$,
one can construct in polynomial time a $1$-bounded Petri net accepting
$\dcl{L}$. 
(Recall that a Petri net is $1$-bounded if every reachable marking has
at most one token in each place.)
This is a language theoretic result of independent interest.
When used in the construction of~\cite{GantyM12}, this matches the $\EXPSPACE$ upper bound.

As a byproduct, our translation yields a simple direct construction of
a finite automaton for $\dcl{L}$ when $L$ is given as a pushdown
automaton. This is of independent interest because earlier
constructions of $\dcl{L}$ always start from a context-free
grammar and produce (of necessity!) exponentially large 
NFAs~\cite{vanLeeuwen1978,Courcelle1991,BachmeierLS2015}.

We begin with some preliminary definitions.

\paragraph{Pushdown automata} If $\Gamma$ is an alphabet, we write
$\bar{\Gamma}=\{\bar{\gamma} \mid \gamma\in\Gamma\}$.  Moreover, if
$x=\bar{\gamma}$, then we define $\bar{x}=\gamma$. For a word
$v\in(\Gamma\cup\bar{\Gamma})^*$, $v=v_1\cdots v_n$,
$v_1,\ldots,v_n\in\Gamma\cup\bar{\Gamma}$, we set
$\bar{v}=\bar{v_n}\cdots\bar{v_1}$.  A \emph{pushdown automaton} is a tuple
$\cA=(Q,\Sigma,\Gamma,E,q_0,q_f)$, where $Q$ is a finite set of \emph{states},
$\Sigma$ is its \emph{input alphabet}, $\Gamma$ is its \emph{stack alphabet},
$E$ is a finite set of \emph{edges}, $q_0\in Q$ is its \emph{initial state},
and $F\subseteq Q$ is its set of \emph{final states}. An edge is a four-tuple
$(p,R,v,q)$, where $p,q\in Q$, $R\subseteq\Sigma^*$ is a regular language, and
$v\in \Gamma\cup\bar{\Gamma}\cup\{\varepsilon\}$.  We also write
\[ p\xrightarrow{R|v}q \]
to denote an edge $(p,R,v,q)\in E$. Intuitively, it tells us that from state
$q$, we can read any word in $R$ as input and modify the stack as specified by
$v$. Here, $v\in\Gamma$ means we push $v$ onto the stack. Moreover, $v\in\Gamma$,
$v=\bar{\gamma}$, means we pop $\gamma$ from the stack. Finally, $v=\varepsilon$
means we do not change the stack.
Let us make this formal. A \emph{configuration} of $\cA$ is a pair $(q,w)$ with
$q\in Q$ and $w\in\Gamma^*$.  For configurations $(q,w)$ and $(q',w')$, we
write $(q,w)\autstep[u](q',w')$ if there is an edge $(q,R,v,q')$ in $\cA$ such
that $u\in R$ and (i)~if $v=\varepsilon$, then $w'=w$, (ii)~if $v\in\Gamma$,
then $w'=wv$ and (iii)~if $v=\bar{\gamma}$ for $\gamma\in\Gamma$, then
$w=w'\gamma$.

A \emph{run} in $\cA$ is a sequence $(q_0,w_0),\ldots,(q_n,w_n)$ of
configurations and words $u_1,\ldots,u_n\in\Sigma^*$ such that
$(q_{i-1},w_{i-1})\autstep[u_i](q_{i},w_i)$ for $1\le i\le n$. Its
\emph{length} is $n$ and its \emph{initial} and \emph{final
  configuration} are $(q_0,w_0)$ and $(q_n,w_n)$, respectively. The
run is said to \emph{read} the word $u_1\cdots u_n$. The \emph{stack
  height} of the run is defined as $\max\{|w_i| \mid 0\le i\le n\}$.
We call the run \emph{positive} (resp. \emph{dually positive}) if $|w_i|\ge
|w_0|$ (resp. $|w_i|\ge 
|w_n|$) for every
$1\le i\le n$, i.e.  if the stack never drops below its initial
height (resp. is always above its final height).

We write $(q,w)\autsteps[u](q',w')$ for configurations $(q,w),(q',w')$
if there is a run with initial configuration $(q,w)$ and final
configuration $(q',w')$ that reads $u$. If there is such a run with
stack height $\le h$, then we write $(q,w)\autstepsh[u]{h}(q',w')$.
The \emph{language accepted by $\cA$} is
\[ \langof{\cA}=\{u\in\Sigma^* \mid  (q_0,\varepsilon)\autsteps[u](q_f,\varepsilon) \}. \]
There is also a language accepted with bounded stack height. For $h\in\nats$, we define
\[ \langof[h]{\cA}=\{u\in\Sigma^* \mid 
  (q_0,\varepsilon)\autstepsh[u]{h}(q_f,\varepsilon) \}. \] 
  Thus, $\langof[h]{\cA}$ is the set of words accepted by $\cA$ using runs
  of stack height at most $h$.
  
  In order
  to
exploit the symmetry between forward and backward computations in
pushdown automata, we will consider dual pushdown automata.  The
\emph{dual automaton} of $\cA$, denoted $\bar{\cA}$, is obtained from
$\cA$ by changing each edge $p\autstep[R|v]q$ into
$q\autstep[\rev{R}|\bar{v}] p$. Then
$\langof{\bar{\cA}}=\rev{\langof{\cA}}$. We will sometimes argue by
\emph{duality}, which is the principle that every statement that is
true for any $\cA$ is also true for any $\bar{\cA}$.

\paragraph{Petri nets} A \emph{(labeled) Petri net} is a tuple
$N=(\Sigma,S,T,\partial_0,\partial_1,\lambda,\mmap_0,\mmap_f)$ where $\Sigma$ is its \emph{input alphabet}, $S$ is a finite
set of \emph{places}, $T$ is a finite set of \emph{transitions},
$\partial_0,\partial_1\colon T\to\multiset{S}$ are maps that specify
an \emph{input marking} $\partial_0(t)$ and an \emph{output marking}
$\partial_1(t)$ for each transition $t\in T$,
$\lambda\colon T\to\Sigma\cup\{\varepsilon\}$ assigns labels to
transitions, and $\mmap_0,\mmap_f$ are its \emph{initial} and
\emph{final marking}. More generally, multisets
$\mmap\in \multiset{S}$ are called \emph{markings} of $N$.

For markings $\mmap_1,\mmap_2\in\multiset{S}$ and
$a\in\Sigma\cup\{\varepsilon\}$, we write $\mmap_1\autstep[a]\mmap_2$
if there is a transition $t\in T$ with $\lambda(t)=a$,
$\mmap_1\ge \partial_0(t)$, and
$\mmap_2=\mmap_1\ominus\partial_0(t)\oplus\partial_1(t)$. Moreover, we
write $\mmap_1\autsteps[w]\mmap_2$ if there are $n\in\nats$,
$a_1,\ldots,a_n\in\Sigma\cup\{\varepsilon\}$, and markings
$\mmap'_0,\ldots,\mmap'_n$ such that $w=a_1\cdots a_n$ and
$\mmap_1=\mmap'_0\autstep[a_1]\mmap'_1\autstep[a_2]\cdots\autstep[a_n]\mmap'_n=\mmap_2$.
Furthermore, we write $\mmap_1\autsteps\mmap_2$ if there exists a
$w\in\Sigma^*$ with $\mmap_1\autsteps[w]\mmap_2$.  The \emph{language
  accepted by $N$} is
$\langof{N}=\{w\in\Sigma^* \mid \mmap_0\autsteps[w]\mmap_f\}$.

For $k\in\nats$, a Petri net $N$ is \emph{$k$-bounded} if for every
marking $\mmap\in\multiset{S}$ with $\mmap_0\autsteps\mmap$, we have
$\card{\mmap}\le k$.

Our main results are as follows.

\begin{thm}[Succinct Downclosures for PDAs]\label{pda2bdd-pda}
  Given a pushdown automaton $\cA$, one can construct in polynomial time a
  pushdown automaton $\hat{\cA}$ so that
  $\dcl{\langof{\hat{\cA}}}=\dcl{\langof{\cA}}$. Moreover, we have
  $\langof{\hat{\cA}}=\langof[h]{\hat{\cA}}$ for some bound $h$ that is
  polynomial in the size of $\cA$.
\end{thm}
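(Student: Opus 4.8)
The plan is to keep a stack in the new automaton but to \emph{cap its height} at a value $h$ that is polynomial in $\cA$, while simultaneously repairing the language so that its downclosure is unchanged. The enabling observation is purely about the subword order: whenever an accepting run matches a pushed segment against a later popped segment (a ``pumpable arch''), the balance between the two is invisible to $\dcl{\cdot}$. Concretely, if a run contains a segment that pumps to words $u\,\alpha^{k}\,\mu\,\beta^{k}\,z\in\langof{\cA}$ for all $k\ge 0$, where $\alpha$ is emitted while descending into the arch and $\beta$ while ascending out of it, then every \emph{unbalanced} word $u\,\alpha^{i}\,\mu\,\beta^{j}\,z$ with $i,j\ge 0$ already lies in $\dcl{\langof{\cA}}$ (take $k=\max(i,j)$). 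Hence in $\hat{\cA}$ I may emit $\alpha$ and $\beta$ by two \emph{independent self-loops} that leave the net stack unchanged, instead of forcing the original nesting. This is exactly what lets a word such as $\ltr{a}^{N}\ltr{b}^{N}$, which $\cA$ accepts only with stack height $N$, be produced at bounded height.

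First I would decompose each accepting run by first returns to each stack level, exhibiting it as a nesting of arches, the nesting depth of an arch being the stack height it reaches. Along any maximal chain of nested arches I record the triple (state at the push, pushed symbol, state at the matching pop), of which there are at most $|Q|^{2}\,|\Gamma|$ distinct values. By pigeonhole, as soon as the depth exceeds $h := |Q|^{2}\,|\Gamma|$, two arches in the chain carry the same triple and the segment between them is a pumpable arch of the kind above. I then build $\hat{\cA}$ as the closure of $\cA$ under \emph{cutting} such repetitions: for each triple $(p,A,q)$ I install a bounded-height gadget that (i) simulates $\cA$ but refuses to nest beyond depth $h$, and (ii) at a cut fires a descent-loop emitting the arch's push-phase output and an ascent-loop emitting its pop-phase output, each without net stack change. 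Because the gadgets are indexed by the $O(|Q|^{2}|\Gamma|)$ summaries rather than by the unboundedly many words they emit, $\hat{\cA}$ stays polynomial, and since each summary's emitted words are themselves produced by the same bounded-height simulation, the definition is a finite fixpoint. I would invoke the duality principle $\langof{\bar{\cA}}=\rev{\langof{\cA}}$ to treat the ascending and descending phases symmetrically and halve the case analysis.

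Correctness splits into three claims. The bound $\langof{\hat{\cA}}=\langof[h]{\hat{\cA}}$ holds by construction, since every gadget caps nesting at $h$ and the loops leave the net stack unchanged. For $\langof{\hat{\cA}}\subseteq\dcl{\langof{\cA}}$ I would \emph{re-balance}: given a bounded-height accepting run of $\hat{\cA}$ that fires a descent-loop $i$ times and its matching ascent-loop $j$ times, I reinsert the genuine pumped arch $\max(i,j)$ times in $\cA$, obtaining a word of $\langof{\cA}$ of which the $\hat{\cA}$-output is a subword. For $\dcl{\langof{\cA}}\subseteq\dcl{\langof{\hat{\cA}}}$ I would \emph{flatten}: starting from any accepting run of $\cA$, I repeatedly cut a same-triple pair, each cut strictly lowering the depth while \emph{preserving the emitted word exactly} (the descent- and ascent-loops reproduce the removed push- and pop-phase letters), until the depth is $\le h$; the result is a height-bounded $\hat{\cA}$-run emitting the original word. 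I expect the main obstacle to be making the flattening simultaneous and well-founded across \emph{all} nesting chains at once: cutting an arch reshapes the sub-runs living inside it, so one must order the cuts (e.g.\ outermost-first, or by a depth potential) and argue that the emitted word is preserved up to $\subword$ at every step, all while the summary gadget that reproduces a removed arch must itself already be height-bounded. It is precisely this circular dependency between ``the gadget that emits an arch'' and ``the arch that the gadget flattens'' that the fixpoint construction and the polynomial pigeonhole bound must resolve cleanly.
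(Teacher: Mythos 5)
Your overall architecture---pumpable arches, a polynomial pigeonhole bound, a flattening direction and a re-balancing direction related by duality---is the same as the paper's, but the construction as you describe it is unsound, and the gap sits exactly where you deviate from the paper. You make the descent-loop and ascent-loop \emph{independent self-loops with no net stack change}. Nothing then forces a run of $\hat{\cA}$ that fires a descent-loop to ever reach $q$ and fire the matching ascent-loop; it may instead continue along a path that is only enabled because the stack was \emph{not} pumped. Concretely, take $\cA$ with states $q_0,p,q,q_f$, stack symbols $Z,A$, and edges $q_0\autstep[\varepsilon|Z]p$, $p\autstep[\ltr{a}|A]p$, $p\autstep[\varepsilon|\varepsilon]q$, $q\autstep[\ltr{b}|\bar{A}]q$, $q\autstep[\varepsilon|\bar{Z}]q_f$, and $p\autstep[\ltr{c}|\bar{Z}]q_f$. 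Then $\langof{\cA}=\{\ltr{a}^n\ltr{b}^n\mid n\ge 0\}\cup\{\ltr{c}\}$, and the arch with summary $(p,A,q)$ is pumpable with $\alpha=\ltr{a}$, $\beta=\ltr{b}$. In your $\hat{\cA}$ the descent-loop at $p$ emits $\ltr{a}$'s while leaving the stack unchanged, so $\hat{\cA}$ accepts $\ltr{a}\ltr{c}$: fire the loop once, then take the $\ltr{c}$-edge, which pops $Z$ and accepts. But $\ltr{a}\ltr{c}\notin\dcl{\langof{\cA}}$, so $\dcl{\langof{\hat{\cA}}}\ne\dcl{\langof{\cA}}$. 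Your unbalanced-pumping observation is true as stated, but it only covers words of the shape $u\alpha^i\mu\beta^jz$ in which $\mu$ really is a run from $p$ to $q$ carried out \emph{above} the pumped stack; free-standing self-loops do not force accepted words to have this shape, and your re-balancing step then has no matched pair to re-balance. The paper enforces the coupling cheaply: the descent edge \emph{pushes} a fresh marker symbol $[p,q]$, and the only edge that can pop $[p,q]$ is the ascent edge at state $q$, so acceptance by empty stack forces every descent firing to be paid for by a matching ascent firing---at a stack-height cost of $1$ per arch instead of the arch's true height.

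The second gap is the one you flagged yourself as the main obstacle: you want the loops to emit \emph{exactly} the arch's push/pop-phase outputs, which form a context-free language, and you are then driven to a circular fixpoint in which the gadget that emits an arch must itself be height-bounded by the flattening it is meant to implement. The paper dissolves this circularity with an alphabet abstraction. Let $M_{p,q}(\cA)=\{u\mid\exists v\colon(p,\varepsilon)\autsteps[u](p,v)\text{ and }(q,v)\autsteps(q,\varepsilon)\}$ and let $\Delta_{p,q}(\cA)\subseteq\Sigma$ be the set of letters occurring in some word of $M_{p,q}(\cA)$; this alphabet is computable in polynomial time by emptiness tests on polynomial-size PDAs for $M_{p,q}(\cA)\cap\Sigma^*a\Sigma^*$. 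The descent edge simply reads all of $\Delta_{p,q}(\cA)^*$ (dually, the ascent edge reads $\Delta_{q,p}(\bar{\cA})^*$). This regular over-approximation is large enough that flattening still preserves the word \emph{exactly}---every letter emitted in a cut descent segment lies in $\Delta_{p,q}(\cA)$---and small enough to stay inside the downclosure: any word in $\Delta_{p,q}(\cA)^*$ is a subword of a concatenation $\tilde{r}_1\cdots\tilde{r}_k$ of genuine words $\tilde{r}_i\in M_{p,q}(\cA)$, and since each $\tilde{r}_i$ pushes a stack segment that can later be dissolved through $q$, a run of $\hat{\cA}$ using the new edges embeds, subword-wise, into a run of $\cA$; this is the induction on the number of new-edge firings in \cref{samedc}. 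So no self-referential gadget is needed: the slack granted by the downclosure is precisely what makes a regular over-approximation of the arch outputs harmless, and with the marker-push fix plus this abstraction your construction becomes the paper's.
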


As a pushdown automaton with bounded stack can be simulated by a 1-bounded Petri net
(essentially, by keeping places for each position in the stack), we get the following
corollary and also the promised $\EXPSPACE$ upper bound.

\begin{cor}\label[cor]{pda2pn}
  Given a pushdown automaton $\cA$, one can construct in polynomial time
  a $1$-bounded Petri net $N$ with $\langof{N}=\dcl{\langof{\cA}}$.
\end{cor}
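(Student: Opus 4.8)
The plan is to combine \cref{pda2bdd-pda} with a direct simulation of a bounded-stack pushdown automaton by a $1$-bounded Petri net. First I would invoke \cref{pda2bdd-pda} to obtain, in polynomial time, a pushdown automaton $\hat{\cA}$ with $\dcl{\langof{\hat{\cA}}}=\dcl{\langof{\cA}}$ and $\langof{\hat{\cA}}=\langof[h]{\hat{\cA}}$ for some $h$ polynomial in the size of $\cA$. Since the target equality is $\langof{N}=\dcl{\langof{\cA}}=\dcl{\langof{\hat{\cA}}}$, it suffices to build a $1$-bounded Petri net recognizing the downclosure of the bounded-stack language $\langof[h]{\hat{\cA}}$, so the whole construction reduces to two steps: encoding bounded-stack runs, and then passing to the downclosure.

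The core step is to encode a run of $\hat{\cA}$ of stack height at most $h$ in the places of a Petri net $N'$ so that $\langof{N'}=\langof[h]{\hat{\cA}}$. I would use three groups of places: one place $p_q$ per state $q$, exactly one of which is marked to record the current control state; one place per pair $(i,\gamma)$ with $1\le i\le h$ and $\gamma\in\Gamma$, marked iff stack cell $i$ currently holds $\gamma$; and height markers $H_0,\dots,H_h$, exactly one marked to record the current stack height. After normalizing $\hat{\cA}$ so that every edge reads a single letter or $\varepsilon$ (a standard polynomial preprocessing of the edge languages $R$), each edge is simulated by transitions parameterized by the current height $k$: a push of $\gamma$ moves the token from $H_k$ to $H_{k+1}$ and marks $(k{+}1,\gamma)$; a pop of $\gamma$ requires tokens on $H_k$ and $(k,\gamma)$, clears $(k,\gamma)$, and moves $H_k$ to $H_{k-1}$; a stack-preserving edge merely moves the control token. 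Since $h$, $\card{Q}$, and $\card{\Gamma}$ are all polynomial, this yields polynomially many places and transitions. Taking the initial and final markings to put the control token on $q_0$ resp.\ $q_f$ with empty stack (only $H_0$ marked), the maintained invariant---exactly one marked state place, exactly one marked height place, and exactly one marked cell place per occupied level with all higher levels empty---shows every reachable marking has at most one token per place, so $N'$ is $1$-bounded. Its accepting runs are then in bijection with the accepting runs of $\hat{\cA}$ of stack height at most $h$, giving $\langof{N'}=\langof[h]{\hat{\cA}}=\langof{\hat{\cA}}$.

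Finally I would pass to the downclosure by adding, for every transition of $N'$ carrying an input label $a\in\Sigma$, a parallel copy with identical input and output markings but label $\varepsilon$. Relabelling a transition to $\varepsilon$ does not alter its marking effect, so the reachable markings of the resulting net $N$ coincide with those of $N'$; hence $N$ is still $1$-bounded, and since each letter may now be optionally skipped we get $\langof{N}=\dcl{\langof{N'}}=\dcl{\langof{\hat{\cA}}}=\dcl{\langof{\cA}}$, as required. I expect the only delicate point to be the bookkeeping that keeps the simulation $1$-bounded---in particular, making the push and pop transitions height-indexed so that a provably empty cell place is used on each push and cleared on each pop---rather than any conceptual difficulty; the polynomial bound on $h$ supplied by \cref{pda2bdd-pda} is precisely what makes this finite-state encoding of the stack affordable in polynomial time.
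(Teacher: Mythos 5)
Your proof is correct and takes essentially the same route as the paper: both invoke \cref{pda2bdd-pda}, normalize $\hat{\cA}$ so each edge reads at most one letter, and encode bounded-stack configurations with state places, level-indexed stack-cell places $(i,\gamma)$, and height-marker places, with the same push/pop/no-change transitions and the same $1$-boundedness invariant. The only (immaterial) difference is where the downclosure is taken: the paper adds $\varepsilon$-copies of the edges of the normalized pushdown automaton \emph{before} translating to a Petri net, whereas you add $\varepsilon$-labelled copies of the transitions \emph{after} the translation.
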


We now prove Theorem~\ref{pda2bdd-pda}.
The \emph{augmented automaton}
$\hat{\cA}=(Q,\Sigma,\hat{\Gamma},\hat{E},q_0,q_f)$ is defined as
follows. We first compute for any $p,q\in Q$ the subalphabet $\Delta_{p,q}(\cA)\subseteq\Sigma$ with 
\[ \Delta_{p,q}(\cA)=\{a\in\Sigma \mid \exists u\in M_{p,q}(\cA),~|u|_a\ge 1 \},\]
where $M_{p,q}(\cA)=\{ u\in\Sigma^*\mid \exists v\in\Gamma^*\colon
(p,\varepsilon)\autsteps[u](p,v),~ (q,v)\autsteps(q,\varepsilon)\}$.
Note that it is easy to construct in polynomial time a pushdown
automaton $\cA_{p,q}$ for the language $M_{p,q}(\cA)$: \\
$\cA_{p,q}$ has a set $ Q_p \cup Q_q$ of states consisting of
two disjoint copies of the states of $\cA$. The transitions between two states in $Q_p$ are inherited
from $\cA$ while for two states in $Q_q$ we replace the input on any
transition by $\varepsilon$ but retain the stack operations from $\cA$.
The start state is $p \in Q_p$ and the final state is $q \in Q_q$.
There is an $\varepsilon$-transition from $p \in Q_p$ to $q \in Q_q$. This concludes the construction of
$\cA_{p,q}$.

 Since
$a\in\Delta_{p,q}(\cA)$ iff
$M_{p,q}(\cA)\cap \Sigma^*a\Sigma^*\ne\emptyset$, we can decide in
polynomial time whether a given $a\in\Sigma$ belongs to
$\Delta_{p,q}(\cA)$ by checking the PDA for $M_{p,q}(\cA)\cap
\Sigma^*a\Sigma^*\ne\emptyset$ obtained by product
construction for emptiness. Thus, we can compute $\Delta_{p,q}(\cA)$
in
polynomial time.
We construct $\hat{\cA}$ as follows. For any $p,q\in Q$, we introduce
a fresh stack symbol $[p,q]$ and then we add edges
\begin{align}
  &p\autstep[\Delta_{p,q}(\cA)^*|[p,q]] p, &&q\autstep[\Delta_{q,p}(\bar{\cA})^*|\overline{[p,q]}] q. \label{newedges-appendix}
\end{align}
The following \lcnamecref{samedc} tells us that $\langof{\hat{\cA}}$ has the same downward closure as $\lang{\cA}$.

\begin{lem}\label[lem]{samedc}
$\langof{\cA}\subseteq\langof{\hat{\cA}}\subseteq\dcl{\langof{\cA}}$.
\end{lem}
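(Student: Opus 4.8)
The plan is to prove the two inclusions separately; the first is immediate and the second carries all the content. For $\langof{\cA}\subseteq\langof{\hat{\cA}}$, I would simply note that $\hat{\cA}$ arises from $\cA$ by \emph{adding} the fresh stack symbols $[p,q]$ and the edges \eqref{newedges-appendix}, deleting nothing; thus $\hat{E}\supseteq E$ and every accepting run of $\cA$ is verbatim an accepting run of $\hat{\cA}$.

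For $\langof{\hat{\cA}}\subseteq\dcl{\langof{\cA}}$, I would fix an accepting run $r$ of $\hat{\cA}$ on a word $w$ and argue by induction on the number of occurrences of the augmented edges \eqref{newedges-appendix} in $r$: each step produces an accepting run of $\hat{\cA}$ on some $w'$ with $w\subword w'$ using strictly fewer such occurrences. Iterating drives the count to zero, leaving an accepting $\cA$-run on a word $W$ with $w\subword W$; then $W\in\langof{\cA}$ and downward-closedness of $\dcl{\cdot}$ gives $w\in\dcl{\langof{\cA}}$. The inductive step removes one \emph{innermost} matched pair: since the symbols $[p,q]\notin\Gamma$ obey the stack (LIFO) discipline, I pick an augmented push whose matching pop has no augmented edge strictly between them. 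Right after this push the configuration is $(p,x[p,q])$; because $\cA$-edges can only pop symbols of $\Gamma$, the intervening $\cA$-only segment never touches $[p,q]$ and is therefore a \emph{positive} run $(p,\varepsilon)\autsteps(q,\varepsilon)$ of $\cA$ lifted onto $x[p,q]$, reading some $u_{\mathrm{mid}}$ and ending at $(q,x[p,q])$ just before the pop.

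The crux is to replace this excursion — push $[p,q]$ reading $w_1\in\Delta_{p,q}(\cA)^*$, the middle $u_{\mathrm{mid}}$, then pop $[p,q]$ reading $w_2\in\Delta_{q,p}(\bar{\cA})^*$ — by a genuine $\cA$-computation $(p,x)\autsteps(q,x)$ reading a superword of $w_1\,u_{\mathrm{mid}}\,w_2$. Unwinding the definitions, both alphabets record the \emph{same} kind of loop of $\cA$: a positive push $(p,\varepsilon)\autsteps(p,v_\ell)$ paired with a pop $(q,v_\ell)\autsteps(q,\varepsilon)$; a letter lies in $\Delta_{p,q}(\cA)$ iff it occurs in the push phase of such a loop, and, via $\langof{\bar{\cA}}=\rev{\langof{\cA}}$, a letter lies in $\Delta_{q,p}(\bar{\cA})$ iff it occurs in the pop phase. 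I would pick one loop per letter of $w_1$ and of $w_2$ and form a single stack word $V$ by concatenating their push phases, ordering the loops so the push letters spell $w_1$ as a subword on the way up and the pop letters spell $w_2$ as a subword on the way down. Push phases start from the empty stack, hence are positive and stack by lifting, giving $(p,x)\autsteps(p,xV)$ reading a superword of $w_1$; the lifted middle gives $(p,xV)\autsteps(q,xV)$; and the pop phases then tear $V$ down to give $(q,xV)\autsteps(q,x)$ reading a superword of $w_2$.

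I expect the \textbf{main obstacle} to be exactly this last step: an individual pop run $(q,v_\ell)\autsteps(q,\varepsilon)$ is \emph{not} positive (it descends to the empty stack), so it cannot be lifted over lower content directly, and naive composition risks popping into the lower layers of $V$. The clean remedy is to \emph{argue by duality}: each pop phase is the reversal of a push phase of $\bar{\cA}$, which \emph{is} positive. I would compose these dual push phases in $\bar{\cA}$ by the same lifting argument to obtain $(q,\varepsilon)\autstepsh{\bar{\cA}}(q,\rev{V})$ reading a superword of $w_2$, then transport this back through the run correspondence between $\cA$ and $\bar{\cA}$ to the required $\cA$-computation $(q,V)\autsteps(q,\varepsilon)$ (lifted over $x$). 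The remaining work — checking that reversal and lifting preserve the $\subword$-relations and that the chosen loop ordering makes $w_1$ and $w_2$ appear as subwords on the two sides — is routine bookkeeping, but it is where the details must be handled with care.
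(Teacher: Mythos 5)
Your proof is correct and is essentially the paper's: the same trivial first inclusion, the same induction on the number of executions of the augmented edges, the same decomposition of an accepting run around a matched push/pop pair of some $[p,q]$, and the same crux---replacing that pair by stacked witness loops, ordered so that the push witnesses contain $w_1$ on the way up and the pop witnesses contain $w_2$ on the way down, with junk elsewhere. Restricting to an \emph{innermost} pair is a harmless variation; the paper takes an arbitrary matched pair and instead uses positivity of the enclosed segment $\rho_3$ to replace the stack prefix underneath it.

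Your only real divergence is the duality detour, and the ``main obstacle'' motivating it does not exist. In this PDA model the step relation is invariant under prepending stack content: if $(q_1,w)\autstep[u](q_2,w')$ then also $(q_1,xw)\autstep[u](q_2,xw')$ for every $x$, since each of the three cases $w'=w$, $w'=wv$, and $w=w'\gamma$ is preserved when $x$ is prepended. Hence \emph{every} run, positive or not, lifts over arbitrary lower stack content; a lifted pop run from $(q,xVy_j)$ to $(q,xV)$ cannot ``pop into'' $xV$, because every configuration of the lifted run carries $xV$ as a prefix by construction. Positivity is needed only for the reverse operations---deleting or replacing the prefix \emph{below} a given run (as for $\rho_3$ here, or to extract membership in $M_{p,q}(\hat{\cA})$ in \cref{boundedstack})---never for lifting. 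The paper accordingly just concatenates the lifted pop runs directly in $\rho'_4$. Your duality excursion is sound but buys you nothing, and it carries a small bookkeeping slip: with the paper's definition of $\bar{\cA}$, dualizing a run leaves stack contents unchanged (only the direction and the word are reversed), so a run $(q,V)\autsteps(q,\varepsilon)$ of $\cA$ corresponds to a run of $\bar{\cA}$ from $(q,\varepsilon)$ to $(q,V)$, not to $(q,\rev{V})$.
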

\begin{proof}
Since the inclusion $\langof{\cA}\subseteq\langof{\hat{\cA}}$ is
obvious, we prove $\langof{\hat{\cA}}\subseteq\dcl{\langof{\cA}}$.  We
proceed by induction on the number $m$ of executions of new edges
(i.e. those from \cref{newedges-appendix}).
More specifically, we show that if $u\in\langof{\hat{\cA}}$ is
accepted using $m$ executions of new edges, then there is a
$u'\in\langof{\hat{\cA}}$ such that $u\subword u'$ and $u'$ is accepted using $<m$ executions of
new edges.

Suppose $u$ is accepted using a run $\rho$ with $m>0$ executions of
new edges.  Then $\rho$ must apply one edge
$p\autstep[\Delta_{p,q}(\cA)|[p,q]]p$ and thus also the edge
$q\autstep[\Delta_{q,p}(\bar{\cA})|\overline{[p,q]}] q$ to remove the
letter $[p,q]$ from the stack. Thus, $\rho$ decomposes as
$\rho=\rho_1\rho_2\rho_3\rho_4\rho_5$, where $\rho_2$ and $\rho_4$ are
the executions of the new edges. Let $u=u_1u_2u_3u_4u_5$ be the
corresponding decomposition of $u$.

The run $\rho_1$ must end in state $p$ and with some stack content
$w\in\Gamma^*$.  Then, $\rho_3$ is a run from $(p,w[p,q])$ to
$(q,w[p,q])$ and $\rho_5$ is a run from $(q,w)$ to $(q_f,\varepsilon)$
with $q_f\in F$.

Since $u_2$ and $u_4$ are read while executing the new edges, we have
$u_2\in\Delta_{p,q}(\cA)^*$ and $u_4\in\Delta_{q,p}(\bar{\cA})^*$. We
can therefore write $u_2=r_1\cdots r_k$ and $u_4=s_1\cdots s_\ell$
with $r_1,\ldots,r_k\in\Delta_{p,q}(\cA)$ and
$s_1,\ldots,s_\ell\in\Delta_{q,p}(\bar{\cA})$. By definition, this
means for each $1\le i\le k$, there is a word
$\tilde{r}_i\in M_{p,q}(\cA)$ that contains the letter
$r_i$. Likewise, for every $1\le i\le \ell$, there is a
$\tilde{s}_i\in M_{q,p}(\bar{\cA})$ that contains $s_i$.

Since $\tilde{r}_i\in M_{p,q}(\cA)$ and $\tilde{s}_j\in M_{q,p}(\bar{\cA})$
for $1\le i\le k$ and $1\le j\le \ell$, there are words $x_i$ and $y_j$ in $\Gamma^*$
so that
\begin{align*}
  &(p,\varepsilon)\autsteps[\tilde{r}_i] (p,x_i) &&\text{and} && (q,x_i)\autsteps (q,\varepsilon)\\
  &(p,\varepsilon)\autsteps (p,y_j) &&\text{and} && (q,y_i)\autsteps[\tilde{s}_j] (q,\varepsilon)
\end{align*}
We can therefore construct a new run $\rho'=\rho_1\rho'_2\rho'_3\rho'_4\rho_5$, where
\begin{align*}
  &\rho'_2: (p,w)\autsteps[\tilde{r}_1]\cdots\autsteps[\tilde{r}_k] (p,wx_1\cdots x_k) \autsteps \cdots \autsteps (p,wx_1\cdots x_ky_\ell\cdots y_1) \\
  &\rho'_4: (q,wx_1\cdots x_ky_\ell\cdots y_1) \autsteps[\tilde{s}_1]\cdots \autsteps[\tilde{s}_\ell](q,wx_1\cdots x_k)\autsteps\cdots\autsteps (q,w).
\end{align*}
Moreover, since $\rho_3$ is a positive run from $(p,w)$ to $(q,w)$, we obtain $\rho'_3$ from $\rho_3$
by replacing the prefix $w$ of every stack by $wx_1\cdots x_ky_1\cdots y_\ell$.

Then $\rho'$ reads some word
$u_1\tilde{r}_1\cdots \tilde{r}_kfu_3\tilde{s}_1\cdots\tilde{s}_\ell
gu_5$ for $f,g\in\Sigma^*$. Note that since $r_i$ occurs in
$\tilde{r}_i$ and $s_i$ occurs in $\tilde{s}_j$, we have
$u=u_1u_2u_3u_4u_5\subword u_1\tilde{r}_1\cdots
\tilde{r}_kfu_3\tilde{s}_1\cdots\tilde{s}_\ell gu_5$. \qedhere
\end{proof}

We now show that every word in $\hat{\cA}$ is accepted by a run with
bounded stack height. %

\begin{lem}\label[lem]{boundedstack}
  $\langof{\hat{\cA}}=\langof[h]{\hat{\cA}}$, where $h=2 |Q|^2$.
\end{lem}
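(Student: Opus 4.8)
The plan is to prove the only nontrivial inclusion $\langof{\hat{\cA}}\subseteq\langof[h]{\hat{\cA}}$ with $h=2|Q|^2$ by a minimality argument: fix a word $u\in\langof{\hat{\cA}}$ and take an accepting run $\rho$ of $u$ that is minimal, first in stack height and then in length. If its stack height $H$ satisfies $H\le 2|Q|^2$ we are done, so I would assume $H>2|Q|^2$ and exhibit another accepting run of the \emph{same} word $u$ with strictly smaller height, contradicting minimality. The whole point of the new edges of \cref{newedges-appendix} is that they let us read arbitrary words over the $\Delta$-sets while pushing a single symbol, so they are exactly the tool for collapsing a tall stack segment without changing the word being read.

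First I would fix a time $t_{\max}$ at which $\rho$ reaches height $H$ and run the standard surface analysis at that peak. For each level $\ell\in\{0,\dots,H\}$ let $\mu_\ell$ (resp.\ $\nu_\ell$) be the last time before $t_{\max}$ (resp.\ the first time after $t_{\max}$) at which the stack has height exactly $\ell$, and let $P_\ell,Q_\ell$ be the states of $\rho$ at those two moments. Between $\mu_\ell$ and $\nu_\ell$ the height stays strictly above $\ell$, so the bottom $\ell$ symbols are frozen; hence $\rho$ on $[\mu_\ell,\nu_\ell]$ is a genuine excursion from $(P_\ell,w_\ell)$ back to $(Q_\ell,w_\ell)$ reading a word $y_\ell$, and these excursions are nested: for $\ell_1<\ell_2$ one has $\mu_{\ell_1}<\mu_{\ell_2}\le t_{\max}\le\nu_{\ell_2}<\nu_{\ell_1}$.

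The factor two enters in the counting: restricting to the even levels $0,2,4,\dots$ yields more than $|Q|^2$ of them once $H>2|Q|^2$, so by pigeonhole two even levels $\ell_1<\ell_2$ (hence $\ell_2-\ell_1\ge 2$) carry the same pair $(P,Q):=(P_{\ell_1},Q_{\ell_1})=(P_{\ell_2},Q_{\ell_2})$. By nesting I decompose the outer excursion as $y_{\ell_1}=y_1\,y_2\,y_3$, where $y_1$ is read while climbing from level $\ell_1$ to level $\ell_2$ (ending in state $P$), $y_2=y_{\ell_2}$ is the inner excursion, and $y_3$ is read on the way back down to level $\ell_1$ (ending in state $Q$). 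Relocating the climb and the descent to an empty base gives $(P,\varepsilon)\autsteps[y_1](P,s)$ and $(Q,s)\autsteps[y_3](Q,\varepsilon)$, where $s$ is the height-$(\ell_2-\ell_1)$ word built during the climb. These two runs are precisely the witnesses that every letter of $y_1$ lies in $\Delta_{P,Q}(\cA)$ and, by duality applied to $\bar{\cA}$, that every letter of $y_3$ lies in $\Delta_{Q,P}(\bar{\cA})$. Consequently the new edges can read $y_1$ and $y_3$ verbatim, and I replace the outer excursion by: the edge $P\autstep[\Delta_{P,Q}(\cA)^*|[P,Q]]P$ reading $y_1$; then the inner excursion $y_2$ relocated on top of the single fresh symbol $[P,Q]$ (legitimate because an excursion never inspects the frozen prefix); then $Q\autstep[\Delta_{Q,P}(\bar{\cA})^*|\overline{[P,Q]}]Q$ reading $y_3$. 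This reads the same $u$ but replaces a height increment of $\ell_2-\ell_1\ge 2$ by a single symbol, strictly lowering $H$ and giving the desired contradiction.

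The step I expect to be the main obstacle is justifying that the letters of $y_1$ and $y_3$ genuinely belong to $\Delta_{P,Q}(\cA)$ and $\Delta_{Q,P}(\bar{\cA})$: these sets are defined via $\cA$, whereas the climbing and descending segments live in $\hat{\cA}$ and may themselves use new edges, which read letters drawn from other $\Delta_{r,s}(\cA)$. I would resolve this with a local version of \cref{samedc}, proved by induction on the number of new-edge applications inside the two segments: whenever such a segment uses a new edge $r\autstep[\Delta_{r,s}(\cA)^*|[r,s]]r$ reading a letter $a$, the definition of $\Delta_{r,s}(\cA)$ supplies an $\cA$-run through $M_{r,s}(\cA)$ containing $a$, which can be spliced in to remove that new edge while still witnessing each read letter as a letter of some word in $M_{P,Q}(\cA)$. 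The remaining points are routine: the relocation of excursions above an untouched stack prefix is immediate from the semantics, and one must only check that each shortcut strictly decreases height so that the minimality argument terminates.
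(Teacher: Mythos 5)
Your overall strategy is the same as the paper's: fix a peak, mark for each even level the last time before and the first time after the peak at that exact height, pigeonhole over pairs of states to find two even levels $\ell_1<\ell_2$ carrying the same pair $(P,Q)$, and replace the climb and the descent by single executions of the new edges from \cref{newedges-appendix}. Your handling of what you call the main obstacle is also the paper's: the paper notes that, exactly as in the proof of \cref{samedc}, one has $M_{p,q}(\cA)\subseteq M_{p,q}(\hat{\cA})\subseteq\dcl{M_{p,q}(\cA)}$, which immediately gives $\Delta_{p,q}(\hat{\cA})=\Delta_{p,q}(\cA)$ and, by duality, $\Delta_{q,p}(\bar{\hat{\cA}})=\Delta_{q,p}(\bar{\cA})$; your splicing induction proves the same identity. (Your pigeonhole count is in fact slightly tidier than the paper's, since including level $0$ yields $|Q|^2+1$ pigeons for $|Q|^2$ holes.)

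The one step that fails as written is the termination measure. You claim the surgery ``strictly lowers $H$'', and you flag checking this as routine; it is false in general. The shortcut only modifies the run inside the outer excursion $[\mu_{\ell_1},\nu_{\ell_1}]$, and everything outside it is untouched. If the run attains its maximal height $H$ at a second peak lying outside that excursion --- for instance a run whose height profile goes $0\to H\to 0\to H\to 0$, with $t_{\max}$ chosen in the second hump, so that $\mu_0$ lies in the valley between the humps --- then the modified run still has height exactly $H$, and the contradiction with height-minimality evaporates. What the surgery does guarantee is that the height does not increase while the \emph{length} strictly decreases: since $\ell_2-\ell_1\ge 2$ and each step changes the height by at most one, the climb and the descent each take at least two steps, and each is replaced by a single edge. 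So the repair is immediate, and it is exactly what the paper does: take $\rho$ of minimal length (or keep your lexicographic order, but derive the contradiction in the second coordinate rather than the first); the shortcut then yields a strictly shorter accepting run of the same word $u$, contradicting minimality, and hence every minimal-length accepting run has stack height at most $2|Q|^2$.
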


Before we prove \cref{boundedstack}, we need another observation.
Just like \cref{samedc}, one can show that for any $p,q\in Q$, we have
$M_{p,q}(\cA)\subseteq M_{p,q}(\hat{\cA})\subseteq\dcl{M_{p,q}(\cA)}$
and in particular
\begin{align}
  &\Delta_{p,q}(\cA)=\Delta_{p,q}(\hat{\cA}) && \Delta_{q,p}(\bar{\cA})=\Delta_{q,p}(\bar{\hat{\cA}}), \label{samealphabets}
\end{align}
where the second identity follows from the first: Duality yields
$\Delta_{q,p}(\bar{\cA})=\Delta_{q,p}(\hat{\bar{\cA}})$ and since
$\hat{\bar{\cA}}$ and $\bar{\hat{\cA}}$ are isomorphic (i.e. they are
the same up to a renaming of stack symbols), we have
$\Delta_{q,p}(\hat{\bar{\cA}})=\Delta_{q,p}(\bar{\hat{\cA}})$.

We now prove \cref{boundedstack}.  Let $u\in\langof{\hat{\cA}}$. We
show that any minimal length accepting run $\rho$ reading $u$ must
have stack height $\le h$ and hence $u\in\langof[h]{\hat{\cA}}$.

Suppose the stack height of $\rho$ is larger than $h=2|Q|^2$. 
\paragraph{Claim:}$\rho$ decomposes into runs
$\rho_1,\rho_2,\rho_3,\rho_4,\rho_5$
reading $u_1,u_2,u_3,u_4,u_5$, respectively, so that there are
$p,q\in Q$ and $w\in\Gamma^*$ with
\begin{itemize}
\item $\rho_2$ is a positive run  from $(p,w)$ to $(p,wv)$ of length $\ge 2$
\item $\rho_3$ is a positive run from $(p,wv)$ to $(q,wv)$
\item $\rho_4$ is a dually positive run from $(q,wv)$ to $(q,w)$
\end{itemize}
Let $c_h$ be a configuration along $\rho$ with stack height at least 
$2|Q|^2+1$. Then there exist $|Q|^2$ configurations $c_2 \autsteps
c_4 \autsteps \cdots \autsteps c_{2|Q|^2} \autsteps c_h$
along $\rho$
such that $c_{2i}$ is the last time that the stack height is $2i$ before visiting $c_h$. 
Symmetrically, we have $c_h \autsteps c'_{2|Q|^2} \autsteps
\cdots c'_4 \autsteps c'_2$ where $c'_{2i}$ is the first occurrence
of stack height $2i$ after visiting $c_h$. Clearly by definition the run between
consecutive $c_{2i}$ (resp. $c'_{2i}$) configurations is positive 
(resp. dually positive). Additionally, the length of the run between them
must be at least two, because the stack heights differ by two. Considering the pair of states at each $c_{2i},c'_
{2i}$, there are $|Q|^2$ possibilities. Hence there must exist indices
$2i < 2j$ such that the $c_{2i}$ and $c_{2j}$ have the same state $p$
and
$c'_{2i}$ and $c'_{2j}$ have the same state $q$. It is now clear that
$\rho_2\colon c_{2i} \autsteps c_{2j}$, $\rho_3 \colon c_{2j} \autsteps
c'_{2j}$ and $\rho_4\colon c'_{2j} \autsteps c'_{2i}$ satisfy the
conditions of the claim.

These conditions imply that $u_2\in M_{p,q}(\hat{\cA})$ and
$u_4\in M_{q,p}(\bar{\hat{\cA}})$.  Therefore, we have
$u_2\in\Delta_{p,q}(\hat{\cA})^*=\Delta_{p,q}(\cA)^*$ and
$u_4\in \Delta_{q,p}(\bar{\hat{\cA}})^*=\Delta_{q,p}(\bar{\cA})^*$ by \cref{samealphabets}.

We obtain the run $\rho'$ from $\rho$ as follows. We replace $\rho_2$
by a single execution of the edge
$p\autstep[\Delta_{p,q}(\cA)^*|[p,q]]p$ reading $u_2$. Moreover, we
replace $\rho_4$ by a single execution of the edge
$q\autstep[\Delta_{q,p}(\bar{\cA})^*|\overline{[p,q]}] q$. Then
$\rho'$ is clearly a run reading $u=u_1u_2u_3u_4u_5$. Furthermore,
since $\rho_2$ has length $\ge 2$, but the single edge used instead in
$\rho'$ only incurs length $1$, $\rho'$ is strictly shorter than
$\rho$. This is in contradiction to the minimal length of $\rho$.
This completes the proof of Lemma~\ref{boundedstack} and thus also Theorem~\ref{pda2bdd-pda}.

\begin{remark}
  The augmented automaton $\hat{\cA}$ yields a very simple
  construction of a finite automaton (of exponential size) for
  $\dcl{\langof{\cA}}$. First, it is easy to construct a finite
  automaton for $\langof[h]{\hat{\cA}}$. Then, by introducing
  $\varepsilon$-edges, we get a finite automaton for
  $\dcl{\langof[h]{\hat{\cA}}}$, which, by \cref{samedc,boundedstack},
  equals $\dcl{\langof{\cA}}$.
\end{remark}

It is now straightforward to construct a polynomial size $1$-bounded
Petri net $N=(\Sigma,S,T,\partial_0,\partial_0,\mmap_0,\mmap_f)$ with
$\langof{N}=\langof[h]{\hat{\cA}}$, thus proving Corollary~\ref{pda2pn}. First, by adding states, we turn
$\hat{\cA}$ into a pushdown automaton
$\cA'=(Q',\Sigma,\hat{\Gamma},E',q_0,q_f)$, where every edge reads at
most one letter, i.e. every edge $p\autstep[R|v]q$ in $\cA'$ has
$R=\{x\}$ for some $x\in\Sigma\cup\{\varepsilon\}$ (this is done by
`pasting' the automaton for $R$ in place of the edge). Moreover, we
add
$\varepsilon$-edges, so that for every edge $p\autstep[\{x\}|v]q$, we
also have an edge $p\autstep[\{\varepsilon\}|v]q$. Then clearly
$\langof[h]{\cA'}=\dcl{\langof[h]{\hat{\cA}}}=\dcl{\langof{\cA}}$.

The net $N$ has a place $p$ for each state $p$ of $\cA'$ and for each
$i\in\{1,\ldots,h\}$ and $\gamma\in\hat{\Gamma}$, it has a place
$(i,\gamma)$. Moreover, for each $i\in\{0,\ldots,h\}$, it has a place
$s_i$.  Here, the idea is that a configuration $c=(p,\gamma_1\cdots \gamma_n)$
of $\cA'$ with $\gamma_1,\ldots,\gamma_n\in\hat{\Gamma}$ is represented as
a marking $\mmap_c=\multi{p,(1,\gamma_1),\ldots,(n,\gamma_n),s_n}$.  We call a
marking of this form a \emph{stack marking} and will argue that every
reachable marking in $N$ is a stack marking. The transitions in $N$
correspond to edges in $\cA'$. For each edge $p\autstep[\{x\}|v] q$
in $\hat{\cA}$, we add the following transitions:
\begin{itemize}
\item if $v=\bar{\gamma}$ for some $\gamma\in\hat{\Gamma}$, then we
  have for every $1\le n\le h$ a transition $t$ with
  $\partial_0(t)=\multi{p,(n,\gamma),s_n}$,
  $\partial_1(t)=\multi{q,s_{n-1}}$, and $\lambda(t)=x$.
\item if $v\in\hat{\Gamma}$, then for every $0\le n<h$, we add
  a transition $t$ with $\partial_0(t)=\multi{p,s_n}$,
  $\partial_1(t)=\multi{q,(n+1,v),s_{n+1}}$, and $\lambda(t)=x$.
\item if $v=\varepsilon$, then we add a transition $t$ with
  $\partial_0(t)=\multi{p}$, $\partial_1(t)=\multi{q}$, and $\lambda(t)=x$.
\end{itemize}
Then clearly every reachable marking is a stack marking and we have
$c\autstep[x]c'$ for configurations $c,c'$ of $\cA'$ of stack height
$\le h$ if and only if $\mmap_c\autstep[x]\mmap_{c'}$.  Therefore, if
we set $\mmap_0=\multi{q_0,s_0}$ and $\mmap_f=\multi{q_f,s_0}$ as
initial and final marking, we have
$\langof{N}=\langof[h]{\cA'}=\dcl{\langof{\cA}}$. 
This completes the proof of \cref{pda2pn}.

\bibliographystyle{alphaurl}
\bibliography{bibliography}

\appendix
\section{Compiling away internal actions}
\label{sec:internal-actions}

We have commented in the examples of Section~\ref{sec:preliminaries}
that internal actions and internal updates of the global state are useful
in modeling asynchronous programs from their programming language syntax.
Indeed, we note that the definition of asynchronous programs in~\cite{GantyM12} additionally uses a separate alphabet of \emph{internal actions},
in addition to the alphabet of handler posts.
We show how a model of asynchronous programs with internal actions
can be reduced to our, simpler, model.

Let $\cC$ be a language class over an alphabet $\Sigma$ of handler
names.
The definition of asynchronous programs with internal actions, as used by~\cite{SenV06,JhalaM07,GantyM12}, 
is as follows.\footnote{
  The language class $\cC$ in~\cite{GantyM12} is fixed to be the class of context
  free languages. Their definition generalizes to any language class.
}
An \textit{asynchronous program over $\cC$} with \emph{internal actions} (aka $\AP$ over $\cC$ with
internal actions) is a tuple
 \(\ap = (D,
\Sigma, \Sigmai,
\process,
R, d_0, \mmap_0),\) 
where $D$, $\Sigma$, $d_0$, $\mmap_0$ are as in Definition~\ref{defn:ap},
$\Sigmai$ is an alphabet of \textit{internal actions} disjoint from $\Sigma$,
the set $\process=(L_\sigma)_{\sigma \in \Sigma}$ consists of languages from
$\cC$ (one for each handler $\sigma \in \Sigma)$, and
 \(R=(D,\Sigma \cup \Sigmai,\delta)\) a deterministic finite state automaton where $D$ is the
set of states, $\Sigma \cup \Sigmai$ the alphabet and $\delta$ the
transition relation specifying the effect of each internal action on
the global state $D$.
We will write $d
\overset{w}{
\underset{R}
\Rightarrow^*}
d'$ to mean that there is a sequence of transitions with labels
$w_1w_2...w_n=w$ in the automaton $R$ using which we can reach $d'$
from $d$.

For alphabets $\Sigma,\Sigma'$ with $\Sigma \subseteq \Sigma'$, the {\em Parikh map}
$\parikh\colon\Sigma'^* \rightarrow \multiset{\Sigma}$ maps a word
$w\in \Sigma'^*$ to a multiset $\parikh(w)$ such that $\parikh(w)(a)$
is the number of occurrences of $a$ in $w$ for each $a \in \Sigma$.  For example,
$\parikh(abbab)(a)=2$, $\parikh(abbab)(b)=3$ and
\(\parikh(\varepsilon)=\multi{ }\).  For a language $L$, we define
$\parikh(L) = \set{\parikh(w)\mid w\in L}$.  If the alphabet $\Sigma$
is not clear from the context, we write $\parikh_\Sigma$ (usually $\Sigma'=\Sigma$).

The semantics of a $\ap$ is given as a labeled
transition system over
the set of configurations, with a transition relation $\rightarrow
\subseteq (D\times\multiset{\Sigma})\times \Sigma \times
(D\times\multiset{\Sigma})$ defined as follows: let \(\mmap,\mmap'\in\multiset{\Sigma}\), \(d,d'\in D\) and \(\sigma\in\Sigma\)
\begin{gather*}
  (d,\mmap\oplus\multi{\sigma})\underset{\ap}{
  \overset{\sigma}\rightarrow} (d',\mmap\oplus\mmap')\\
\text{ if{}f }\\
\exists w\in (\Sigma \cup \Sigmai)^* \colon d \overset{w}{
\underset{R}\Rightarrow^*}
 d' \land w \in L_{\sigma} \land \mmap'
= \parikh_{\Sigma}(w)\enspace. 
\end{gather*} 

We now show that internal actions can be compiled away.
Thus, for the algorithms presented in this paper, we use the more abstract,
language-theoretic version of Definition~\ref{defn:ap}, while we use
internal actions as syntactic sugar in examples.

\begin{lem}
\label{lem:internal-actions}
Let $\cC$ be a full trio.  Given an $\AP$ $\api$ with internal actions
over $\cC$, one can construct an $\AP$ $\ap$ over $\cC$ such that
both have identical sets of runs.
\end{lem}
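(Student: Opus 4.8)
The plan is to eliminate the internal actions by folding the state-change information carried by the automaton $R$, together with the erasure of the internal symbols, into the context languages of the simplified model. Concretely, for each pair $(d,d')\in D\times D$ let $R_{d,d'}\subseteq(\Sigma\cup\Sigmai)^*$ be the language of all words $w$ with $d \overset{w}{\underset{R}\Rightarrow^*} d'$; this is regular, since it is recognised by $R$ read with start state $d$ and sole final state $d'$. Writing $\proj_\Sigma\colon(\Sigma\cup\Sigmai)^*\to\Sigma^*$ for the projection homomorphism erasing the internal letters, I would set, for every context $c=(d,\sigma,d')\in\fC$,
\[ L_c \;=\; \proj_\Sigma\bigl(L_\sigma\cap R_{d,d'}\bigr), \]
and take $\ap=(D,\Sigma,(L_c)_{c\in\fC},d_0,\mmap_0)$ with the same $D,\Sigma,d_0,\mmap_0$ as $\api$.

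First I would check this is a legal $\AP$ over $\cC$: since $\cC$ is a full trio, it is closed under intersection with the regular language $R_{d,d'}$ and under the homomorphic image $\proj_\Sigma$, so every $L_c$ lies in $\cC$, and effectiveness of $\cC$ makes the construction effective.

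The core step is to prove $\Runs{\ap}=\Runs{\api}$. Both programs range over the same configuration space $D\times\multiset{\Sigma}$ with $\Sigma$-labelled steps, so they have exactly the same set of preruns; since a run is a prerun whose consecutive configurations are linked by a transition, it suffices to show their one-step relations coincide. Unfolding the definition of $L_c$, a word $w'$ lies in $L_{d\sigma d'}$ iff $w'=\proj_\Sigma(w)$ for some $w\in L_\sigma$ with $d \overset{w}{\underset{R}\Rightarrow^*} d'$; and the key identity $\parikh(\proj_\Sigma(w))=\parikh_\Sigma(w)$ holds because projecting onto $\Sigma$ retains exactly the $\Sigma$-letters that the Parikh map counts. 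Hence $(d,\mmap\oplus\multi{\sigma})\xrightarrow{\sigma}(d',\mmap\oplus\mmap')$ holds in $\ap$ iff there is $w\in L_\sigma$ with $d \overset{w}{\underset{R}\Rightarrow^*} d'$ and $\parikh_\Sigma(w)=\mmap'$, which is precisely the transition relation of $\api$. Matching the witnesses $w\leftrightarrow\proj_\Sigma(w)$ on the two sides yields equality of the transition relations, and therefore of the run sets.

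There is no genuinely hard step here; the construction is a direct application of the full-trio closure properties. The only point needing a little care is the bookkeeping of the existential witnesses in the two semantic definitions: one must check that every $w\in L_\sigma\cap R_{d,d'}$ yields a witness $\proj_\Sigma(w)$ with the same Parikh image on the simplified side, and conversely that every witness in $L_c$ arises this way. The identity $\parikh\circ\proj_\Sigma=\parikh_\Sigma$ is exactly what makes this correspondence exact.
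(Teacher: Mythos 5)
Your proposal is correct and is essentially identical to the paper's own proof: the paper likewise defines $L_{d\sigma d'}:=\proj_\Sigma(L_\sigma\cap L(R_{d,d'}))$, invokes the full-trio closure under intersection with regular languages and homomorphic images, and matches the transition relations via the same witness correspondence $w\leftrightarrow\proj_\Sigma(w)$. Your explicit remark that $\parikh\circ\proj_\Sigma=\parikh_\Sigma$ just spells out the step the paper leaves implicit.
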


\begin{proof}
The proof is along similar lines to that of Lemmas 4.3, 4.5 in~\cite{GantyM12}. 
Given \(\api = (D,
\Sigma, \Sigmai,
\process,
R, d_0, \mmap_0)\) we construct $\ap=(D,\Sigma,(L_c)_
{c\in\fC},d_0,\mmap_0)$ such that 
\begin{gather*}
(d,\mmap\oplus\multi{\sigma})\underset{\ap}{
  \overset{\sigma}\rightarrow} (d',\mmap\oplus\mmap')
\quad \text{ if{}f }  \quad
(d,\mmap\oplus\multi{\sigma})\underset{\api}{
  \overset{\sigma}\rightarrow} (d',\mmap\oplus\mmap') 
\end{gather*}
Let $L(R_{d,d'})$ be the language of the automaton $R$ when $d$ is the 
initial state and $d'$ is the accepting state. 
We define $L_{d \sigma d'}$ as:
\[ L_{d \sigma d'} := \proj_{\Sigma}(L_{\sigma} \cap L(R_{d,d'}))\]
First observe that 
the projection operation is a homomorphism and $L(R_{d,d'}))$ is a 
regular language; hence by virtue of $\cC$ being a full trio $L_{d 
\sigma d'}$ as defined above is in $\cC$.  The conditions $\exists 
w\in (\Sigma \cup \Sigmai)^* \colon d \overset{w}{
\underset{R}\Rightarrow^*}
 d' \land w \in L_{\sigma} \land \mmap' 
= \parikh_{\Sigma}(w)$ and $\exists w \in L_{d \sigma d'} \colon 
\parikh(w)=\mmap'$ are seen to be equivalent from the definition of 
$L_{d \sigma d'}$, concluding the 
proof of the lemma. 
\end{proof}

\section{Proof of Proposition~\ref{dcpreservation}}
\label{appendix:dcpreservation}
We prove the following proposition.

  Let $\ap=(D,\Sigma,(L_c)_{c\in\fC},d_0,\mmap_0)$ be an asynchronous
  program. Then $\dcl{\Runs{\dcl{\ap}}}=\dcl{\Runs{\ap}}$. In
  particular,
  \begin{enumerate}
  \item For every $d\in D$, $\ap$ can reach $d$ if and only if $\dcl{\ap}$ can reach $d$.
  \item $\ap$ is terminating if and only if $\dcl{\ap}$ is terminating.
  \item $\ap$ is bounded if and only if $\dcl{\ap}$ is bounded.
  \end{enumerate}

\begin{proof}
A run of the asynchronous program $\ap$ is defined as a sequence
$c_0,\sigma_1,c_1,\sigma_2,\ldots$ containing alternating
elements of configurations $c_i$ and letters $\sigma_i$ beginning
with the configuration $c_0=(d_0,\mmap_0)$. First we observe that
\begin{equation}
\label{eqn:runs_included_in_apdown}
  \Runs{\ap} \subseteq \Runs{\dcl{\ap}}
\end{equation}
 because every transition enabled in $\ap$ is also enabled in $
 \dcl{\ap}$. Next, we claim: 
 \begin{equation}
 \label{eqn:run_pdown_below_p}
  \forall \; \rho \in \Runs{\dcl{\ap}} \; \exists \; \rho' \in \Runs{\ap}
  \; \rho \leqruns \rho' 
 \end{equation}
 Let $\rho|_k=(d_0,\mmap_0),\sigma_1,
  (d_1,\mmap_1),\sigma_2,...,\sigma_k,(d_k,\mmap_k)$ be the 
  $2k+1-$length prefix of $\rho$. 
  We show that for each $\rho|_k$ there exists $\rho'_k \in \Runs{\ap}$
  such that $\rho|_k \leqruns \rho'_k$ and in addition, $\forall k \;
  \forall i 
  \leq k \; \rho'_k(i)=\rho'_{k+1}(i)$. We can then define $\rho'(i) :=
\rho'_i(i)$ and clearly $\rho \leqruns \rho'$. 

  We prove by induction on $k$.\\
  \underline{Base Case:} $\rho|_0=\rho'|_0=(d_0,\mmap_0)$.\\
  \underline{Induction Step:} Let $\rho|_k=(d_0,\mmap_0),\sigma_1,
  (d_1,\mmap_1),\sigma_2,...,(d_k,\mmap_k) \in \Runs{\dcl{\ap}}$. By 
  induction 
  hypothesis there is $\rho'_{k-1}=(d_0,\mmap_0),\sigma_1,
  (d_1,\mmap_1'),\sigma_2,...,(d_{k-1},\mmap_{k-1}') \in \Runs{\ap}$
  such that 
  $\rho_{k-1} \leqruns \rho'_{k-1}$. 

    \begin{gather*} 
     (d_{k-1},\mmap_{k-1}) \overset{\sigma_k}{
      \underset{\dcl{\ap}}\rightarrow} (d_k,\mmap_k)\\
      \Rightarrow \exists \mmap_{k-1}'' \colon \mmap_{k-1}=\mmap_{k-1}'' 
      \oplus 
      \multi{\sigma_k} \land (d_{k-1},\mmap_{k-1}''\oplus 
      \multi{\sigma_k}) 
      \overset{\sigma_k}{
      \underset{\dcl{\ap}}\rightarrow} (d_k,\mmap_k) \\
      \Rightarrow \exists w \in \Sigma^* \colon w \in \dcl{L_{d_{k-1}
      \sigma_k d_k}}
      \land \parikh(w) \oplus \mmap_{k-1}''=\mmap_k \\
      \Rightarrow \exists w' \in \Sigma^* \colon w \subword w' \land w' 
      \in L_{d_
      {k-1} \sigma_k d_k} \\    
      \Rightarrow (d_{k-1},\mmap_{k-1}) \overset{\sigma_k}{
      \underset{\ap}\rightarrow} (d_k,\mmap_k \oplus \mmap_\Delta) 
      \text{where }
      \mmap_\Delta \oplus \parikh(w)=\parikh(w') \\
      \Rightarrow (d_{k-1},\mmap_{k-1}') \overset{\sigma_k}{
      \underset{\ap}\rightarrow} (d_k,\mmap_k') \text{ where }
      \mmap_k'=\mmap_k \oplus \mmap_{\Delta} \oplus(\mmap'_{k-1} \ominus
      \mmap_{k-1})
    \end{gather*}
  Defining $\rho'_k := \rho'_{k-1},\sigma_k, (d_k,\mmap_k')$ we 
  see that $\rho|_k \leqruns \rho'_k$, completing the proof of Equation 
  \ref{eqn:run_pdown_below_p}. We are now ready to show that $
  \dcl{\Runs{
  \dcl{\ap}}}=\dcl{\Runs{\ap}}$. The direction $\dcl{\Runs{\ap}} \subseteq  \dcl{\Runs{\dcl{\ap}}}
  $ follows immediately from Equation 
  (\ref{eqn:runs_included_in_apdown}). Conversely, let 

  \begin{align*}
    & \rho=
  (s_0,\nmap_0),\sigma_1,(s_1,\nmap_1),\sigma_2,... \in \dcl{\Runs{
  \dcl{\ap}}} &
  \\
  \Rightarrow & \exists \rho' \in \Runs{\dcl{\ap}} \; \rho \leqruns 
  \rho'& \\
  \Rightarrow & \exists \rho'' \in \Runs{\ap} \; \rho' \leqruns \rho'' 
  \; &\text{ by Equation (\ref{eqn:run_pdown_below_p})} \\
  \Rightarrow & \rho \in \dcl{\Runs{\ap}} &
  \end{align*}
We have proved that $\dcl{\Runs{\dcl{\ap}}}=\dcl{\Runs{\ap}}$. We now 
show that each of the three properties i.e. safety, termination and 
boundedness only depend on the downclosure of the runs. 

  \noindent\underline{Safety:} \begin{align*}
    &d \text{ is reachable in } \ap\\
    \text{ iff } &\exists \rho=(d_0,\mmap_0),\sigma_1,
  (d_1,\mmap_1),\sigma_2,...,\sigma_k,(d_k,\mmap_k) \in \Runs{\ap}
  \colon d_k=d 
  \end{align*}
  By Equation \ref{eqn:runs_included_in_apdown}, we know $\rho \in 
  \Runs{\ap}$ implies $\rho \in \dcl{\Runs{\ap}}$. Conversely, if there 
  is $\rho'=(s_0,\nmap_0),\sigma_1,
  (s_1,\nmap_1),\sigma_2,...,\sigma_k,(s_k,\nmap_k) \in \dcl{\Runs 
  {\ap}}$ with $s_k=d$,
  then by Equation \ref{eqn:run_pdown_below_p}, there is $\rho=
  (d_0,\mmap_0),\sigma_1,
  (d_1,\mmap_1),\sigma_1,...,\sigma_k,(d_k,\mmap_k) \in \Runs{\ap}$
  with $\rho'\leqruns 
  \rho$ which implies $d_k=d$. Hence we have 
   \begin{align*}
    &d \text{ is reachable in } \ap\\
    \text{ iff }&
    \exists \rho=(s_0,\nmap_0),\sigma_1,
  (s_1,\nmap_1),\sigma_2,...,\sigma_k,(s_k,\nmap_k) \in \dcl{\Runs 
  {\ap}} \colon s_k=d 
  \end{align*}
  By a similar argument as above we also have:\\
  \underline{Termination:}
\begin{align*} 
  & \ap \text{ does not terminate } \\
  \text{ iff } & \exists \rho \in 
  \Runs{\ap}
  \colon \rho \text { is infinite } \\
  \text{ iff } & \exists \rho \in 
  \dcl{\Runs{\ap}}
  \colon \rho \text{ is infinite }
  \end{align*}
  \underline{Boundedness:}

    \begin{align*} 
  & \ap \text{ is bounded } \\
  \text{ iff } & \exists N \in \nats \; \forall \rho \in 
  \Runs{\ap} \; \forall i \; |\rho(2i).m| < N  \\
  \text{ iff } & \exists N \in \nats \; \forall \rho \in 
  \dcl{\Runs{\ap}} \; \forall i \; |\rho(2i).m| < N 
  \end{align*}
  In each of the three cases, the property only depends on the downclosure and hence one may equivalently replace $\ap$ by $\dcl{\ap}$
  since $\dcl{\Runs{\dcl{\ap}}}=\dcl{\Runs{\ap}}$. 
    \end{proof}

\section{Proof of Proposition \ref{prop:pathslangs_to_word_schemes}}
\label{sec:appendix_ho}

We begin with a simple observation.
  For every $\hors$ $\rs=(\Sigma,\nonterm, S, \rewrite)$, there exists another $\hors$ 
$\rs'=(\Sigma',\nonterm', S, \rewrite')$ where $\Sigma'=\{\br,\e\}
\cup \tilde{\Sigma}$ with $\br$ of arity 2, $\e$ of arity 0
and all symbols in $\tilde{\Sigma}$ of arity 1; such that $\langpath
(\rs)=\langpath(\rs')$.

By rewriting every terminal symbol $A$ of arity $n \geq 2$ using the rule  \[A
x_1 x_2\cdots x_n \twoheadrightarrow \br(x_1 \br (x_2 \br (\cdots\br(x_{n-1} x_n))\]
and by replacing the occurrence of every nullary symbol in $\rs$ by $\ltr{e}$, we get $\rs'$, which has the same path language. 

\noindent
  We assume due to the above observation 
  that
  $\Sigma=\{\br,\e\}
\cup \tilde{\Sigma}$ where $\br$ is binary, $\e$ is nullary and all
letters in $\tilde{\Sigma}$ are unary. Define $\rs'=(\Sigma',\nonterm',
S', \rewrite')$ as $\Sigma'=\tilde{\Sigma} \cup \{ \e \},
\nonterm'=\nonterm
\cup \{B:\otype \rightarrow (\otype \rightarrow \otype)\}, \rewrite'=
\{r[\br / B] \mid r \in \rewrite \} \cup \{
Bxy \twoheadrightarrow x, Bxy \twoheadrightarrow y \}$, where by $r[\br / B]$ we
mean the rule $r$ with $\br$ uniformly replaced by $B$. Note that
the only new non-terminal symbol introduced is $B$, which is of
order 1. Hence the obtained word scheme $\rs'$ is of the same order as
$\rs$. \\
\underline{$\langpath(\rs) \subseteq \langword(\rs')$:} Let $w \in
\langpath(\rs)$. Therefore there exists a finite path $\Pi$ in a
sentential form $t$ such that $\forall \bar{n} \in dom(\Pi) \; w_\Pi(\bar{n}) \in \Sigma$. We derive $t':=
t[\br/B]$ using the
corresponding rules in $\rewrite'$. Note that the corresponding path
$\Pi'$ in $t'$ satisfies $\forall \bar{n} \in
\dom(\Pi') w_{\Pi'}(\bar{n}) \in \Sigma \cup \{B \}$. We then apply either $Bxy
\twoheadrightarrow x$ or $Bxy \twoheadrightarrow y$ to each $B$ in $t'$ according to
the path $\Pi'$ to obtain $w$ in the word scheme.\\
\underline{$\langword(\rs') \subseteq \langpath(\rs)$:}
We define the order $\prefix$ on sequences of natural numbers $
\bar{n},\bar{m}$ as $\bar{n} \prefix \bar{m}$ if $\bar{m}=\bar{n}
\bar{k}$ for some sequence $\bar{k}$. 

 Suppose that
for given a sentential form $t'$ of $\rs'$ there exists a sentential
form $t$ of $\rs$ and a map $\alpha:\dom(t') \rightarrow \dom(t)$ 
(simply called \textit{embedding} henceforth)
satisfying the following conditions:
\begin{itemize}
  \item $\forall \bar{n} \in \dom(t')$
  \begin{equation} \label{eqn:map_first_cond}
   t'(\bar{n}) = \begin{cases}
     B \text{ if } t(\alpha(\bar{n}))= \br \\
 t(\alpha(\bar{n})) \text{ otherwise}.
  \end{cases}\end{equation}

  \item $\forall \bar{n}, \bar{m} \in \dom(t'), \bar{n} \prefix \bar{m}
  $ implies
   \begin{equation}
   \label{eqn:map_second_cond}
        \alpha(\bar{n}) \prefix \alpha(\bar{m})\text{, and}
      \end{equation}    

      \begin{equation}
\label{eqn:map_third_cond}
\begin{aligned}
  & (\forall \bar{l} \; (\bar{n} \sprefix \bar{l} \sprefix \bar{m}) 
  \text{ implies } t'
  (\bar{l}) \notin \tilde{\Sigma}) \text{ implies } \\ 
  &(\forall
  \bar{k} \; \alpha(\bar{n}) \sprefix \bar{k} \sprefix \alpha(\bar{m})
  \text{ implies } t(\bar{k}) \notin \tilde{\Sigma} )
\end{aligned}
      \end{equation}
\end{itemize}
Informally, Equations 
\ref{eqn:map_first_cond}, \ref{eqn:map_second_cond} and
\ref{eqn:map_third_cond} state
the following: $\alpha$ preserves labels, except for the case of $\br$
where it maps to $B$, $\alpha$ preserves the order $\prefix$ and the
images of any two nodes with node labels from $\tilde{\Sigma}$ with no
node label from $\tilde{\Sigma}$ in between are mapped to two such
nodes with the same
property.

We will show by induction on the length of the derivation that such a
pair $(t,\alpha)$ always exists given some $t'$.
Let us see how the existence of such $t$ and $\alpha$ gives us the
proposition.
Consider a word $w=w_1w_2...w_n \in \langword(\rs')$. In other words,
there is a term $t'=w_1(w_2...(w_n(\e))...)$ such that $S' 
\underset{\rs'}\twoheadrightarrow^* t'$. Corresponding to this, we have a
sentential form $S
\underset{\rs}\twoheadrightarrow^* t$ and $\alpha$ satisfying the given
conditions. In particular, there exists a path $\Pi$ with $\dom(\Pi)
\subseteq dom(t)$ which is the path in $t$ connecting the $\alpha
(\varepsilon)$ to $\alpha(0^n)$. It is immediate that $\proj_
{\tilde{\Sigma}}
(\Pi)=w$. It remains to show the existence of $t$ and $\alpha$ by
induction on the length of derivations.\\

\begin{figure}
\begin{tikzpicture}
  \def \twidth {2cm} %
  \def \theight {3cm} %
  \def \ruleshift {6cm} %
  \def \embedshift {-8cm} %

  \draw (0,0) -- (\twidth,0) -- (\twidth/2,\theight) -- (0,0);
  \node at (-0.5,\theight) {$t'_0$};

  \node at (\twidth/2,0.3) {\small $\bar{n}$};
  \node at (\twidth/2,\theight/2) {$u_0'$};

  \coordinate (l) at ($(\twidth/2,-1)+(-1,0)$);
   \coordinate (r) at ($(\twidth/2,-1)+(1,0)$);
  \draw (r) -- (\twidth/2,0) -- (l);
  \draw (l) -- ($(l)+(-\twidth/4,-\theight/2)$) -- ($(l)+
  (\twidth/4,-\theight/2)$) --(l);
  \node at ($(l)-(0,\theight/4)$) {$l_0'$};

\draw (r) -- ($(r)+(-\twidth/4,-\theight/2)$) -- ($(r)+
  (\twidth/4,-\theight/2)$) --(r);
  \node at ($(r)-(0,\theight/4)$) {$r_0'$};

\coordinate (ab) at (\twidth+1,\theight/2);
\coordinate (ae) at ($(ab)+(4,0)$);
\draw[->] (ab) --(ae);
 \node at ($0.5*(ab)+0.5*(ae)+(0,0.3)$) {$Bxy \twoheadrightarrow x$};
 \node at ($0.5*(ab)+0.5*(ae)+(0,-0.3)$) {$\rs'$};

\begin{scope}[xshift=\ruleshift]
  \draw (0,0) -- (\twidth,0) -- (\twidth/2,\theight) -- (0,0);
  \node at (-0.5,\theight) {$t'$};

  \node at (\twidth/2,0.3) {\small $\bar{n}$};
  \node at (\twidth/2,\theight/2) {$u_0'$};

  \coordinate (l2) at ($(\twidth/2,0)$);
  \draw (l2) -- ($(l2)+(-\twidth/4,-\theight/2)$) -- ($(l2)+
  (\twidth/4,-\theight/2)$) --(l2);
  \node at ($(l2)-(0,\theight/4)$) {$l_0'$};
\end{scope}

\begin{scope}[yshift=\embedshift]
  \draw (0,0) -- (\twidth,0) -- (\twidth/2,\theight) -- (0,0);
  \node at (-0.5,\theight) {$t_0$};

  \node at (\twidth/2,0.3) {\small $\alpha_0(\bar{n})$};
  \node at (\twidth/2,\theight/2) {$u_0$};

  \coordinate (l) at ($(\twidth/2,-1)+(-1,0)$);
   \coordinate (r) at ($(\twidth/2,-1)+(1,0)$);
  \draw (r) -- (\twidth/2,0) -- (l);
  \draw (l) -- ($(l)+(-\twidth/4,-\theight/2)$) -- ($(l)+
  (\twidth/4,-\theight/2)$) --(l);
  \node at ($(l)-(0,\theight/4)$) {$l_0$};

\draw (r) -- ($(r)+(-\twidth/4,-\theight/2)$) -- ($(r)+
  (\twidth/4,-\theight/2)$) --(r);
  \node at ($(r)-(0,\theight/4)$) {$r_0$};
\end{scope}

\coordinate (a1) at (\twidth/2,-3);
\coordinate (a2) at (\twidth/2,-4);
\draw [right hook-latex] (a1) -- (a2); 
\node at ($0.5*(a1)+0.5*(a2)+(0.3,0)$) {$\alpha_0$};

\begin{scope}[yshift=\embedshift, xshift=\ruleshift]
  \draw (0,0) -- (\twidth,0) -- (\twidth/2,\theight) -- (0,0);
  \node at (-0.5,\theight) {$t_0$};

  \node at (\twidth/2,0.3) {};
  \node at (\twidth/2,\theight/2) {$u_0$};

  \coordinate (l) at ($(\twidth/2,-1)+(-1,0)$);
   \coordinate (r) at ($(\twidth/2,-1)+(1,0)$);

   \node at ($(l)+(-0.4,0)$) {\small $\alpha(\bar{n})$};

  \draw (r) -- (\twidth/2,0) -- (l);
  \draw (l) -- ($(l)+(-\twidth/4,-\theight/2)$) -- ($(l)+
  (\twidth/4,-\theight/2)$) --(l);
  \node at ($(l)-(0,\theight/4)$) {$l_0$};

\draw (r) -- ($(r)+(-\twidth/4,-\theight/2)$) -- ($(r)+
  (\twidth/4,-\theight/2)$) --(r);
  \node at ($(r)-(0,\theight/4)$) {$r_0$};
\end{scope}

\begin{scope}[xshift=\ruleshift]
\coordinate (a3) at (\twidth/2,-3);
\coordinate (a4) at (\twidth/2,-4);
\draw [right hook-latex] (a3) -- (a4); 
\node at ($0.5*(a3)+0.5*(a4)+(0.3,0)$) {$\alpha$};
\end{scope}
\end{tikzpicture}
\caption{Construction of embedding $\alpha$ from $\alpha_0$ in the
case of the rule $Bxy \twoheadrightarrow x$}
\label{fig:alpha_Brule}
\end{figure}

\begin{figure}
\begin{tikzpicture}
  \def \twidth {2cm} %
  \def \theight {3cm} %
  \def \ruleshift {6cm} %
  \def \embedshift {-8cm} %

  \draw (0,0) -- (\twidth,0) -- (\twidth/2,\theight) -- (0,0);
  \node at (-0.5,\theight) {$t'_0$};

  \node at (\twidth/2,0.3) {\small $\bar{n}$};
  \node at (\twidth/2,\theight/2) {$u_0'$};

  \coordinate (l) at ($(\twidth/2,-1)+(-1,0)$);
   \coordinate (r) at ($(\twidth/2,-1)+(1,0)$);
  \draw (r) -- (\twidth/2,0) -- (l);
  \draw (l) -- ($(l)+(-\twidth/4,-\theight/2)$) -- ($(l)+
  (\twidth/4,-\theight/2)$) --(l);
  \node at ($(l)-(0,\theight/4)$) {$l_0'$};

\draw (r) -- ($(r)+(-\twidth/4,-\theight/2)$) -- ($(r)+
  (\twidth/4,-\theight/2)$) --(r);
  \node at ($(r)-(0,\theight/4)$) {$r_0'$};

\coordinate (ab) at (\twidth+1,\theight/2);
\coordinate (ae) at ($(ab)+(4,0)$);
\draw[->] (ab) --(ae);
 \node at ($0.5*(ab)+0.5*(ae)+(0,0.3)$) {$Nxy \twoheadrightarrow t_1$};
 \node at ($0.5*(ab)+0.5*(ae)+(0,-0.3)$) {$\rs'$};

\begin{scope}[xshift=\ruleshift]
  \draw (0,0) -- (\twidth,0) -- (\twidth/2,\theight) -- (0,0);
  \node at (-0.5,\theight) {$t'$};

  \node at (\twidth/2,0.3) {\small $\bar{n}$};
  \node at (\twidth/2,\theight/2) {$u_0'$};

  \coordinate (l2) at ($(\twidth/2,0)$);
  \draw (l2) -- ($(l2)+(-\twidth/2,-\theight/4)$) -- ($(l2)+
  (\twidth/2,-\theight/4)$) --(l2);

  \coordinate (l3) at ($(l2)+(-\twidth/3,-\theight/4)$);
  \node at ($(l3)+(0.2,0.2)$) {$\bar{m}$};
  \coordinate (r3) at ($(l2)+(\twidth/3,-\theight/4)$);
  \node at ($(r3)+(-0.2,0.2)$) {$\bar{k}$};
  \draw (l3) -- ($(l3)+(-\twidth/4,-\theight/2)$) -- ($(l3)+
  (\twidth/4,-\theight/2)$) --(l3);
  \node at ($(l3)-(0,\theight/4)$) {$l_0'$};

  \draw (r3) -- ($(r3)+(-\twidth/4,-\theight/2)$) -- ($(r3)+
  (\twidth/4,-\theight/2)$) --(r3);
  \node at ($(r3)-(0,\theight/4)$) {$r_0'$};

  \draw [decorate,decoration=
  {brace,amplitude=10pt},xshift=-4pt,yshift=0pt] ($(\twidth/2,0)+
  (1.3,0)$)
  -- ($(\twidth/2,-2.5)+(1.3,0)$) node [black,midway,xshift=1.3cm] 
  {\footnotesize
  $t_1[x/l'_0,y/r'_0]$};
\end{scope}

\begin{scope}[yshift=\embedshift]
  \draw (0,0) -- (\twidth,0) -- (\twidth/2,\theight) -- (0,0);
  \node at (-0.5,\theight) {$t_0$};

  \node at (\twidth/2,0.3) {\small $\alpha_0(\bar{n})$};
  \node at (\twidth/2,\theight/2) {$u_0$};

  \coordinate (l) at ($(\twidth/2,-1)+(-1,0)$);
   \coordinate (r) at ($(\twidth/2,-1)+(1,0)$);
  \draw (r) -- (\twidth/2,0) -- (l);
  \draw (l) -- ($(l)+(-\twidth/4,-\theight/2)$) -- ($(l)+
  (\twidth/4,-\theight/2)$) --(l);
  \node at ($(l)-(0,\theight/4)$) {$l_0$};

\draw (r) -- ($(r)+(-\twidth/4,-\theight/2)$) -- ($(r)+
  (\twidth/4,-\theight/2)$) --(r);
  \node at ($(r)-(0,\theight/4)$) {$r_0$};
\end{scope}

\coordinate (a1) at (\twidth/2,-3);
\coordinate (a2) at (\twidth/2,-4);
\draw [right hook-latex] (a1) -- (a2); 
\node at ($0.5*(a1)+0.5*(a2)+(0.3,0)$) {$\alpha_0$};

\begin{scope}[yshift=\embedshift]
\coordinate (ab) at (\twidth+1,\theight/2);
\coordinate (ae) at ($(ab)+(4,0)$);
\draw[->] (ab) --(ae);
 \node at ($0.5*(ab)+0.5*(ae)+(0,0.3)$) {$Nxy \twoheadrightarrow t_1$};
  \node at ($0.5*(ab)+0.5*(ae)+(0,-0.3)$) {$\rs$};

\end{scope}

\begin{scope}[yshift=\embedshift, xshift=\ruleshift]
  \draw (0,0) -- (\twidth,0) -- (\twidth/2,\theight) -- (0,0);
  \node at (-0.5,\theight) {$t$};

  \node at (\twidth/2,0.3) {\small $\alpha(\bar{n})$};
  \node at (\twidth/2,\theight/2) {$u_0$};

  \coordinate (l2) at ($(\twidth/2,0)$);
  \draw (l2) -- ($(l2)+(-\twidth/2,-\theight/4)$) -- ($(l2)+
  (\twidth/2,-\theight/4)$) --(l2);

  \coordinate (l3) at ($(l2)+(-\twidth/3,-\theight/4)$);
    \node at ($(l3)+(0.3,0.2)$) {\scriptsize $\alpha(\bar{m})$};
  \coordinate (r3) at ($(l2)+(\twidth/3,-\theight/4)$);
    \node at ($(r3)+(-0.3,0.2)$) {\scriptsize $\alpha(\bar{k})$};
  \draw (l3) -- ($(l3)+(-\twidth/4,-\theight/2)$) -- ($(l3)+
  (\twidth/4,-\theight/2)$) --(l3);
  \node at ($(l3)-(0,\theight/4)$) {$l_0$};

  \draw (r3) -- ($(r3)+(-\twidth/4,-\theight/2)$) -- ($(r3)+
  (\twidth/4,-\theight/2)$) --(r3);
  \node at ($(r3)-(0,\theight/4)$) {$r_0$};

  \draw [decorate,decoration=
  {brace,amplitude=10pt},xshift=-4pt,yshift=0pt] ($(\twidth/2,0)+
  (1.3,0)$)
  -- ($(\twidth/2,-2.5)+(1.3,0)$) node [black,midway,xshift=1.3cm] 
  {\footnotesize
  $t_1[x/l_0,y/r_0]$};
\end{scope}

\begin{scope}[xshift=\ruleshift]
\coordinate (a3) at (\twidth/2,-3);
\coordinate (a4) at (\twidth/2,-4);
\draw [right hook-latex] (a3) -- (a4); 
\node at ($0.5*(a3)+0.5*(a4)+(0.3,0)$) {$\alpha$};
\end{scope}
\end{tikzpicture}
\caption{Construction of embedding $\alpha$ from $\alpha_0$ when
the rule is $Nxy \twoheadrightarrow t_1$}
\label{fig:alpha_Nrule}
\end{figure}

\noindent\underline{Base case:} This is trivial since $t=t'=S$.\\
\underline{Induction step:} Suppose $t_0' \underset{r
\in \rewrite'}\twoheadrightarrow
t'$ where $t_0'$ is a sentential form. By induction hypothesis, there
is a sentential form $t_0$ of $\rs$ and $t_0'$ embeds into $t_0$ via
map $\alpha_0$. Assume that the rule $r$ is applied at position $
\bar{n}$ on $t_0'$. We now have two cases to consider:\\
\underline{Case 1:} The rule $r$ is $Bxy \underset{\rs'}\twoheadrightarrow x$
(the case $Bxy \underset{\rs'}\twoheadrightarrow y$ being symmetric). By
induction hypothesis, we have $t_0$ and and an embedding $\alpha_0$ of
$t_0'$ into $t_0$. Refering to Figure \ref{fig:alpha_Brule}, we see
that $t$ can be taken to be $t_0$ and $\alpha$ maps all nodes in the
subtree $u'_0$ into $u_0$ as before, while the subtree $l'_0$ rooted
at $\bar{n}$ is mapped into $l_0$. It is immediate that $\alpha$
preserves the order and since by induction hypothesis $\alpha_0(\bar
{n})$ has label $\br$, Equation \ref{eqn:map_third_cond} is also
satisfied by $\alpha$ since no new $\tilde{\Sigma}$ labelled nodes
have been added. \newline
\underline{Case 2:} We demonstrate for the case when a non-terminal
of arity two is replaced for an easier reading of the proof: the rule
$r$ is $N x y
\underset{\rs'}\twoheadrightarrow t_1$ for some nonterminal $N \neq B$.
Refering to Figure \ref{fig:alpha_Nrule}, the rule replaces the
subtree rooted at $\bar{n}$ in $t'_0$ with $t_1[x/l'_0,y/r'_0]$ where
$l'_0,r'_0$ are respectively the left and right subtrees of the
subtree rooted at $\bar{n}$ in $u'_0$. For nodes in the subtree $u'_0$
of $t'$, $\alpha$ mimics $\alpha_0$. By induction we know that the
subtrees 
$l'_0,r'_0$ of $t'_0$ embed respectively into $l_0,r_0$ of $t_0$. Thus
$\alpha$ maps every subtree $l'_0$ (resp. $r'_0$) rooted at $ \bar{m}$
(resp. $\bar{k}$) in $t_1[x/l'_0,y/r'_0]$ into the corresponding
subtree $l_0$ (resp. $r_0$) rooted at $\alpha(\bar{m})$ (resp. $\alpha
(\bar{k})$). Note that $\alpha(\bar{m})$ can be defined as the root of the $i$-th occurrence of $l_0$ from left to right if $\bar{m}$ is the $i$-th occurrence of $l'_0$ from left to right. Nodes in $t_1[x/l'_0,y/r'_0]$ which are not in any of the
$l_0'$ (or $r_0'$) subtrees (i.e.) between $\bar{n}$ and $\bar{m}$ (or
$\bar{k})$) have corresponding nodes in  $t_1[x/l_0,y/r_0]$ to which
they can be mapped.  Label preservation and
order
preservation immediately follow by appeal to the induction hypothesis.
In order to see that Equation \ref{eqn:map_third_cond} holds, consider
consecutive $\tilde{\Sigma}$ nodes $\bar{n}', \bar{m}'$ in $t'$ (i.e.
$\bar{n}' <_{\pre} \bar{m}'$ and there are no $\tilde{\Sigma}$ labels
in between). If both $\bar{n}'$ and $\bar{m}'$ are in $u_0'$ or in one
of the $l'_0$ (or $r_0'$) then the induction hypothesis applies. In the
case where $\bar{n}' \in u_0'$ and $\bar{m}' \in l_0'$ (resp. $r_0'$,)
this means
that no $\tilde{\Sigma}$ labels are present in the path from $\bar{n}$
to $\bar{m}$ (resp. $\bar{k}$), $\bar{n}'$ to $\bar{n}$ nor $\bar{m}$ 
(resp. $\bar{k}$) to $\bar{m}'$. The path from $\alpha(\bar{n})$ to
$\alpha(\bar{m})$ is identical to that from $\bar{n}$ to $\bar{m}$ (resp. $\bar{k}$).
The induction hypothesis also implies that the first $\ltr{a}$
label for some $\ltr{s} \in  \tilde{\Sigma}$ in $l_0'$ must be mapped
to first $\ltr{a}$ label in $l_0$ (or $u_0'$ does not contain any $
\tilde{\Sigma}$ labels). Hence there are no $\tilde{\Sigma}$ labels
between $\alpha(\bar{m})$ (resp. $\alpha(\bar{k})$) and $\alpha(
\bar{m}')$ and similarly between
$\alpha(\bar{n}')$ and $\alpha(\bar{n})$. The final case is when
either $\bar{n}'$ or $\bar{m}'$ lies between $\alpha(\bar{n})$ and
$\alpha(\bar{m})$ (resp. $\alpha(\bar{k})$). There are subcases here to
consider when the other point lies in $u_0'$, $l_0$ or also between
$\bar{n}$ and $\bar{m}$. In all of these subcases, it easily follows
that
there are no extra $\tilde{\Sigma}$ labels introduced in between two
consecutive nodes which are in the image of $\alpha$.

\end{document}